\documentclass[dvips,preprint,authoryear,12pt]{imsart}
\usepackage{threeparttable}

\RequirePackage{natbib,amssymb,verbatim,undertilde}
\RequirePackage[OT1]{fontenc}
\RequirePackage{amsthm,amsmath,natbib,mathrsfs}
\RequirePackage[colorlinks,citecolor=blue,urlcolor=blue]{hyperref}
\RequirePackage{hypernat}
\usepackage{graphicx}
\setlength{\emergencystretch}{-1em}
\addtolength{\textwidth}{3cm} \addtolength{\hoffset}{-1.5cm}


\startlocaldefs
\theoremstyle{plain}

\newtheorem{thm}{Theorem}

\newtheorem{cor}{Corollary}
\newtheorem{lemma}{Lemma}
\newtheorem*{lemma*}{Lemma}

\newtheorem{prop}{Proposition}
\theoremstyle{remark}
\newtheorem{rk}{Remark}

\newtheorem{ex}{Example}
\endlocaldefs

\def\MCP{\mathrm{MC+}}
\def\lasso{\mathrm{lasso}}
\def\Cov{\mathrm{Cov}}

\def\a{\alpha}

\def\m{\mathbf{m}}

\def\z{\mathbf{z}}

\def\col{\mathrm{col}}

\def\tr{\mathrm{tr}}

\def\1{\mathbf{1}}

\def\l{\lambda}
\def\bth{\boldsymbol{\theta}}

\def\e{\epsilon}
\def\ee{\boldsymbol{\epsilon}}
\def\b{\beta}
\def\T{\mathbf{T}}
\def\bb{\boldsymbol{\beta}}
\def\d{\delta}

\def\Var{\mathrm{Var}}

\def\g{\gamma}

\def\x{\mathbf{x}}

\def\s{\sigma}
\def\t{\mathbf{t}}
\def\u{\mathbf{u}}
\def\y{\mathbf{y}}

\def\1{\mathbf{1}}

\def\t{\mathbf{t}}
\def\R{\mathbb{R}}

\def\X{\mathbf{X}}
\def\S{\mathit{\Sigma}}

\def\vv{\mathbf{v}}
\def\iidsim{\stackrel{\scriptscriptstyle\mathrm{iid}}{\sim}}
\begin{document}

\bibliographystyle{ims}
\begin{frontmatter}
\title{Residual variance and the signal-to-noise ratio in
  high-dimensional linear models  \protect}

\runtitle{Residual variance and the signal-to-noise ratio}

\begin{aug}
\author{\fnms{Lee H.} \snm{Dicker}\thanksref{t1}
\ead[label=e1]{ldicker@stat.rutgers.edu}}

\thankstext{t1}{Supported by NSF Grant DMS-1208785}

\runauthor{L.H. Dicker}

\affiliation{Rutgers University}

\address{Department of Statistics and Biostatistics \\ Rutgers University \\ 501 Hill Center, 
 110 Frelinghuysen Road \\ Piscataway, NJ 08854 \\
\printead{e1}}
\end{aug}

\begin{keyword}[class=AMS]
\kwd[Primary ]{62J05}
\kwd[; secondary ]{62F12}
\kwd{15B52}
\end{keyword}

\begin{keyword}
\kwd{Asymptotic normality}  
\kwd{high-dimensional data analysis}
\kwd{Poincar\'{e} inequality}
\kwd{random matrices}
\kwd{residual variance}
\kwd{signal-to-noise ratio}
\end{keyword}

\begin{abstract}
Residual variance and the signal-to-noise ratio are important
quantities in many statistical models and model fitting procedures.
They play an important role in regression diagnostics, in determining
the performance limits in estimation and prediction problems, and in
shrinkage parameter selection in many popular regularized regression
methods for high-dimensional data analysis.  We propose new estimators for
the residual variance, the $\ell^2$-signal strength, and the
signal-to-noise ratio that are consistent and asymptotically normal in
high-dimensional linear models with Gaussian predictors and errors,
where the number of predictors $d$ is proportional to the number of
observations $n$.  Existing results on residual variance estimation in
high-dimensional linear models depend on sparsity in the underlying
signal.  Our results require no sparsity assumptions and imply that
the residual variance may be consistently estimated even when $d > n$
and the underlying signal itself is non-estimable.  Basic numerical
work suggests that some of the distributional assumptions made for our
theoretical results may be relaxed. 
\end{abstract}

\end{frontmatter}
{\allowdisplaybreaks
\section{Introduction} 

Consider the linear model 
\begin{equation}\label{lm}
y_i = \x_i^T\bb + \e_i, \ \ i = 1,...,n,
\end{equation}
where $y_1,...,y_n \in \R$ and $\x_1 =
(x_{11},...,x_{1d})^T,...,\x_n = (x_{n1},...,x_{nd})^T\in \R^d$ are
observed outcomes and $d$-dimensional predictors, respectively,
$\e_1,...,\e_n \in \R$ are unobserved iid errors with $E(\e_i) = 0$
and $\Var(\e_i) = \s^2 > 0$, and $\bb = (\b_1,...,\b_d)^T \in \R^d$ is
an unknown $d$-dimensional parameter. To simplify notation, let $\y =
(y_1,...,y_n)^T \in \R^n$ denote the $n$-dimensional vector of
outcomes and $X = (\x_1,...,\x_n)^T$ denote the $n \times d$ matrix of
predictors.  Also let $\ee = (\e_1,...,\e_n)^T$.  Then (\ref{lm}) may
be re-expressed as
\[
\y = X\bb + \ee.  
\]
In this paper, we focus on the case where the predictors $\x_i$ are
random.  More specifically, we assume that $\x_1,...,\x_n$ are iid
random vectors with mean 0 and $d \times d$ positive definite covariance matrix
$\Sigma$ (many of the results in this paper are applicable if $E(\x_i)
\neq 0$ upon centering the data; however, this is not pursued further
here).  

Let $\tau^2 = \bb^T\S\bb = ||\S^{1/2}\bb||^2$, where $||\cdot||$
denotes the $\ell^2$-norm.  Then $\tau^2$ is a measure of the overall ($\ell^2$-)
signal strength.  The residual variance $\s^2 = \Var(\e_i) =
\Var\{E(y_i|\x_i)\}$ and the signal strength $\tau^2$ are important quantities in
many problems in statistics.  For example,   in estimation and
prediction problems, $\s^2$ typically determines
the scale of an estimator's risk under quadratic loss.   More broadly,
$\s^2$,
$\tau^2$, and associated quantities, such as the
signal-to-noise ratio $\tau^2/\s^2$, all play a key role in regression
diagnostics.  Thus, reliable estimators
of $\s^2$ and $\tau^2$ are desirable.  

For invertible $X^TX$, let $\hat{\bb}_{ols} =
(X^TX)^{-1}X^T\y$ be the ordinary least squares estimator for $\bb$.  If $n - d \to \infty$, then
\begin{equation}\label{s0}
\hat{\s}^2_0 = \frac{1}{n-d}||\y - X\hat{\bb}_{ols}||^2 =
\frac{1}{n-d}||\y||^2 - \frac{1}{n-d}\y^TX(X^TX)^{-1}X^T\y
\end{equation}
is a consistent estimator for $\s^2$ and, under fairly mild additional
conditions, is asymptotically normal.  Consistent estimators for
$\tau^2$ can also be constructed.  For instance, if $n - d \to
\infty$, it is
easily seen that 
\begin{equation}\label{tau0}
\hat{\tau}_0^2 = \frac{1}{n}||\y||^2 - \hat{\s}^2 =
-\frac{d}{n(n-d)}||\y||^2 + \frac{1}{n-d}\y^TX(X^TX)^{-1}X^T\y
\end{equation}
is a consistent estimator for $\tau^2$ under mild conditions.

It is more challenging to construct reliable estimators for $\s^2$ and
$\tau^2$ in high-dimensional linear models, where $d \geq n$.  Indeed,
if $d \geq n$,
then the estimator $\hat{\s}_0^2$ breaks
down; however, estimating $\s^2$ and $\tau^2$ remains important.  In
high-dimensional linear models with $d \geq n$, $\s^2$ plays
an important role in selecting
effective shrinkage parameters for many popular regularized regression
methods \citep{candes2007dantzig,
  bickel2009simultaneous,zhang2010nearly}.  The
signal-to-noise ratio $\tau^2/\s^2$ is also important for shrinkage
parameter selection, and it determines performance limits
in certain high-dimensional regression problems
\citep{dicker2012dense, dicker2012optimal}.  

In this paper, we propose new estimators for $\s^2$ and $\tau^2$ that are
consistent and asymptotically normal, with rate $n^{-1/2}$, in an asymptotic regime where
$d/n \to \rho \in [0,\infty)$ (whenever we write $d/n \to \rho$, it is
implicit that $n \to \infty$ as well).   We also show that these
estimators may be used to derive consistent and asymptotically normal
estimators for function sof $\s^2$ and $\tau^2$, like the
signal-to-noise ratio.  Previous work on estimating $\s^2$
in high-dimensional linear models where $d \geq n$ has been conducted by  \cite{sun2011scaled} and
\cite{fan2012variance}.  These authors assume that $\bb$ is sparse
(e.g. the $\ell^1$-norm or $\ell^0$-norm of $\bb$ is small) and their
results for estimating $\s^2$ are related to the fact that
$\bb$ itself is estimable under the specified sparsity assumptions.
Though Sun and Zhang's (2011) and Fan et al.'s (2012) results even apply in settings
where $d/n \to \infty$, their sparsity assumptions may be untenable in
certain instances and this can dramatically affect the performance of
their estimators.  In this paper, we  make no sparsity assumptions (however, $\s^2$ and $\tau^2$ are
required to be bounded) and we show that the proposed estimators for
$\s^2$ and $\tau^2$ perform well in situations where $d \geq n$ and $\bb$ is provably
non-estimable.  This is one of the main messages of the paper:  Though
some type of sparsity is required to consistently estimate $\bb$ in
high-dimensional linear models, sparsity in $\bb$ is {\em not}
required to  estimate $\s^2$  and $\tau^2$. 

\subsection{Distributional assumptions}

Though sparsity is not required in this paper, we do make
strong distributional assumptions about the data.  In particular, we
henceforth assume that
\begin{equation}\label{normality}
\e_1,...,\e_n \iidsim N(0,\s^2) \ \mbox{ and } \ \x_1,...,\x_n \iidsim
N(0,\S).
\end{equation}
While normality is used heavily throughout our analysis, we expect that key aspects of many of the results in this paper remain valid under weaker
distributional assumptions.  This is explored via simulation in
Section 4.     

Not surprisingly, the analysis in this paper is simplified
by the normality assumption (\ref{normality}).  To explain the
relevance of (\ref{normality}) in more detail, we first point out that
our primary consistency
results for the proposed
estimators of $\s^2$ and $\tau^2$ (Theorem 1
below) follow from exact calculations of
the estimators' mean and variance.   If
the normality assumption (\ref{normality}) is violated, then these
calculations are generally invalid; similar techniques may be
applicable, if other conditions hold, but exact finite sample
calculations are not likely to be possible and any corresponding
approximation may be more involved.    

The normality
assumption (\ref{normality}) also facilitates the use of a collection of
``soft-tools'' for random matrices developed by \cite{chatterjee2009fluctuations} 
to prove that the estimators proposed in this paper are asymptotically normal.
These tools are related to second order Poincar\'e
inequalities and Stein's method \citep{stein1986approximate}.
Asymptotic normality for the proposed estimators follows by bounding the total variation
distance to a normal random variable.  These bounds contain
information about how the variability of the proposed estimators may
depend $d$, $n$, $\S$, $\s^2$, and $\tau^2$.   This is easily
leveraged to obtain consistent and asymptotically normal estimators
for functions of $\s^2$ and $\tau^2$ (such as the
signal-to-noise ratio, $\tau^2/\s^2$; see Corollary 2 below), which is an important practical
objective.  Thus, one of the appealing aspects of the ``soft tools''
used in this paper is their flexibility.   On the other hand, paraphrasing
\cite{chatterjee2009fluctuations}, other existing methods for
asymptotic analysis in random matrix theory rely heavily on the
exact calculation of limits \citep{jonsson1982some, bai2004clt};
we suggest that this may be a more delicate endeavor in some
instances.   If the normality assumption (\ref{normality})
does not hold, then it is unclear if the soft tools used in this paper
are still applicable and, consequently, other techniques may be required.  Existing work in random
matrix theory suggests that this may be possible  (see, for example,
\citep{bai2007asymptotics, pan2008central, elkaroui2011geometric}); however, the computations are likely
more involved and the breadth of applicability of alternative techniques seems unclear.

\subsection{Correlation among predictors} 

Another challenging issue for estimating $\s^2$ and $\tau^2$ when $d >
n$ involves the covariance matrix $\Cov(\x_i) = \S$.  Our initial estimators for
$\s^2$ and $\tau^2$ are devised under the assumption that $\S$ is
known (equivalently, $\S = I$; see Section 2).  These
estimators are unbiased, consistent, and asymptotically normal.  We subsequently
propose modified estimators for $\s^2$ and $\tau^2$ in cases where
$\S$ is unknown, but (i) a norm-consistent estimator for $\S$ is
available, or (ii)  $\S$ and $\bb$ satisfy certain conditions described in Section 3.2.  If a norm-consistent estimator for $\S$ is available, then the
  proposed estimators for $\s^2$ and $\tau^2$ are consistent;  if, furthermore,
  $\S$ is estimated at rate $o(n^{-1/2})$, then the estimators are asymptotically normal.
 On the other hand, if $d/n \to \rho \in (0,\infty)$, then norm-consistent
 estimators for $\S$ are not generally available (though there are
 important examples where norm-consistent estimators for $\S$ can be
 found -- this is discussed in more detail in Section 3.1).  Thus, it is important to construct
 estimators for $\s^2$ and $\tau^2$ that perform reliably when $\S$ is completely
 unknown.  While it remains an open problem to find
 estimators for $\s^2$ and $\tau^2$ that are consistent for completely general $\S$, in
 Section 3.2 we propose estimators that are consistent and
 asymptotically normal, provided $\S$ and $\bb$ satisfy conditions
 that are closely related to other conditions that have appeared in
 the random matrix theory literature \citep{bai2007asymptotics,
   pan2008central}.  These conditions basically require that $\bb$ and
 $\S$ are {\em asymptotically free} in the sense of free probability
 (see, for example, \citep{speicher2003free} for a brief overview of free
 probability and random matrix theory). 

\subsection{Additional remarks}

The problems considered in this paper have at least a passing
resemblance to the Neyman-Scott problem \citep{neyman1948consistent,
  lancaster2000incidental}.  In a simplified version of this problem, observations $w_{ij} \sim N(\mu_i,\nu^2)$, $i =
1,...,n$, $j = 1,2$ are available, and the goal is to estimate
$\s^2$.  The means $\mu_i$ are nuisance parameters and,
without additional specification,
none of the $\mu_i$ are estimable, as $n \to \infty$.  Furthermore, the profile maximum
likelihood estimator for $\nu^2$, which is given by 
\[
\hat{\nu}^2_{MLE} = \frac{1}{4n} \sum_{i = 1}^n (w_{i1} - w_{i2})^2,
\]
is inconsistent; indeed, $\lim_{n \to \infty} \hat{\nu}^2_{MLE} = \nu^2/2$.  On the other
hand, the simple method of moments estimator $\hat{\nu}_{MOM}^2 =
2\hat{\nu}^2_{MLE}$ is consistent for $\nu^2$ and asymptotically normal.  

In linear models (\ref{lm}) with $d \geq n$, which are the main focus
of this paper, the parameter $\bb$ is typically non-estimable.
However, we show below that $\s^2$ may still be consistently estimated
in a variety of circumstances.  Moreover, as in the Neyman-Scott
problem, it is unclear how to proceed with likelihood inference.
Indeed, the MLE
\[
\hat{\s}^2_{MLE} = \left\{\begin{array}{cl}\frac{1}{n}||\y -
    X\hat{\bb}_{ols}||^2 & \mbox{if } d < n \\
0 & \mbox{if } d \geq n \end{array}\right.
\]
is degenerate when $d \geq n$ and it can even be troublesome when $d < n$:  if $d/n \to \rho \in
(0,1)$, then $\hat{\s}^2 \to (1 - \rho)\s^2 \neq \s^2$.
Furthermore, similar to the Neyman-Scott problem described in the
previous paragraph,
the basic estimator for $\s^2$ derived
in Section 2.1 is a method of moments estimators. 

In our view, the
major implication of the preceding discussion is that the ambiguities of likelihood inference which
arise in this problem contribute to difficulties in devising a
systematic approach to estimation and efficiency when studying $\s^2$,
$\tau^2$, and related quantities in high-dimensional linear models.
While the estimators proposed in this paper are shown to have
reasonable properties, further research into these broader issues may be
warranted.  

\subsection{Overview of the paper}

Section 2 is primarily devoted to the case where $\Cov(\x_i) = I$.  A
motivating discussion and the definition of the basic estimators for
$\s^2$ and $\tau^2$ may be found in Section 2.1.  Section 2.2 and
Section 2.3 address consistency and asymptotic normality for the basic
estimators, respectively.  The case where $\Cov(\x_i)  = \S$ is
unknown is addressed in
Section 3.  Section 3.1 is concerned with the case where a  norm-consistent estimator for $\S$ is available; Section 3.2 covers
the case where no such estimator may be found, but $\bb$ and $\S$
satisfy certain additional conditions.  The results of three
simulation studies are reported in Section 4.  Two of these studies
illustrate basic properties of the estimators proposed in this paper.
In the third study, we compare the performance of our estimators
for $\s^2$ to the performance of estimators for $\s^2$
proposed by \cite{sun2011scaled}.  Section 5 contains a concluding
discussion, where we briefly mention some potential alternatives to
the estimators proposed in this paper and issues related to efficiency.
Proofs may be found in the Appendix; some of the more extended calculations required for these
proofs are contained in the Supplemental Text (which may be found
after the Bibliography below).

\section{Independent predictors: $\S = I$}

Throughout the discussion in this section, we assume that $\S = I$.  
All of the calculations in Section 2.1-2.2 require $\S = I$.  However,
the main result of Section 2.3 (Theorem 3, on asymptotic normality)
holds for arbitrary positive definite $\S$.  Notice that if $\S \neq
I$, but $\S$ is known, then one easily reduces to the case where $\S = I$ be replacing $X$ with $X\S^{-1/2}$. 

\subsection{Motivation and the basic estimators}

For illustrative purposes, suppose for the moment that $d < n$. The
estimator $\hat{\s}_0^2$, defined in (\ref{s0}), may be interpreted as the projection of
$\y$ onto $\col(X)^{\perp} \subseteq \R^n$, the orthogonal
complement of the column space of $X$.  This well-known interpretation
highlights one of the obstacles to estimating
$\s^2$ in linear models with more predictors than observations:  If $d
\geq n$, then $\col(X) = \R^n$; thus, $\col(X)^{\perp} = \{0\}$ and
any projection onto $\col(X)^{\perp}$ is trivial.  An alternative
interpretation of $\hat{\s}_0^2$ suggests methods for estimating
$\s^2$ and $\tau^2$ in high-dimensional linear models.  

Consider the
linear combination of $n^{-1}||\y||^2$ and
$n^{-1}\y^TX(X^TX)^{-1}X^T\y$, 
\[
L_0(a_1,a_2) = a_1\frac{1}{n}||\y||^2 + a_2\frac{1}{n}\y^TX(X^TX)^{-1}X^T\y
\]
for $a_1,a_2 \in \R$ and observe that 
\begin{eqnarray}\label{lc01}
E\left(\frac{1}{n}||\y||^2\right) & = & \s^2 + \tau^2 \\ \label{lc02}
E\left\{\frac{1}{n}\y^TX(X^TX)^{-1}X^T\y\right\} & = & \frac{d}{n}\s^2 + \tau^2
\end{eqnarray}
are non-redundant linear combinations of $\s^2$ and $\tau^2$. Since 
\begin{eqnarray*}
EL_0(a_1,a_2) & = & a_1E\left(\frac{1}{n}||\y||^2\right)  + a_2
E\left\{\frac{1}{n}\y^TX(X^TX)^{-1}X^T\y\right\} \\
& = & a_1(\s^2 + \tau^2) + a_2\left(\frac{d}{n}\s^2 + \tau^2\right),
\end{eqnarray*}
it follows that there exist $a_{11}, a_{12} \in \R$ such that
$L_0(a_{11},a_{12})$ is an unbiased estimator of $\s^2$,
i.e. $EL_0(a_{11},a_{12}) = \s^2$.  In particular, we have
\[
EL_0\left(\frac{n}{n-d},-\frac{n}{n-d}\right) = \s^2
\]
and, moreover, $\hat{\s}_0^2 = L_0\{n/(n-d),-n/(n-d)\}$.  Thus, for $d
< n$, 
$\hat{\s}^2_0$ may be viewed as the unique linear combination of
$n^{-1}||\y||^2$ and $n^{-1}\y^TX(X^TX)^{-1}X^T\y$ that yields an unbiased
estimator of $\s^2$.   

The identities
(\ref{lc01})-(\ref{lc02}) also imply that there exist $a_{21},a_{22} \in \R$
such that $L_0(a_{21},a_{22})$ is an unbiased estimator for $\tau^2$.
Indeed, 
\[
EL_0\left(-\frac{d}{n-d},\frac{n}{n-d}\right) = \tau^2
\]
and
\[
\hat{\tau}_0^2 =L_0\left(-\frac{d}{n-d},\frac{n}{n-d}\right)
\]
is the estimator defined initially in (\ref{tau0}). 

The ideas above are easily adapted to a more
general setting that is useful for problems where $d \geq n$.  Broadly, we seek statistics $T_1 =
T_1(\y,X)$ and $T_2 = T_2(\y,X)$ such that 
\begin{equation} \label{ET}\begin{array}{rcl}
E(T_1) & = & b_{11}\s^2 + b_{12}\tau^2 \\
E(T_2) & =& b_{21}\s^2 + b_{22}\tau^2 \end{array} \ \ \mbox{for some
constants}\ \ \begin{array}{l} b_{11},b_{12}\\ b_{21},b_{22}\end{array}\!\! \in \R \ \ \mbox{with} \  
\ b_{11}b_{22}-b_{12}b_{21} \neq 0.
\end{equation}
In other words, the expected value of the statistics $T_1$, $T_2$
should form a pair of non-degenerate linear combinations of $\s^2$ and
$\tau^2$.  If such $T_1$ and $T_2$ can be found,
then unbiased estimators for
$\s^2$, $\tau^2$ may be formed by taking linear combinations of $T_1$
and $T_2$.  Moreover, asymptotic properties of these estimators are determined by the asymptotic properties of $T_1$,
$T_2$.

In the example discussed above, where $d < n$, $T_1 = n^{-1}||\y||^2$ and
$T_2 = n^{-1}\y^TX(X^TX)^{-1}X^T\y$.    If $d \geq n$,
then alternatives to $T_2 = n^{-1}\y^TX(X^TX)^{-1}X^T\y$ must be
sought; in this paper, we focus on $T_2 = n^{-2}||X^T\y||^2$ (remarks
on other potential alternatives may be found in Section 5).  Using basic facts about the Wishart distribution (see
Supplemental Text for formulas involving various moments of the
Wishart distribution, which are obtained using techniques
from \citep{letac2004all, graczyk2005hyperoctahedral} and are used
throughout the paper), we have
\begin{eqnarray} \nonumber
E\left(\frac{1}{n^2}||X^T\y||^2\right) & = & \frac{1}{n^2}E\y^TXX^T\y
\\ \nonumber
& =& \frac{1}{n^2}E\b^T(X^TX)^2\b + \frac{1}{n^2}E\ee^TXX^T\ee \\ \label{ET2}
& = & \frac{d + n + 1}{n}\tau^2 + \frac{d}{n}\s^2.
\end{eqnarray}
Since $E(n^{-1}||\y||^2) = \s^2 + \tau^2$, it follows that $T_1 =
n^{-1}||\y||^2$ and $T_2 = n^{-2}||X^T\y||^2$ satisfy (\ref{ET}).
Moreover, $T_2 = n^{-2}||X^T\y||^2$ is defined and (\ref{ET2}) is
valid even when $d \geq n$.  Now let
\[
L(a_1,a_2) = \frac{a_1}{n}||\y||^2 + \frac{a_2}{n^2}||X^T\y||^2.
\]
and define 
\begin{eqnarray*}
\hat{\s}^2 & = & L\left(\frac{d + n + 1}{n + 1}, -
  \frac{n}{n+1}\right) \ \ = \ \ \frac{d + n + 1}{n(n+1)}||\y||^2 -
\frac{1}{n(n+1)}||X^T\y||^2 \\
\hat{\tau}^2 & = & L\left(-\frac{d}{n+1},\frac{n}{n+1}\right) \ \ = \
\ -\frac{d}{n(n+1)}||\y||^2 + \frac{1}{n(n+1)}||X^T\y||^2.
\end{eqnarray*}
Making use of (\ref{lc01}) and (\ref{ET2}), a basic calculation implies that
$\hat{\s}^2$ and $\hat{\tau}^2$ are unbiased estimators for $\s^2$ and
$\tau^2$.  Thus, we have the following theorem.  

\begin{thm} {\em [Unbiasedness]} Suppose that $\S = I$.  Then
$E(\hat{\s}^2) = \s^2$ and $E(\hat{\tau}^2) = \tau^2$. 
\end{thm}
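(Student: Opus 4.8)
The plan is to verify unbiasedness by direct computation of $E(\hat{\s}^2)$ and $E(\hat{\tau}^2)$, leaning on the two expectation identities that are already established in the text, namely $E(n^{-1}\|\y\|^2) = \s^2 + \tau^2$ from (\ref{lc01}) and $E(n^{-2}\|X^T\y\|^2) = \frac{d+n+1}{n}\tau^2 + \frac{d}{n}\s^2$ from (\ref{ET2}). Since $\hat{\s}^2 = L(\frac{d+n+1}{n+1}, -\frac{n}{n+1})$ and $\hat{\tau}^2 = L(-\frac{d}{n+1}, \frac{n}{n+1})$ are by construction linear combinations of $n^{-1}\|\y\|^2$ and $n^{-2}\|X^T\y\|^2$, linearity of expectation immediately gives
\[
E(\hat{\s}^2) = \frac{d+n+1}{n+1}(\s^2 + \tau^2) - \frac{n}{n+1}\left(\frac{d+n+1}{n}\tau^2 + \frac{d}{n}\s^2\right),
\]
and one checks that the $\tau^2$ coefficient is $\frac{d+n+1}{n+1} - \frac{d+n+1}{n+1} = 0$ while the $\s^2$ coefficient is $\frac{d+n+1}{n+1} - \frac{d}{n+1} = 1$, so $E(\hat{\s}^2) = \s^2$. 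The analogous substitution for $\hat{\tau}^2$ gives a $\s^2$ coefficient of $-\frac{d}{n+1} + \frac{d}{n+1} = 0$ and a $\tau^2$ coefficient of $-\frac{d}{n+1} + \frac{d+n+1}{n+1} = 1$, yielding $E(\hat{\tau}^2) = \tau^2$.

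The only nontrivial input is (\ref{ET2}) itself, and I would present its derivation (or cite it as already given) as the substantive step. The key move is the decomposition $X^T\y = X^TX\bb + X^T\ee$, so that $\|X^T\y\|^2 = \bb^T(X^TX)^2\bb + 2\bb^T(X^TX)X^T\ee + \|X^T\ee\|^2$; the cross term vanishes in expectation since $\ee$ is independent of $X$ with mean zero. For the remaining two terms I would use that $X^TX$ is Wishart $W_d(n, I)$ (as $\S = I$), giving $E(X^T\ee\,\ee^T X \mid X) = \s^2 X^TX$ hence $E\|X^T\ee\|^2 = \s^2\, E\,\tr(X^TX) = nd\,\s^2$, and the second-moment Wishart identity $E(X^TX)^2 = n(n+d+1)I$ (this is the Wishart moment formula referenced in the Supplemental Text), giving $E\,\bb^T(X^TX)^2\bb = n(n+d+1)\|\bb\|^2 = n(n+d+1)\tau^2$ since $\tau^2 = \bb^T\S\bb = \|\bb\|^2$ when $\S = I$. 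Dividing by $n^2$ recovers (\ref{ET2}).

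The main obstacle — such as it is — is the Wishart second-moment computation $E(X^TX)^2 = n(n+d+1)I$; everything else is bookkeeping. Since the paper explicitly defers Wishart moment formulas to the Supplemental Text, I would simply invoke that formula here rather than rederiving it, and the proof of Theorem 1 then reduces to the two-line linear-algebra check above. In short, I expect no real difficulty: the theorem is an immediate consequence of the construction of $\hat{\s}^2$ and $\hat{\tau}^2$ as the (unique) linear combinations of $T_1 = n^{-1}\|\y\|^2$ and $T_2 = n^{-2}\|X^T\y\|^2$ that invert the $2\times 2$ system (\ref{lc01}), (\ref{ET2}), exactly paralleling the $d<n$ discussion that motivated the estimators.
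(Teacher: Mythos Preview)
Your proposal is correct and matches the paper's approach exactly: the paper states that ``making use of (\ref{lc01}) and (\ref{ET2}), a basic calculation implies that $\hat{\s}^2$ and $\hat{\tau}^2$ are unbiased,'' and your argument is precisely that basic calculation, together with the same derivation of (\ref{ET2}) that the paper gives immediately beforehand. There is nothing to add.
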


\subsection{Consistency}

Let $\hat{\bth} = (\hat{\s}^2,\hat{\tau}^2)^T$ and let $\T = (n^{-1}||\y||^2,n^{-2}||X^T\y||^2)^T$. The covariance matrix of $\hat{\bth}$ is important
for understanding the asymptotic properties of $\hat{\s}^2$ and
$\hat{\tau}^2$.  Since $\hat{\bth} = A\T$, where
\begin{equation}\label{matA}
A = \left(\begin{array}{cc} \frac{d + n + 1}{n+1} & -\frac{n}{n+1} \\
    -\frac{d}{n+1} & \frac{n}{n+1} \end{array}\right),
\end{equation}
it follows that $\Cov(\hat{\bth}) = A\Cov(\T)A^T$.  The covariance
matrices for $\hat{\bth}$ and $\T$ are both computed explicitly in the
Appendix.  Asymptotic approximations for the entries of $\Cov(\hat{\bth})$ that are valid as $d/n \to \rho \in [0,\infty)$ are given below:
\begin{eqnarray}\label{v1}
\Var(\hat{\s}^2) & \sim & 
\frac{2}{n}\left\{\rho(\s^2 + \tau^2)^2 + \s^4 + \tau^4\right\}\\ \label{v2}
\Var(\hat{\tau}^2) & \sim & \frac{2}{n}\left\{(\rho
  + 1)(\s^2 + \tau^2)^2 - \s^4 + 3\tau^4\right\}\\ \label{v3}
\Cov(\hat{\s}^2,\hat{\tau}^2) & \sim & 
-\frac{2}{n}\left\{\rho(\s^2 + \tau^2)^2 + 2\tau^4\right\}.
\end{eqnarray}
The following theorem contains a slightly more detailed version of
these approximations, and gives an explicit consistency result for $\hat{\s}^2$,
$\hat{\tau}^2$.  The theorem is proved in the Appendix.  

\begin{thm}{\em [Consistency]} Suppose that $\S = I$.  Then
\begin{eqnarray*}
\Var(\hat{\s}^2) & = & \frac{2}{n}\left\{\frac{d}{n}(\s^2 + \tau^2)^2
  + \s^4 + \tau^4\right\}\left\{1 +
  O\left(\frac{1}{n}\right)\right\} \\
\Var(\hat{\tau}^2) & = & \frac{2}{n}\left\{\left(1 +
    \frac{d}{n}\right)(\s^2 + \tau^2)^2 - \s^4 + 3\tau^4\right\}\left\{1 +
  O\left(\frac{1}{n}\right)\right\} \\
\Cov(\hat{\s}^2,\hat{\tau}^2) & = & -\frac{2}{n}\left\{\frac{d}{n}(\s^2
+ \tau^2)^2 + 2\tau^4\right\}\left\{1 +
  O\left(\frac{1}{n}\right)\right\}.
\end{eqnarray*}
In particular, 
\[
|\hat{\s}^2 - \s^2|, \ |\hat{\tau}^2 - \tau^2| = O_P\left\{\sqrt{\frac{d +
    n}{n^2}}(\s^2 + \tau^2)\right\}.
\]
\end{thm}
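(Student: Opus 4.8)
The plan is to compute $\Cov(\T)$ exactly — that is, the three quantities $\Var(n^{-1}\|\y\|^2)$, $\Var(n^{-2}\|X^T\y\|^2)$, and $\Cov(n^{-1}\|\y\|^2, n^{-2}\|X^T\y\|^2)$ — and then transport these to $\Cov(\hat\bth) = A\Cov(\T)A^T$ using the matrix $A$ in (\ref{matA}). Since everything is Gaussian, these variances and covariances are polynomial expressions in $d$, $n$, $\s^2$, $\tau^2$. I would condition on $X$ where convenient: given $X$, $\y \sim N(X\bb, \s^2 I)$, so $\|\y\|^2$ and $\|X^T\y\|^2$ are quadratic forms in a Gaussian vector, and their conditional moments are classical (e.g. $\Var(\y^T M \y \mid X) = 2\s^4\tr(M^2) + 4\s^2 \bb^T X^T M^2 X\bb$ for symmetric $M$). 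Taking $M = I$ and $M = XX^T$ reduces the problem to computing expectations of traces and quadratic forms built from the Wishart matrix $X^TX$ — precisely the moments of the Wishart distribution the paper says are tabulated in the Supplemental Text via \citep{letac2004all, graczyk2005hyperoctahedral}. One also needs $\Cov$ across the conditioning, i.e. the covariance of the two conditional means $E(\|\y\|^2\mid X) = \s^2 n + \bb^T X^TX\bb$ and $E(\|X^T\y\|^2\mid X) = \s^2\tr(X^TX)^2/\ldots$-type expressions, plus the law-of-total-covariance cross terms; all of these are again Wishart-moment computations.

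Having assembled the exact entries of $\Cov(\hat\bth)$ as rational functions of $d,n$ (with coefficients quadratic in $\s^2,\tau^2$), the next step is to extract the leading-order behavior. The displayed formulas in the theorem have the form $\frac{2}{n}\{\cdots\}\{1 + O(1/n)\}$, so I would show that the exact variance, say, equals the stated bracketed expression times a factor of the form $1 + (\text{lower order in }n)$. The factors $n+1$ in the denominator of $A$ contribute $n/(n+1) = 1 + O(1/n)$ and $(d+n+1)/(n+1) = d/n \cdot (1 + O(1/n)) + O(1/n)$ corrections; the Wishart moments contribute similar $O(1/n)$ relative corrections. The only point requiring a little care is that $d/n = \rho$ may be large (up to any fixed constant, or even growing — but here we are in $d/n\to\rho\in[0,\infty)$, so $d = O(n)$), so I must check that the "$O(1/n)$" relative errors are uniform in $d$ over this regime; this is automatic since $d$ enters polynomially and is $O(n)$.

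For the final $O_P$ statement, I would use Chebyshev/Markov: from Theorem 1, $E(\hat\s^2) = \s^2$ exactly, so $\hat\s^2 - \s^2$ has mean zero and variance given by the formula just derived, which is $O\!\big(\frac{1}{n}(\tfrac{d}{n}+1)(\s^2+\tau^2)^2\big) = O\!\big(\tfrac{d+n}{n^2}(\s^2+\tau^2)^2\big)$ since $\s^4 + \tau^4 \le (\s^2+\tau^2)^2$ and $\tfrac{d}{n}(\s^2+\tau^2)^2 \le \tfrac{d+n}{n}(\s^2+\tau^2)^2$. Then $P(|\hat\s^2 - \s^2| > K\sqrt{(d+n)/n^2}\,(\s^2+\tau^2)) \le \Var(\hat\s^2)/\{K^2 (d+n)/n^2 (\s^2+\tau^2)^2\} \to 0$ uniformly as $K\to\infty$, which is exactly the $O_P$ claim; the same argument with $\Var(\hat\tau^2)$ handles $\hat\tau^2$, again using $-\s^4 + 3\tau^4 \le 3(\s^2+\tau^2)^2$ so the bracket is $O\{(\tfrac{d}{n}+1)(\s^2+\tau^2)^2\}$.

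The main obstacle is the bookkeeping in the exact Wishart-moment computation: one needs fourth-order moments of $X^TX$ (quantities like $E\{\tr(X^TX)^2\}$, $E\{\bb^T(X^TX)^2\bb\}$, $E\{(\bb^T X^TX\bb)^2\}$, $E\{\tr((X^TX)^2)\cdot\text{stuff}\}$), and getting every combinatorial coefficient right is delicate — this is presumably why the paper relegates it to the Supplemental Text. Everything after that (applying $A$, Taylor-expanding denominators, and the Chebyshev bound) is routine.
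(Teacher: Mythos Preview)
Your proposal is correct and follows essentially the same approach as the paper: the paper's proof computes $\Cov(\T)$ exactly via conditioning on $X$ and the law of total (co)variance, reduces the resulting terms to Wishart moments (Proposition~S1 in the Supplemental Text), then applies $\Cov(\hat\bth)=A\Cov(\T)A^T$ to obtain exact expressions (Corollary~A1), from which the asymptotic formulas and the $O_P$ bound via Chebyshev follow immediately. Your identification of the main obstacle---the combinatorial bookkeeping in the Wishart moment calculations---matches exactly what the paper relegates to the Supplemental Text.
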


\begin{rk}
If $d/n \to \rho \in [0,\infty)$, then the asymptotic approximations
(\ref{v1})-(\ref{v3}) follow immediately from Theorem 2.
\end{rk}

\begin{rk}
It is instructive to compare the asymptotic variance and covariance of
$\hat{\s}^2$, $\hat{\tau}^2$ to that of the estimators $\hat{\s}_0^2$,
$\hat{\tau}_0^2$, defined in (\ref{s0})-(\ref{tau0}).  If $n \to \infty$ and $d/n \to \rho \in [0,1)$,
then
\begin{eqnarray*}
\Var(\hat{\s}^2_0) & \sim & \frac{2\s^4}{n(1 - \rho)} \\
\Var(\hat{\tau}^2_0) & \sim & \frac{2}{n}\left\{(\s^2 + \tau^2)^2 + \left(\frac{\rho}{1 - \rho} - 1\right)\s^4\right\} \\
\Cov(\hat{\s}_0^2,\hat{\tau}_0^2) & \sim & -\frac{2\rho\s^4}{n(1 - \rho)}.
\end{eqnarray*}
Notice that in (\ref{v1}), $\Var(\hat{\s}^2)$ increases with the signal strength
$\tau^2$, while $\Var(\hat{\s}_0^2)$ does not depend on $\tau^2$.  On
the other hand, $\Var(\hat{\s}^2) < \Var(\hat{\s}_0^2)$ when $\tau^2$
is small or $\rho$ is close to 1.  
\end{rk}

\begin{rk}
Suppose that $c_1,c_2 > 0$ are fixed.  Theorem 2 implies that if $d/n \to \rho
\in [0,\infty)$, then $\hat{\s}^2$, $\hat{\tau}^2$ are consistent in
the sense that
\begin{equation}\label{rka}
\lim_{d/n \to \rho}\  \sup_{\substack{0 \leq \s^2 < c_1 \\ 0 \leq
    \tau^2 < c_2
  }} E(\hat{\s}^2 - \s^2)^2 = \lim_{d/n \to \rho} \ \sup_{\substack{0
    \leq \s^2 < c_1 \\ 0  \leq \tau^2 < c_2 }}
E(\hat{\tau}^2 - \tau^2)^2 = 0.
\end{equation}
On the other hand, \cite{dicker2012optimal} proved that if $\rho > 0$,
then it is
impossible to estimate $\bb$ in this setting.  In particular, if $\rho
> 0$, then 
\[
\liminf_{d/n \to \rho} \ \inf_{\hat{\bb}} \ \sup_{\substack{0 \leq \s^2 < c_1 \\
    0 \leq \tau^2 < c_2}} E||\hat{\bb} - \bb||^2 > 0,
\]
where the infimum is over all measurable estimators for $\bb$. Thus,
Theorem 2 describes methods for consistently estimating
$\s^2$ and $\tau^2$ in high-dimensional
linear models, where it is impossible to estimate $\bb$.  If $\rho \in
[0,1)$, then (\ref{rka}) holds with $\hat{\s}_0^2$, $\hat{\tau}_0^2$
in place of $\hat{\s}^2$, $\hat{\tau}^2$. However, Theorem 2 also
applies to settings where $d > n$ (i.e. $\rho > 1$) and the estimators
$\hat{\s}_0^2$, $\hat{\tau}_0^2$ are undefined.  \hfill $\Box$
\end{rk}

\subsection{Asymptotic normality}

Define the total variation distance
between random variables $u$ and $v$ to be
\[
d_{TV}(u,v) = \sup_{B \in \mathcal{B}(\R)} |P(u \in B) - P(v \in B)|,
\]
where $\mathcal{B}(\R)$ denotes the collection of Borel sets in $\R$.
The next theorem is this paper's main result on asymptotic normality.
It is a direct application of results in
\citep{chatterjee2009fluctuations}.  Theorem 3 is proved in the Appendix and it
is valid for arbitrary positive definite covariance matrices $\S$.

\begin{thm} {\em [Asymptotic normality]} 
Let $\l_1 = ||n^{-1}X^TX||$
be the operator norm of $n^{-1}X^TX$ (i.e. $\l_1$ is the largest
eigenvalue of $n^{-1}X^TX$).   Let $h: \R^2 \to \R$ be a function with continuous second order partial
derivatives, let $\nabla h$ denote the gradient of $h$, and let
$\nabla^2h$ denote the Hessian of $h$.  Suppose that $\psi^2 =
\Var\{h(\T)\} < \infty$ and let $w$ be a normal random
variable with the same mean and variance as $h(\T)$.  Then
\begin{equation}\label{thm3bd}
d_{TV}\{h(\T),w\} =  O\left(\frac{||\S||^{3/2}\xi\nu}{n^{3/2}\psi^2}\right),
\end{equation}
where $\xi$ and $\eta$ are
defined as follows:
\begin{eqnarray*}
\xi & = & \xi(\s^2,\tau^2,\S,d,n) \ \ = \ \ \g_4^{1/4} + \g_2^{1/4} +
\g_0^{1/4}\tau(\tau + 1) \\
\nu & = & \nu(\s^2,\tau^2,\S,d,n) \ \ = \ \ \eta_8^{1/4} +
\eta_4^{1/4} + \eta_0^{1/4}\tau^2(\tau^2 + 1) + \g_4^{1/4} +
\g_0^{1/4}(\tau^2 + 1)
\end{eqnarray*}
and, for non-negative integers $k$, 
\begin{eqnarray*}
\g_k & = & \g_k(\s^2,\tau^2,\S,d,n) \ \ = \ \ E\left\{\left|\left|\nabla h(\T)\right|\right|^4(\l_1 +
  1)^6\left(\frac{1}{n}||\ee||^2\right)^k\right\}, \\
\eta_k & = & \eta_k(\s^2,\tau^2,\S,d,n) \ \ = \ \  E\left\{\left|\left|\nabla^2 h(\T)\right|\right|^4(\l_1 +
  1)^{12}\left(\frac{1}{n}||\ee||^2\right)^k\right\}.
\end{eqnarray*}
\end{thm}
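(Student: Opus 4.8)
The plan is to derive this bound as a direct application of the second-order Poincar\'e inequality machinery of \cite{chatterjee2009fluctuations}, which gives total variation bounds for smooth functions of Gaussian random vectors in terms of the first and second derivatives of the map. First I would express everything in terms of a standard Gaussian vector: stack the entries of $X$ and $\ee$ into a single vector $\boldsymbol{g} \in \R^{nd+n}$ with $\boldsymbol{g} \sim N(0,I)$, so that $\y = X\bb + \ee$ and hence $\T = \T(\boldsymbol{g})$, and $h(\T) = h(\T(\boldsymbol{g})) =: f(\boldsymbol{g})$ is a smooth function of $\boldsymbol{g}$. Chatterjee's theorem then bounds $d_{TV}\{f(\boldsymbol{g}), w\}$ by a constant times $\psi^{-2}$ times (roughly) $\big(E\|\nabla f\|^4\big)^{1/4}\big(E\|\nabla^2 f\|^4_{\mathrm{op}}\big)^{1/4}$, where $\psi^2 = \Var\{f(\boldsymbol{g})\}$. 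So the bulk of the proof is a chain-rule computation: bound $\|\nabla f\|$ and $\|\nabla^2 f\|$ in terms of $\|\nabla h(\T)\|$, $\|\nabla^2 h(\T)\|$, and the derivatives of the coordinate maps $\boldsymbol{g} \mapsto n^{-1}\|\y\|^2$ and $\boldsymbol{g} \mapsto n^{-2}\|X^T\y\|^2$.

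The second step is to control those coordinate-map derivatives. Both $T_1$ and $T_2$ are quadratic (indeed quartic, for $T_2$, since it involves $X^TX$) in the entries of $X$ and $\ee$, so their gradients are linear/cubic and their Hessians are constant/quadratic in $\boldsymbol{g}$. Differentiating $n^{-1}\|X\bb + \ee\|^2$ and $n^{-2}\|X^T(X\bb+\ee)\|^2$ and estimating operator norms, one finds that the relevant matrix norms are all dominated by polynomial expressions in $\l_1 = \|n^{-1}X^TX\|$, $\|\S\|$ (which enters because $\bb$ has $\|\S^{1/2}\bb\|^2 = \tau^2$, so $\|\bb\| \le \|\S^{-1/2}\|\,\tau$ — actually more carefully one keeps $\tau$ and a factor of $\|\S\|$ from bounding $\|X\bb\|$ via $\|n^{-1/2}X\S^{-1/2}\|$), the noise norm $n^{-1}\|\ee\|^2$, and powers of $n$. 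The powers of $(\l_1+1)$ appearing in $\g_k$ (the sixth power) and $\eta_k$ (the twelfth power) are exactly what one gets from expanding $\|\nabla f\|^4$ and $\|\nabla^2 f\|^4$ after the chain rule: each factor of $X$ or $X^TX$ contributes a factor of $\l_1$ (or $n^{1/2}\l_1^{1/2}$) to the norm, these get raised to the fourth power, and the $\y$-dependence brings in the $n^{-1}\|\ee\|^2$ terms and the $\tau$, $\tau^2$ terms with the stated low powers. Collecting, $\big(E\|\nabla f\|^4\big)^{1/4}$ is bounded by $\xi$ times an explicit power of $n$, and $\big(E\|\nabla^2 f\|^4\big)^{1/4}$ by $\nu$ times an explicit power of $n$; matching the powers of $n$ against Chatterjee's $n^{-3/2}$-type normalization (his bound carries a $1/n$ from the ambient dimension scaling, combined with the $\|\S\|^{3/2}$ from rescaling $X = \S^{1/2}(\text{white noise})$) yields precisely (\ref{thm3bd}).

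The main obstacle I anticipate is the bookkeeping in the chain-rule step, specifically getting the \emph{exact} powers of $\l_1+1$, of $\|\S\|$, and of $\tau,\tau^2$ right, rather than just some polynomial bound. The quantity $T_2 = n^{-2}\|X^T\y\|^2 = n^{-2}\|X^TX\bb + X^T\ee\|^2$ is the troublesome one: its gradient in $\boldsymbol{g}$ has a piece that is cubic in $X$ (from differentiating $X^TX\bb$ with respect to entries of $X$) and a piece linear in $X$ but involving $\ee$; squaring and taking fourth powers, one must carefully track that the worst term contributes $\|X^TX\|^2\cdot\|X\|^2 \sim n^3\l_1^3$ inside $\|\nabla f\|^2$, i.e. $(\l_1+1)^6$ after going to the fourth power of the norm, which is the source of the sixth power in $\g_k$. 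The Hessian of $T_2$ squares this structure, hence the twelfth power in $\eta_k$. One also needs a clean way to absorb the mixed $X$-$\ee$ cross terms; Cauchy--Schwarz in the expectation, peeling off $(\l_1+1)^6$ or $(\l_1+1)^{12}$ and the $\ee$-norm powers, and then defining $\g_k,\eta_k$ to be exactly the leftover expectations, is the device that makes the statement come out in the stated compact form. Finally I would note that since Theorem 2 (via $\S = I$ reduction for general known $\S$ by replacing $X$ with $X\S^{-1/2}$) is not needed here — Theorem 3 is stated for arbitrary positive definite $\S$ — the only probabilistic input is Chatterjee's inequality plus crude moment bounds on $\l_1$ and $n^{-1}\|\ee\|^2$, which are left implicit inside the $\g_k,\eta_k$ notation rather than estimated here.
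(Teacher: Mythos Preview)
Your proposal is correct and matches the paper's approach: apply Chatterjee's second-order Poincar\'e inequality to $g(X,\ee) = h(\T)$, then bound $\kappa_1 = (E\|\nabla g\|^4)^{1/4}$ and $\kappa_2 = (E\|\nabla^2 g\|^4)^{1/4}$ via the chain rule in terms of $\|\nabla h(\T)\|$, $\|\nabla^2 h(\T)\|$, $\l_1$, $n^{-1}\|\ee\|^2$, and $\tau$. One small framing difference: the paper does \emph{not} standardize to $N(0,I)$ but applies Chatterjee's theorem directly to the stacked vector $(X,\ee)$ with its block-diagonal covariance $\Psi$, so that the $\|\S\|^{3/2}$ factor comes straight from the $\|\Psi\|^{3/2}$ term in Chatterjee's bound rather than through the chain rule, and the $n^{-3/2}$ arises simply as the product $\kappa_1\kappa_2$ with $\kappa_1 = O(n^{-1/2}\xi)$ and $\kappa_2 = O(n^{-1}\nu)$ (the latter obtained by bounding $\|\nabla^2 g\|$ via directional derivatives $D_{\tilde U}^2 g$).
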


\setcounter{rk}{0}

\begin{rk}
If $||\S||$ is bounded, then the asymptotic behavior of the upper
bound (\ref{thm3bd}) is determined by that of $\xi$, $\nu$, and
$\psi^2$, which, in turn, is determined by the function $h$.  For the functions $h$ considered in this paper, if $d/n \to
\rho \in [0,\infty)$, then $\xi$, $\nu$,
and $n\psi^2$ are bounded by rational functions in $\s^2$ and
$\tau^2$.  Thus, if $||\S||$ is bounded, $d/n \to \rho \in [0,\infty)$, 
and $\s^2,\tau^2$ lie in some compact set, then we typically have
\[
d_{TV}\{h(\T),w\} = O(n^{-1/2}).
\]  
In other words, $h(\T)$ converges to a normal random variable at rate
$n^{-1/2}$.  Under these conditions, if $\psi^2 = \Var\left\{h(\T)\right\}$ is known or
estimable (as it is for the $h$ studied here), then asymptotically
valid confidence intervals for $Eh(\T)$ may be constructed using
Theorem 3.  \hfill $\Box$
\end{rk}

Now let $A$ be the matrix (\ref{matA}) and let $\mathbf{a}_1^T$, $\mathbf{a}_2^T$ denote the
first and second rows of $A$, respectively.  Applying Theorem 3 with
$\S = I$ and $h(\T) = \mathbf{a}_1^T\T = \hat{\s}^2$, $h(\T) = \mathbf{a}_2^T\T =
\hat{\tau}^2$, and $h(\T) = (\mathbf{a}_2^T\T)/(\mathbf{a}_1^T\T) =
\hat{\tau}^2/\hat{\s}^2$  gives bounds on the total variation distance between
$\hat{\s}^2$, $\hat{\tau}^2$, and $\hat{\tau}^2/\hat{\s}^2$ and corresponding
normal random variables.  These examples are pursued in more detail below.

\begin{ex}[$\hat{\s}^2$ and $\hat{\tau}^2$]
Let $h(\T) = \mathbf{a}_1^T\T = \hat{\s}^2$ in Theorem 3 and suppose
that $\S = I$.  Then $\eta_k =
0$, because $\nabla^2 h = 0$.  To bound $\g_k$, we have
\[
\g_k  =  E\left\{||\mathbf{a}_1||^4(\l_1 +
  1)^6\left(\frac{1}{n}||\ee||^2\right)^k\right\}  =  O\left\{\left(1 + \frac{d}{n}\right)^{10}\s^{2k}\right\}.
\]
Thus,
\begin{eqnarray*}
\xi & = & O\left\{\left(1 + \frac{d}{n}\right)^{5/2}\left(\s^2 + \s +
    \tau^2+ \tau\right)\right\} \\
\nu & = &  O\left\{\left(1 + \frac{d}{n}\right)^{5/2}\left(\s^2 +
    \tau^2 + 1\right)\right\}.
\end{eqnarray*}
By Theorem 2,
\[
\Var(\hat{\s}^2) = \frac{2}{n}\left\{\frac{d}{n}(\s^2 + \tau^2)^2 + \s^4 +
  \tau^4\right\}\left\{1 + O\left(\frac{1}{n} \right)\right\}.
\]
Now let 
\begin{equation}\label{psi1}
\psi_1^2 = 2 \left\{\frac{d}{n}(\s^2 + \tau^2)^2 + \s^4 +
  \tau^4\right\}
\end{equation}
and let $z \sim N(0,1)$. Then Theorem 3 implies 
\[
d_{TV}\left\{\sqrt{n}\left(\frac{\hat{\s}^2 -
      \s^2}{\psi_1}\right),z\right\}  =  O\left[\frac{1}{\sqrt{n}}\left( 1+
    \frac{d}{n}\right)^4\left\{1 + \left(\frac{1}{\s +
        \tau}\right)^3\right\}\right].  
\]
Similar calculations imply that 
\[
d_{TV}\left\{\sqrt{n}\left(\frac{\hat{\tau}^2 -
      \tau^2}{\psi_2}\right),z\right\} = O\left[\frac{1}{\sqrt{n}}\left( 1+
    \frac{d}{n}\right)^4\left\{1 + \left(\frac{1}{\s +
        \tau}\right)^3\right\}\right],
\]
where 
\begin{equation}\label{psi2}
\psi_2^2 = 2\left\{\left(1 + \frac{d}{n}\right)(\s^2 + \tau^2)^2 - \s^4
  + 3\tau^4\right\}. 
\end{equation}
Thus, we have the following corollary to
Theorem 3.
\end{ex}

\begin{cor}
Suppose that $\S = I$ and $D \subseteq (0,\infty)$ is compact.  Let $z
\sim N(0,1)$. If $d/n \to \rho \in [0,\infty)$, then 
\[
\sup_{\s^2,\tau^2 \in D} d_{TV}\left\{\sqrt{n}\left(\frac{\hat{\s}^2 -
      \s^2}{\psi_1}\right),z\right\}, \ \sup_{\s^2,\tau^2
    \in D} d_{TV}\left\{\sqrt{n}\left(\frac{\hat{\tau}^2 -
      \tau^2}{\psi_2}\right),z\right\} = O(n^{-1/2}),
\]
where $\psi_1,\psi_2$ are defined in (\ref{psi1})-(\ref{psi2}).  
\end{cor}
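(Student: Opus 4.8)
The plan is to obtain Corollary 2 as a uniform version of the bounds derived in Example 1, by tracking the dependence on $\s^2$ and $\tau^2$ explicitly and then taking a supremum over the compact set $D$. The key point is that all the estimates leading up to the displayed bounds
\[
d_{TV}\left\{\sqrt{n}\left(\frac{\hat{\s}^2 - \s^2}{\psi_1}\right),z\right\}, \ d_{TV}\left\{\sqrt{n}\left(\frac{\hat{\tau}^2 - \tau^2}{\psi_2}\right),z\right\} = O\!\left[\frac{1}{\sqrt n}\left(1 + \frac{d}{n}\right)^4\left\{1 + \left(\frac{1}{\s + \tau}\right)^3\right\}\right]
\]
are already explicit in $\s^2$, $\tau^2$, $d$, and $n$; what remains is to argue that the implied constants in Theorem 3 and Theorem 2 do not secretly depend on $(\s^2,\tau^2)$, and that the right-hand side is $O(n^{-1/2})$ uniformly over $D$.

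First I would fix compact $D \subseteq (0,\infty)$, so that there exist constants $0 < c \le C < \infty$ with $c \le \s^2,\tau^2 \le C$ for all $(\s^2,\tau^2) \in D$. Since $D$ is bounded away from $0$, the factor $1 + (\s+\tau)^{-3}$ is bounded above by a constant depending only on $D$; since $D$ is bounded, $\psi_1^2$ and $\psi_2^2$ in (\ref{psi1})--(\ref{psi2}) are bounded above and — crucially — bounded below by a positive constant depending only on $D$ and on $\rho$ (here one uses $\psi_1^2 \ge 2(\s^4 + \tau^4) \ge 4c^2 > 0$, and similarly a lower bound for $\psi_2^2$, noting $(1+d/n)(\s^2+\tau^2)^2 - \s^4 + 3\tau^4 \ge \tau^4 + \s^4 - \s^4 + 3\tau^4$ when $d/n \ge 0$, wait — more simply $\psi_2^2 \ge 2(\s^2+\tau^2)^2 - 2\s^4 + 6\tau^4 \ge 2\tau^4 > 0$). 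Next, I would re-examine the $O(\cdot)$ statements in Example 1: the bound $\g_k = O\{(1+d/n)^{10}\s^{2k}\}$ comes from $E(\l_1+1)^6 = O\{(1+d/n)^{10}\}$ for a standard Wishart operator-norm moment bound (which has no $\s^2,\tau^2$ dependence at all) together with $E(n^{-1}\|\ee\|^2)^k = O(\s^{2k})$ with an absolute constant; so the implied constants there depend on nothing in $D$. Feeding these into $\xi$, $\nu$ and then into (\ref{thm3bd}), with $\|\S\| = 1$, gives the displayed bound with an absolute implied constant. Then, since $d/n \to \rho < \infty$ forces $(1+d/n)^4$ to be bounded, the whole right-hand side is $O(n^{-1/2})$ with a constant depending only on $D$ and $\rho$ — and taking $\sup_{\s^2,\tau^2\in D}$ preserves this. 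The same argument, verbatim with $h(\T) = \mathbf{a}_2^T\T$, handles $\hat\tau^2$.

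The one step requiring a little care — and the main obstacle — is the uniform lower bound on $\psi_1^2,\psi_2^2$ together with the uniform control of the correction factor $\{1 + O(1/n)\}$ relating $\Var(\hat\s^2)$ to $n^{-1}\psi_1^2$ coming from Theorem 2: I need to know that this $O(1/n)$ term is uniform over $D$, i.e. that the implied constant in Theorem 2's error terms does not depend on $(\s^2,\tau^2)$. This follows from inspecting the explicit $\Cov(\hat\bth)$ formulas computed in the Appendix — they are rational functions of $d,n,\s^2,\tau^2$, and since $\s^2,\tau^2$ range over a compact set the ratio of the exact variance to its leading term is $1 + O(1/n)$ uniformly. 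Granting that (and it is genuinely granted, since those computations are done in the Appendix), the corollary follows by combining: (i) $\psi_i^2 \asymp \Var(\hat\theta_i)\cdot n$ uniformly on $D$; (ii) the explicit Theorem 3 bound with absolute constants; (iii) boundedness of $(1+d/n)^4$ and of $1+(\s+\tau)^{-3}$ on $D$ under $d/n\to\rho$; and then taking the supremum over $(\s^2,\tau^2)\in D$. No new ideas beyond Example 1 are needed — only bookkeeping of which constants are allowed to depend on $D$ and $\rho$ and which must be absolute.
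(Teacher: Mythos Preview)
Your proposal is correct and follows essentially the same approach as the paper: the corollary is stated immediately after Example 1 and is meant to follow directly from the explicit bound $O\bigl[n^{-1/2}(1+d/n)^4\{1+(\s+\tau)^{-3}\}\bigr]$ derived there, together with the observation that $(1+d/n)^4$ is bounded when $d/n\to\rho$ and $1+(\s+\tau)^{-3}$ is bounded on the compact set $D\subseteq(0,\infty)$. Your additional care in verifying that the implied constants in Theorem 2 and in the $\g_k$ bounds are uniform over $D$ is exactly the bookkeeping the paper leaves implicit, and is correct as written (the exact variance formulas in Corollary A1 of the Appendix make the uniformity of the $1+O(1/n)$ factor transparent).
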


\begin{ex}[Signal-to-noise ratio]  Suppose that $\S = I$.  Define the
  function $g_0: \R^2
\setminus\{0\}\times \R \to  \R$ by $g_0(\u) = g_0(u_1,u_2) = u_2/u_1$
and let $h_0 = g_0 \circ A$ be defined by $h_0(\t) = g_0(A\t)$, where
$A$ is the $2 \times 2$ matrix given in (\ref{matA}).  Then $h_0(\T) =  g_0(\hat{\s}^2,\hat{\tau}^2) =
\hat{\tau}^2/\hat{\s}^2$ is an estimate of the
signal-to-noise ratio.  However, Theorem 3 cannot be applied
directly because $h_0$ is not defined on all of $\R^2$ (if
$\mathbf{a}_1^T\t = 0$, then $h_0(\t)$ is undefined).  To remedy
this, we assume that $\s^2,\tau^2 \in D$, where $D
\subseteq (0,\infty)$ is compact and, moreover, that $d/n \to \rho \in
[0,\infty)$.  Now let $g:\R^2 \to \R$ be a function with continuous
second order partial derivatives such that $\sup_{\u \in \R^2}
||\nabla g(\u)||, \ \sup_{\u \in \R^2} ||\nabla^2 g(\u)|| <
\infty$ and $g = g_0$ on $D_0 \times D_0$, where $D_0 \subseteq
(0,\infty)$ is a compact set containing $D$ in its interior.  

To show
that the estimated signal-to-noise ratio is asymptotically normal, we
apply Theorem 3 with $h = g \circ A$.   Working under the
assumption that $\s^2, \tau^2 \in D$ and
$d/n \to \rho \in [0,\infty)$, it is straightforward to check that $\g_k,\eta_k = O(1)$, for $k= 0,2,4,8$;
thus, $\xi, \nu = O(1)$.  To approximate the variance of $h(\T)$, let
$\bth = (\s^2,\tau^2)^T$ and $\hat{\bth} =
(\hat{\s}^2,\hat{\tau}^2)^T$.  A
second order Taylor expansion yields
\begin{eqnarray}\nonumber 
h(\T) & = & g(\hat{\bth}) \\ \label{ex2a}
& = & g(\bth) + \nabla g(\bth)^T(\hat{\bth} - \bth) + R||\hat{\bth} - \bth||^2,
\end{eqnarray}
where $R = O(1)$. Theorem 2 and a
straightforward calculation imply that
\begin{eqnarray*}
\Var\left\{\nabla g(\bth)^T\hat{\bth}\right\} 
& = &
\nabla g(\bth)^T\Cov(\hat{\bth})\nabla g(\bth) \\
& = &  \frac{2}{n\s^8}\left\{\left(1 + \frac{d}{n}\right)\left(\s^2 +
    \tau^2\right)^4 - \s^4(\s^2 + \tau^2)^2\right\}\left\{1 + O\left(\frac{1}{n}\right)\right\}.
\end{eqnarray*}
Since $\Var\left(||\hat{\bth} - \bth||^2\right) = O(n^{-2})$ and $R =
O(1)$, (\ref{ex2a}) implies
\[
\psi^2 = \Var\left\{h(\T)\right\} =  \frac{2}{n\s^8}\left\{\left(1 + \frac{d}{n}\right)\left(\s^2 +
    \tau^2\right)^4 - \s^4(\s^2 + \tau^2)^2\right\}\left\{1 + O\left(\frac{1}{n}\right)\right\}.
\]
Thus, Theorem 3 implies that
\begin{equation}\label{ex2b}
d_{TV}\left[\sqrt{n}\left\{\frac{h(\T) -
      Eh(\T)}{\psi_0}\right\},z\right] = O(n^{-1/2}),
\end{equation}
where $z \sim N(0,1)$ and
\begin{equation}\label{psi0}
\psi_0^2 = \frac{2}{\s^8}\left\{\left(1 + \frac{d}{n}\right)\left(\s^2 +
    \tau^2\right)^4 - \s^4(\s^2 + \tau^2)^2\right\}.
\end{equation}
Finally, in order to relate (\ref{ex2b}) directly to
$h_0(\T) = \hat{\tau}^2/\hat{\s}^2$ and the signal-to-noise ratio $\tau^2/\s^2$, notice that Theorem
2 implies 
\[
P\left\{h(\T) \neq \frac{\hat{\tau}^2}{\hat{\s}^2}\right\} = O\left(\frac{1}{n}\right)
\]
and equation (\ref{ex2a}) implies 
\[
Eh(\T) = \frac{\tau^2}{\s^2} + O\left(\frac{1}{n}\right).
\]
Combining these facts with (\ref{ex2b}), we obtain the following
result.
\end{ex}

\begin{cor}
Suppose that $\S = I$ and $D \subseteq (0,\infty)$ is compact.  Let $z
\sim N(0,1)$.  If $d/n \to \rho \in [0,\infty)$, then 
\[
\sup_{\s^2,\tau^2 \in D} d_{TV}\left\{\sqrt{n}\left(\frac{\hat{\tau}^2/\hat{\s}^2 -
      \tau^2/\s^2}{\psi_0}\right),z\right\} = O(n^{-1/2}),
\]
where $\psi_0^2$ is defined in (\ref{psi0}).  
\end{cor}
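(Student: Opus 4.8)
The plan is to derive Corollary 2 by assembling the ingredients set up in Example 2 and then doing the total-variation bookkeeping that converts the bound \eqref{ex2b} for the smoothed statistic $h(\T)$ into a bound for $\hat{\tau}^2/\hat{\s}^2$ centered at the true signal-to-noise ratio $\tau^2/\s^2$. Recall the construction: since $g_0(u_1,u_2)=u_2/u_1$ does not satisfy the global smoothness and boundedness hypotheses of Theorem 3, one works with a fixed $g:\R^2\to\R$ having bounded first and second derivatives and agreeing with $g_0$ on $D_0\times D_0$, where $D_0\subseteq(0,\infty)$ is compact with $D$ in its interior, and sets $h=g\circ A$ with $A$ the matrix \eqref{matA}, so that $h(\T)=g(\hat{\bth})=\hat{\tau}^2/\hat{\s}^2$ on the event $\{\hat{\bth}\in D_0\times D_0\}$. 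The three facts from Example 2 I would rely on, all uniform over $\s^2,\tau^2\in D$ because $D$ is compact, $g$ is fixed, and every quantity in sight is a rational function of $\s^2,\tau^2$ and $d/n$ (with $d/n$ in a bounded neighborhood of $\rho$), are: (a) $\xi=\nu=O(1)$, hence $\psi^2=\Var\{h(\T)\}=(\psi_0^2/n)\{1+O(1/n)\}$ and \eqref{ex2b} holds, i.e. $d_{TV}[\sqrt{n}\{(h(\T)-Eh(\T))/\psi_0\},z]=O(n^{-1/2})$; (b) $P\{h(\T)\neq\hat{\tau}^2/\hat{\s}^2\}=O(1/n)$; and (c) $Eh(\T)=\tau^2/\s^2+O(1/n)$.

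To finish I would argue as follows. First, since $\sqrt{n}(\hat{\tau}^2/\hat{\s}^2-\tau^2/\s^2)/\psi_0$ and $\sqrt{n}(h(\T)-\tau^2/\s^2)/\psi_0$ coincide on the event in (b), for every Borel set $B$ the two probabilities of landing in $B$ differ by at most $P\{h(\T)\neq\hat{\tau}^2/\hat{\s}^2\}=O(1/n)$, so the total variation distance between these two rescaled statistics is $O(1/n)$. Next I would write $\sqrt{n}(h(\T)-\tau^2/\s^2)/\psi_0=\sqrt{n}(h(\T)-Eh(\T))/\psi_0+c_n$, where $c_n=\sqrt{n}\{Eh(\T)-\tau^2/\s^2\}/\psi_0$ is deterministic. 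On $D$ one has $(1+d/n)(\s^2+\tau^2)^2-\s^4\geq(\s^2+\tau^2)^2-\s^4=\tau^2(2\s^2+\tau^2)$, bounded below by a positive constant, so $\psi_0$ is bounded away from $0$ on $D$; combined with (c) this gives $c_n=O(n^{-1/2})$ uniformly. Using invariance of $d_{TV}$ under a common translation and the elementary bound $d_{TV}(N(c,1),N(0,1))=O(|c|)$, I get $d_{TV}[\sqrt{n}(h(\T)-\tau^2/\s^2)/\psi_0,z]\leq d_{TV}[\sqrt{n}(h(\T)-Eh(\T))/\psi_0,z]+d_{TV}(z+c_n,z)=O(n^{-1/2})$. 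Combining with the $O(1/n)$ term from (b) via the triangle inequality for $d_{TV}$ and taking the supremum over $\s^2,\tau^2\in D$ (legitimate by the uniformity above) yields $\sup_{\s^2,\tau^2\in D}d_{TV}\{\sqrt{n}(\hat{\tau}^2/\hat{\s}^2-\tau^2/\s^2)/\psi_0,z\}=O(n^{-1/2})$.

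The substance therefore lies inside Example 2 and ultimately inside the application of Theorem 3. The two points I expect to need the most care are: verifying $\g_k=\eta_k=O(1)$ for $k=0,2,4,8$, which requires that moments of $(\l_1+1)^6$ and $(\l_1+1)^{12}$ stay bounded as $d/n\to\rho$ --- not merely almost-sure convergence of $\l_1$ to the Marchenko--Pastur edge but a genuine tail/moment bound for the largest eigenvalue of $n^{-1}X^TX$, which follows from standard Gaussian concentration or from the explicit Wishart moment formulas used elsewhere in the paper (and here one also uses that $\ee$ and $X$ are independent, so that $E\{(\l_1+1)^6(n^{-1}\|\ee\|^2)^k\}$ factors, with $\|\nabla h(\T)\|$ and $\|\nabla^2h(\T)\|$ bounded deterministically since $\|A\|=O(1)$ and $\nabla g,\nabla^2 g$ are bounded); and the Taylor-expansion bookkeeping for $\psi^2$ and $Eh(\T)$, where the crucial feature is that unbiasedness of $\hat{\bth}$ (Theorem 1) kills the linear term in $Eh(\T)$, so the bias is $O(1/n)$ rather than $O(n^{-1/2})$ --- it is precisely this, together with $\psi_0$ being bounded below on $D$, that keeps the mean-shift contribution $c_n$ at order $n^{-1/2}$ and hence keeps the final bound $O(n^{-1/2})$ rather than $O(1)$.
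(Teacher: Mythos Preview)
Your proposal is correct and follows essentially the same approach as the paper: Example 2 supplies exactly the three ingredients you call (a), (b), (c), and the paper then simply asserts that ``combining these facts with \eqref{ex2b}'' yields Corollary 2, whereas you have spelled out the total-variation bookkeeping (coupling on the event $\{\hat{\bth}\in D_0\times D_0\}$, the deterministic mean shift $c_n$, and the bound $d_{TV}(N(c,1),N(0,1))=O(|c|)$) that the paper leaves implicit. Your identification of the two delicate points --- moment control of $\l_1$ and the role of unbiasedness (Theorem 1) in making the bias $O(1/n)$ rather than $O(n^{-1/2})$ --- matches precisely what underlies the paper's Example 2.
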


\section{Unknown $\S$}

In this section, we propose estimators for $\s^2$, $\tau^2$ for use when $\S$
is an unknown $d \times d$ positive definite matrix.  In Section 3.1, we
consider the case where a norm-consistent estimator for $\S$ is
available.  In this setting, consistent (and, under certain conditions,
asymptotically normal) estimators for $\s^2$, $\tau^2$ are obtained by
essentially transforming the problem to the $\S = I$ case.  In Section
3.2, we consider the case where a norm-consistent estimator for $\S$
is not available.  Here we derive alternative estimators for $\s^2$,
$\tau^2$ and these estimator are shown to be consistent
and asymptotically normal under additional conditions on $\S$ and $\b$.  

\subsection{Estimable $\S$}

An estimator $\hat{\S}$ for $\S$ is norm consistent if
$||\hat{\S}- \S|| \to 0$, where $||\hat{\S} - \S||$ is the operator
norm of $\hat{\S} - \S$ and the convergence holds in some appropriate
sense (e.g. convergence in probability or squared-mean).  In
high-dimensional data analysis where $d/n \to \rho > 0$,
the sample covariance matrix $n^{-1}X^TX$ is not a norm-consistent
estimator for $\S$; furthermore, in the absence of additional information about $\S$, it is
generally not possible to find a norm-consistent estimator for $\S$.
However, \cite{bickel2008regularized}, \cite{elkaroui2008operator},
\cite{cai2010optimal}, and others have shown that for wide classes of
matrices $\S$, norm-consistent estimators are available when $d/n \to
\rho > 0$.  Moreover, one can
reasonably envision situations in practice where pertinent prior
information about the population predictor covariance matrix $\S$ is available (so that a reliable estimator of $\S$
may be found), but there is little prior information about
$\beta$ (so that $\beta$ is not estimable and estimates of $\s^2$,
$\tau^2$ based on residual sums of squares $||\y - X\hat{\bb}||^2$ are
suspect).  \cite{li2010bayesian} discuss relevant examples
from genomics and fMRI with highly structured high-dimensional
predictors, though they focus on variable selection problems.    

Suppose that $\hat{\S}$ is a positive definite estimator for $\S$ and define
the estimators 
\begin{eqnarray*}
\hat{\s}^2(\hat{\S}) & = & \frac{d + n + 1}{n(n+1)} ||\y||^2 -
\frac{1}{n(n+1)}||(X\hat{\S}^{-1/2})^T\y||^2 \\
\hat{\tau}^2(\hat{\S}) & = & -\frac{d}{n(n+1)} ||\y||^2 + \frac{1}{n(n+1)}||(X\hat{\S}^{-1/2})^T\y||^2.
\end{eqnarray*}
Notice that $\hat{\s}^2 = \hat{\s}^2(I)$ and $\hat{\tau}^2 =
\hat{\tau}^2(I)$.  Now let $Z = (\z_1,...,\z_n)^T = X\S^{-1/2}$.  Then $\z_1,...,\z_n \iidsim
N(0,I)$ and all of the results from Section 2 apply to
the estimators $\hat{\s}^2(\S)$, $\hat{\tau}^2(\S)$, with $Z$, $\S^{1/2}\bb$ in
place of $X$, $\bb$, respectively.  Since
\begin{eqnarray}\nonumber
\hat{\s}^2(\hat{\S}) & = & \hat{\s}^2(\S)
-\frac{1}{n(n+1)}\left\{||(X\hat{\S}^{-1/2})^T\y||^2 -
  ||Z^T\y||^2\right\} \\ 
&  = & \hat{\s}^2(\S) +
O\left\{\frac{1}{n^2}||Z^T\y||^2||\S^{1/2}\hat{\S}^{-1}\S^{1/2} - I||\right\} \label{est1}
\end{eqnarray}
and
\begin{eqnarray} \nonumber
\hat{\tau}^2(\hat{\S}) & = & \hat{\tau}^2(\S)
+\frac{1}{n(n+1)}\left\{||(X\hat{\S}^{-1/2})^T\y||^2 -
  ||Z^T\y||^2\right\} \\ 
&  = & \hat{\tau}^2(\S) +
O\left\{\frac{1}{n^2}||Z^T\y||^2||\S^{1/2}\hat{\S}^{-1}\S^{1/2} -
  I||\right\}, \label{est2}
\end{eqnarray}
we conclude that if $||\S^{1/2}\hat{\S}^{-1}\S^{1/2} - I||$ is small, then asymptotic
properties of $\hat{\s}^2(\hat{\S})$ and $\hat{\tau}^2(\hat{\S})$
are determined by those of $\hat{\s}^2(\S)$ and $\hat{\tau}^2(\S)$.  This is
illustrated in the following proposition, which is a direct
consequence of (\ref{est1})-(\ref{est2}) and the results of Section
2.  

\begin{prop}
Let $\hat{\S}$ be a positive definite estimator for $\S$.  Suppose further
that $||\S||$, $||\S^{-1}||$, $||\hat{\S}||$, $||\hat{\S}^{-1}|| = O_P(1)$.  
\begin{itemize}
\item[(i)] {\em [Consistency]}  
\[
|\hat{\s}^2(\hat{\S}) - \s^2|, \ |\hat{\tau}^2(\hat{\S}) - \tau^2| =
O_P\left\{\left(\sqrt{\frac{d + n}{n^2}} + ||\hat{\S} -
    \S||\right)(\s^2 + \tau^2)\right\}.
\]
\item[(ii)] {\em [Asymptotic normality]}  Let $\psi_1$, $\psi_2$, and
  $\psi_0$ be as defined in (\ref{psi1}), (\ref{psi2}), and
  (\ref{psi0}).  Suppose that $d/n \to \rho \in [0,\infty)$ and that  $\s^2,\tau^2 \in D$ for some compact set $D \subseteq
  (0,\infty)$.  If $||\hat{\S} - \S|| =
  o_P(n^{-1/2})$, then
\[
\sqrt{n}\left\{\frac{\hat{\s}^2(\hat{\S}) - \s^2}{\psi_1}\right\},\ \
\sqrt{n}\left\{\frac{\hat{\tau}^2(\hat{\S}) - \tau^2}{\psi_2}\right\}, \ \
\sqrt{n}\left\{\frac{\hat{\tau}^2(\hat{\S})/\hat{\s}^2(\hat{\S}) - \tau^2/\s^2}{\psi_0}\right\}
 \leadsto N(0,1),
\]
where $\leadsto$ indicates convergence in distribution.  
\end{itemize}
\end{prop}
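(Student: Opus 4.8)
The plan is to treat $\hat{\s}^2(\hat{\S})$ and $\hat{\tau}^2(\hat{\S})$ as perturbations of the oracle estimators $\hat{\s}^2(\S)$, $\hat{\tau}^2(\S)$ via the identities (\ref{est1})--(\ref{est2}), and then to transfer the $\S = I$ results of Section~2 through these identities. The preliminary observation (already recorded just before (\ref{est1})) is that, with $Z = X\S^{-1/2}$, the rows of $Z$ are iid $N(0,I)$ and $\hat{\s}^2(\S)$, $\hat{\tau}^2(\S)$ are exactly the Section~2 estimators computed from $(Z,\y)$; moreover the transformed signal $\S^{1/2}\bb$ has $\|\S^{1/2}\bb\|^2 = \bb^T\S\bb = \tau^2$, so the true values of $\s^2$ and $\tau^2$ -- and hence the normalizers $\psi_1,\psi_2,\psi_0$, which depend only on $\s^2,\tau^2,d,n$ -- are unchanged. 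Consequently Theorem~2 and Corollaries~1--2 apply verbatim to $\hat{\s}^2(\S)$, $\hat{\tau}^2(\S)$, and $\hat{\tau}^2(\S)/\hat{\s}^2(\S)$.

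The next step is to control the perturbation terms in (\ref{est1})--(\ref{est2}). Writing $\S^{1/2}\hat{\S}^{-1}\S^{1/2} - I = \S^{1/2}\hat{\S}^{-1}(\S - \hat{\S})\S^{-1}\S^{1/2}$ and passing to operator norms gives $\|\S^{1/2}\hat{\S}^{-1}\S^{1/2} - I\| \le \|\S\|\,\|\S^{-1}\|\,\|\hat{\S}^{-1}\|\,\|\hat{\S} - \S\| = O_P(\|\hat{\S} - \S\|)$ under the stated norm hypotheses. Since $E\{n^{-2}\|Z^T\y\|^2\} = \frac{d+n+1}{n}\tau^2 + \frac{d}{n}\s^2$ by (\ref{ET2}) applied to $(Z,\y)$, Markov's inequality gives $n^{-2}\|Z^T\y\|^2 = O_P\{(1 + d/n)(\s^2 + \tau^2)\}$, so the error terms in (\ref{est1})--(\ref{est2}) are $O_P\{(1+d/n)(\s^2+\tau^2)\|\hat{\S} - \S\|\}$. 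For part~(i), adding this to the bound on $|\hat{\s}^2(\S) - \s^2|$, $|\hat{\tau}^2(\S) - \tau^2|$ supplied by Theorem~2 (applied to $(Z,\y)$), and noting that $1 + d/n$ is bounded in the regime of interest, yields the claimed rate.

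For part~(ii), the assumptions $d/n \to \rho$ and $\s^2,\tau^2 \in D$ with $D \subset (0,\infty)$ compact ensure that $\psi_1,\psi_2,\psi_0$ are bounded above and away from zero (for $\psi_0$ this uses $(1+d/n)(\s^2+\tau^2)^4 - \s^4(\s^2+\tau^2)^2 \ge (\s^2+\tau^2)^2\tau^2(2\s^2+\tau^2) > 0$) and that $n^{-2}\|Z^T\y\|^2 = O_P(1)$; since $\|\hat{\S} - \S\| = o_P(n^{-1/2})$, the error terms in (\ref{est1})--(\ref{est2}) are $o_P(n^{-1/2})$. Hence $\sqrt{n}\{\hat{\s}^2(\hat{\S}) - \s^2\}/\psi_1 = \sqrt{n}\{\hat{\s}^2(\S) - \s^2\}/\psi_1 + o_P(1)$, the leading term converges to $N(0,1)$ by Corollary~1 applied to $(Z,\y)$, and Slutsky's theorem gives the $\hat{\s}^2$ statement; the $\hat{\tau}^2$ statement is identical with $\psi_2$. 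For the signal-to-noise ratio, I would note that $\hat{\s}^2(\S) \Pto \s^2 > 0$ on $D$, so $\hat{\s}^2(\hat{\S})$ is bounded away from $0$ with probability tending to one, write $\hat{\tau}^2(\hat{\S})/\hat{\s}^2(\hat{\S}) - \hat{\tau}^2(\S)/\hat{\s}^2(\S)$ over a common denominator, and use $\hat{\s}^2(\hat{\S}) = \hat{\s}^2(\S) + o_P(n^{-1/2})$, $\hat{\tau}^2(\hat{\S}) = \hat{\tau}^2(\S) + o_P(n^{-1/2})$ together with $\hat{\s}^2(\S),\hat{\tau}^2(\S) = O_P(1)$ to see this difference is $o_P(n^{-1/2})$; Corollary~2 applied to $(Z,\y)$ and one more application of Slutsky's theorem then finish the proof.

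There is no deep difficulty here: the substance is the bookkeeping of the second paragraph and the check that the normalizers stay bounded away from zero (which is exactly why $D$ is taken compact inside $(0,\infty)$). The one point needing genuine care is obtaining the perturbation estimate at the $o_P(n^{-1/2})$ level rather than merely $o_P(1)$ -- this is precisely what forces the hypothesis $\|\hat{\S} - \S\| = o_P(n^{-1/2})$ for asymptotic normality, whereas $\|\hat{\S} - \S\| = o_P(1)$ already suffices for consistency.
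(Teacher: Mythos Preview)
Your proposal is correct and follows exactly the approach the paper indicates: the paper states only that the proposition ``is a direct consequence of (\ref{est1})--(\ref{est2}) and the results of Section 2,'' and you have supplied precisely those details --- reducing to the $\S=I$ case via $Z=X\S^{-1/2}$, bounding the perturbation factor $\|\S^{1/2}\hat{\S}^{-1}\S^{1/2}-I\|$ in terms of $\|\hat{\S}-\S\|$, and invoking Theorem~2 and Corollaries~1--2 together with Slutsky's theorem. One very minor slip: your operator-norm bound should read $\|\S\|^{1/2}\|\S^{-1}\|^{1/2}\|\hat{\S}^{-1}\|\,\|\hat{\S}-\S\|$ rather than with full powers of $\|\S\|$ and $\|\S^{-1}\|$, but since all factors are $O_P(1)$ this does not affect the conclusion.
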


\setcounter{rk}{0}
\begin{rk}
Part (i) of Proposition 1 implies if $\s^2,\tau^2$ are bounded, $d =
o(n^2)$, and $||\hat{\S} - \S|| = o_P(1)$, then $\hat{\s}^2(\hat{\S})$ and
$\hat{\tau}^2(\hat{\S})$ are weakly consistent for $\s^2$ and
$\tau^2$, respectively.  
\end{rk}

\begin{rk}
If $||\hat{\S} - \S|| = o_P(n^{-1/2})$ and the other conditions of
Proposition 1 are met, then $\hat{\s}^2(\hat{\S})$,
$\hat{\tau}^2(\hat{\S})$, and
$\hat{\tau}^2(\hat{\S})/\hat{\s}^2(\hat{\S})$ are asymptotically
normal with the same asymptotic variance as $\hat{\s}^2(\S)$,
$\hat{\tau}^2(\S)$, and $\hat{\tau}^2(\S)/\hat{\s}^2(\S)$, respectively.  The
condition $||\hat{\S} - \S|| = o_P(n^{-1/2})$ is quite strong.
However, \cite{bickel2008regularized} and \cite{cai2010optimal}
describe broad classes of covariance matrices $\S$ that can be
estimated at this rate.  For concreteness, we note that if the entries of $\x_i$ follow
one of many common time series models (e.g. $\textsc{AR}(k)$ for fixed
$k$), then there exist estimators $\hat{\S}$ such that $||\hat{\S} -
\S|| = o_P(n^{-1/2})$ when $d/n \to \rho \in (0,\infty)$.  \hfill $\Box$
\end{rk}

\subsection{Non-estimable $\S$}

Define $\tau_k^2 = \bb^T\S^k\bb$ and $m_k = d^{-1}\tr(\S^k)$, $k =
0,1,2,...$.  Then $\tau^2 = \tau_1^2$.  For general positive definite matrices $\S$, one easily checks that
\begin{eqnarray}\label{gE1}
\frac{1}{n}E||\y||^2 & = &   \s^2 + \tau_1^2 \\\label{gE2}
\frac{1}{n^2}E||X^T\y||^2 & = & \frac{d}{n}m_1 \s^2 + \frac{d}{n}m_1\tau_1^2
+ \left(1 + \frac{1}{n}\right)\tau_2^2
\end{eqnarray}
and
\begin{eqnarray*}
E\hat{\s}^2 & = & \frac{d(1 - m_1) + n + 1}{n+1}\s^2 + \frac{d(1 - m_1) + n + 1}{n+1}\tau_1^2 - \tau_2^2 \\
E\hat{\tau}^2 & = & \frac{d(m_1 - 1)}{n+1}\s^2 + \frac{d(m_1 - 1)}{n+1}\tau_1^2 + \tau^2_2.
\end{eqnarray*}  
Thus, if $\S \neq I$, then $\hat{\s}^2$, $\hat{\tau}^2$ are typically
{\em not} unbiased
estimators for $\s^2$, $\tau^2$, respectively.
More generally, it follows that if $\S \neq I$, then the expected value of the linear
combination $L(a_1,a_2) = a_1n^{-1}||\y||^2 + a_2n^{-2}||X^T\y||^2$
typically depends on $\s^2$, $\tau_1^2$, $\tau_2^2$, and $\tr(\S)$.
By contrast, as seen in Section 2, if $\S = I$, then $\tau^2 =
\tau_1^2 = \tau_2^2$ and  $EL(a_1,a_2)$
is determined by $\s^2$ and $\tau^2$ (in addition to $a_1$, $a_2$,
$d$, $n$); indeed, in the $\S = I$ case, this fact is precisely what
is leveraged to obtain unbiased  estimators for $\s^2$, $\tau^2$.
This suggests that an alternative
method  for estimating $\s^2$, $\tau^2$ may be
necessary when $\S$ is unknown and non-estimable.

In this section, we do not completely abandon our strategy of estimating $\s^2$,
$\tau^2$ by using linear combinations of $n^{-1}||\y||^2$ and
$n^{-2}||X^T\y||^2$.  Rather, we propose modified versions of
$\hat{\s}^2$ and $\hat{\tau}^2$ that are consistent and asymptotically
normal, provided $\bb$ and $\S$ satisfy certain conditions that have
appeared previously in the random matrix theory literature.  These
conditions are stated below.
\begin{itemize}
\item[(A)] As $d \to \infty$, the empirical distribution
of the eigenvalues of $\S$ converges weakly to a 
probability 
distribution with support contained in a compact subset of
$(0,\infty)$ and cumulative distribution function $H$.
\item[(B)]  Let 
\[
M_k = \int x^k \ dH(x) \ \mbox{ and } \ \Delta_k = \left|\frac{1}{\tau_0^2}\bb^T\S^k\bb - M_k\right|,
\]
where the distribution $H$ is given in condition (A).
Then, as $d \to \infty$,  
\begin{equation}\label{B}\Delta_k \to 0, \  \ k = 1,2,3. 
\end{equation} 
\end{itemize}
Condition (A) is fairly standard and is frequently assumed to hold in asymptotic analyses in random
matrix theory \citep{marchenko1967distribution, bai2004clt,
  bai2007asymptotics, elkaroui2008spectrum}.  The compact support
requirement in condition (A) can likely be relaxed; however, this is
not pursued further here.   Condition (B) is
more specialized and requires that the parameter $\bb$ interacts with
$\S$ as determined by (\ref{B}). In fact, while condition (B) is
sufficient for our consistency results in this section, we
require a stronger version of condition (B) (stated precisely in
Proposition 2 (ii))
to obtain asymptotic normality.   \cite{bai2007asymptotics} and
\cite{pan2008central} have proposed conditions that are closely
related to (B) and the strengthened version of (B) appearing
in Proposition 2 (ii) (in
fact, their conditions are stronger, if $H$ has
finite moments).  \cite{bai2007asymptotics} have noted
that under condition (A), if $\S$ is an
independent, orthogonally invariant random matrix (e.g. if $\S$ is a
Wishart matrix and $E(\S) = c I$, for some constant $c > 0$), then
condition (B) holds for any $\bb$.  Furthermore,
\citep{bai2007asymptotics} point out that for any $\S$ there must exist some $\bb$ such
that condition (B) holds; for instance, take $\bb = \bar{\u}$, where
$\bar{\u} = n^{-1/2}(\u_1 +
\cdots + \u_d)$ and $\u_1,...,\u_d$ are orthonormal eigenvectors of
$\S$.  More broadly, (B) may be interpreted as requiring that $\bb$ and $\S$ are
asymptotically free.  

Presently, we provide a heuristic to motivate estimators for $\s^2$ and
$\tau^2$ under conditions (A) and (B).  Following the method of
moments, the identities
\[
\frac{1}{d}E\tr\left(\frac{1}{n}X^TX\right) = m_1 \mbox{ and }
\frac{1}{d}E\tr\left\{\left(\frac{1}{n}X^TX\right)^2\right\} =
\frac{d}{n}m_1^2 + \left(1 + \frac{1}{n}\right)m_2
\]
suggest that
\[
\hat{m}_1 = \frac{1}{d}\tr\left(\frac{1}{n}X^TX\right) \mbox{ and }
\hat{m}_2 = \frac{n}{d(n+1)}
\tr\left\{\left(\frac{1}{n}X^TX\right)^2\right\} - \frac{1}{d(n+1)}\tr\left(\frac{1}{n}X^TX\right)^2
\]
are reasonable estimators for $m_1$ and $m_2$, respectively.   Now assume that $d,n$ are large
and $d/n \approx \rho \in [0,\infty)$.  Then, for $k = 1,2$, condition
(A) implies that $\hat{m}_k \approx m_k
\approx M_k$ and (B) implies $\tau_k^2 \approx \tau^2\hat{m}_k/\hat{m}_1$.  Combining these approximations with equations
(\ref{gE1})-(\ref{gE2}) yields
\begin{eqnarray} \label{approx1}
\frac{1}{n}E||\y||^2 & = & \s^2 + \tau^2\\ 
\frac{1}{n^2}E||X^T\y||^2 & \approx & \frac{d}{n}\hat{m}_1\s^2 +
\left\{\frac{d}{n}\hat{m}_1 + \left(1 + \frac{1}{n}\right)\frac{\hat{m}_2}{\hat{m}_1}\right\}\tau^2.\label{approx2}
\end{eqnarray}
Observe that the right-hand side of (\ref{approx1})-(\ref{approx2})
consists of linear
combinations of $\s^2$ and $\tau^2$, with
coefficients determined by the known quantities $d$, $n$,
$\hat{m}_1$, and $\hat{m}_2$. Thus, we are able to obtain {\em nearly}
unbiased estimators of $\s^2$ and $\tau^2$ by taking linear
combinations of $n^{-1}||\y||^2$ and $n^{-2}||X^T\y||^2$, with
coefficients determined by $d$, $n$,
$\hat{m}_1$, and $\hat{m}_2$.  In particular,  define the estimators 
\begin{eqnarray*}
\tilde{\s}^2 & = & L\left\{1 + \frac{d \hat{m}_1^2}{(n+1)\hat{m}_2},
- \frac{n\hat{m}_1}{(n+1)\hat{m}_2}\right\} \\ 
&= & \left\{1 +
\frac{d \hat{m}_1^2}{(n+1)\hat{m}_2}\right\}\frac{1}{n}||\y||^2 -
\frac{\hat{m}_1}{n(n+1)\hat{m}_2}||X^T\y||^2 \\
\tilde{\tau}^2 & = &
L\left\{-\frac{d\hat{m}_1^2}{(n+1)\hat{m}_2},\frac{n\hat{m}_1}{(n+1)\hat{m}_2}\right\}
\\ & = & -\frac{d\hat{m}_1^2}{n(n+1)\hat{m}_2}||\y||^2 + \frac{\hat{m}_1}{n(n
  + 1)\hat{m}_2}||X^T\y||^2.
\end{eqnarray*}
A basic calculation using (\ref{approx1})-(\ref{approx2}) suggests
that $E(\tilde{\s}^2) \approx \s^2$ and $E(\tilde{\tau}^2) \approx
\tau^2$. 

Proposition 2 summarizes some asymptotic properties of
$\tilde{\s}^2$ and $\tilde{\tau}^2$.  An outline of the proof, which
is fairly straightforward, may be  found in the Appendix. 

\begin{prop} 
Suppose that condition (A) holds, that $D \subseteq (0,\infty)$ is a
compact set, and that $\s^2, \tau^2 \in D$.  Suppose further that
there exist constants $c_1, c_2$ in $\R$ such that either $0 < c_1 < d/n
< c_2 < 1$ or $1 < c_1 < d/n < c_2 < \infty$, and suppose that $|n - d|>
9$.   Define $\tilde{\Delta}_k = \Delta_1 + |m_1 - M_1| + \cdots +
\Delta_k + |m_k - M_k|$, where $M_j$ and $\Delta_j$ are defined in
condition (B), and $m_j = d^{-1}\tr(\S^j)$.  
\begin{itemize}
\item[(i)] {\em [Consistency]}
\[
E(\tilde{\s}^2 - \s^2)^2,\ E(\tilde{\tau}^2 - \tau^2)^2 =  O\left(\frac{1 + \tilde{\Delta}_3}{n} +
  \tilde{\Delta}_2^2\right). 
\]
Thus, if condition (B) holds, then $|\tilde{\s}^2  - \s^2|, \
|\tilde{\tau}^2 - \tau^2| \to 0$ in mean-square.  
\item[(ii)] {\em [Asymptotic normality]}  Suppose that condition (B)
  holds, with the additional requirement that $\tilde{\Delta}_2 =
  o(n^{-1/2})$, and let
\begin{eqnarray*}
\tilde{\psi}_1^2 & = & 2\left\{\left(\frac{dm_1^2}{nm_2} +
  \frac{m_1m_3}{m_2^2} - 1\right)(\s^2 + \tau^2)^2 + \left(2 - \frac{m_1m_3}{m_2^2}\right)\s^4+
\frac{m_1m_3}{m_2^2}\tau^4 \right\} \\
\tilde{\psi}_2^2 & = & 2\left\{\left(\frac{dm_1^2}{nm_2} + \frac{m_1m_3}{m_2^2}\right)(\s^2
  + \tau^2)^2- \frac{m_1m_3}{m_2^2}\s^4 + \left(2 +
    \frac{m_1m_3}{m_2^2}\right)\tau^4\right\} \\
\tilde{\psi}_0^2 & = & \frac{2}{\s^8}\left\{\left(\frac{dm_1^2}{nm_2}
    + \frac{m_1m_3}{m_2^2}
  \right)(\s^2 + \tau^2)^4 -
\frac{m_1m_3}{m_2^2}\s^4(\s^2 + \tau^2)^2 \right. \\
&& \quad \quad  \left. - \left(1- \frac{m_1m_3}{m_2^2}
\right)\tau^4(\s^2 + \tau^2)^2\right\}.
\end{eqnarray*}
Then
\[
\sqrt{n}\left(\frac{\tilde{\s}^2 - \s^2}{\tilde{\psi}_1}\right), \
\sqrt{n}\left(\frac{\tilde{\tau}^2 - \tau^2}{\tilde{\psi_2}}\right), \ \sqrt{n}\left(\frac{\tilde{\tau}^2/\tilde{\s}^2 - \tau^2/\s^2}{\tilde{\psi_0}}\right)
  \leadsto N(0,1).  
\]
\end{itemize}
\end{prop}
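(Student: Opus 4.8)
The plan is to reduce everything to the $\S = I$ analysis of Section 2 via a ``plug-in'' perturbation argument, treating $\tilde{\s}^2$ and $\tilde{\tau}^2$ as small random perturbations of the oracle estimators one would use if $m_1, m_2$ (equivalently $M_1, M_2$) were known. Write $\tilde{\bth} = (\tilde{\s}^2, \tilde{\tau}^2)^T = \tilde{A}\,\T$, where $\tilde{A}$ is the $2\times 2$ matrix with entries built from $d$, $n$, $\hat{m}_1$, $\hat{m}_2$ read off from the definitions of $\tilde{\s}^2$, $\tilde{\tau}^2$, and let $\bar{A}$ be the analogous matrix with $\hat{m}_1, \hat{m}_2$ replaced by the population moments $m_1, m_2$. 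First I would record, via the Wishart moment formulas from the Supplemental Text, exact expressions for $E\hat{m}_1$, $E\hat{m}_2$ and their variances, and show $E|\hat{m}_k - m_k|^2 = O(1/n)$ for $k = 1,2$ under condition (A) (here the two-sided bound $0 < c_1 < d/n < c_2$, $c_i \ne 1$, and $|n-d| > 9$ guarantee the relevant denominators and moments are bounded away from $0$ and $\infty$). Combined with $m_k \approx M_k$ (condition (A)) and $\tau_k^2/\tau_0^2 \approx M_k$ (condition (B)), this gives $\|\tilde{A} - \bar{A}\| = O_P(\tilde{\Delta}_2 + n^{-1/2})$ after accounting for the $|m_k - M_k|$ and $\Delta_k$ slack absorbed into $\tilde{\Delta}_2$.

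For part (i), I would split $\tilde{\bth} - \bth = (\tilde{A} - \bar{A})\T + (\bar{A}\T - \bth)$. The second term is exactly the kind of linear combination analyzed in Section 2: using (\ref{gE1})--(\ref{gE2}) with $\tau_k^2$ replaced by $\tau^2 M_k$ up to error $\tau_0^2\Delta_k$, one checks $E\|\bar{A}\T - \bth\|^2 = O\{(1 + \tilde{\Delta}_3)/n + \tilde{\Delta}_2^2\}$ — the $\tilde{\Delta}_3$ appears because the variance computation for $n^{-2}\|X^T\y\|^2$ pulls in $\tau_3^2$, hence $\Delta_3$ and $|m_3 - M_3|$. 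The first term is handled by Cauchy--Schwarz: $E\|(\tilde{A}-\bar{A})\T\|^2 \le \{E\|\tilde{A}-\bar{A}\|^4\}^{1/2}\{E\|\T\|^4\}^{1/2}$, and $\|\T\|$ has bounded moments while $\|\tilde{A}-\bar{A}\|$ is $O_P(\tilde{\Delta}_2 + n^{-1/2})$ with matching moment bounds. Adding the pieces yields the stated rate, and consistency in mean-square follows when $\tilde{\Delta}_3 \to 0$.

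For part (ii), under the strengthened hypothesis $\tilde{\Delta}_2 = o(n^{-1/2})$, the term $(\tilde{A} - \bar{A})\T$ is $o_P(n^{-1/2})$ and hence asymptotically negligible after rescaling by $\sqrt{n}$, so by Slutsky it suffices to prove a CLT for $\sqrt{n}(\bar{A}\T - \bth)$ (and similarly for the ratio, via the same second-order Taylor expansion as in Example 2, with $h = g \circ \bar{A}$). That CLT is a direct application of Theorem 3 with $\S$ the true covariance: one checks $\g_k, \eta_k = O(1)$ for $k = 0,2,4,8$ using $\|\S\| = O(1)$ from condition (A) and boundedness of the entries of $\bar{A}$, so $\xi, \nu = O(1)$; and one computes $\Var\{h(\bar{A}\T)\}$ explicitly from the $\Cov(\T)$ formula (again invoking the Wishart moments, and replacing $\tau_k^2$ by $\tau^2 m_k$ with $m_k \approx M_k$), arriving at the variance proxies $\tilde{\psi}_1^2, \tilde{\psi}_2^2, \tilde{\psi}_0^2$; note the combination $m_1 m_3/m_2^2$ is precisely what $(\bar{A}\,\Cov(\T)\,\bar{A}^T)$ produces. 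The main obstacle is bookkeeping: propagating the three error sources ($\hat{m}_k$ estimation error at rate $n^{-1/2}$, the $m_k \to M_k$ discrepancy, and the $\Delta_k$ in condition (B)) cleanly through a variance calculation that already involves fourth moments of Wishart matrices, and verifying that replacing the random $\hat{m}_k$ by deterministic $m_k$ inside $\tilde{\psi}_i$ changes nothing at the $o_P(n^{-1/2})$ level. Everything else is an appeal to Section 2 plus Slutsky.
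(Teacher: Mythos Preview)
Your overall architecture matches the paper's: write $\tilde{\bth} = \tilde A\T$, replace the random coefficients $\hat m_k$ by the deterministic $m_k$ to get an oracle $\bar A\T$, apply Theorem 3 (which holds for general $\S$) to $\bar A\T$, and show the perturbation $(\tilde A - \bar A)\T$ is negligible. The variance bookkeeping and the appearance of $m_1m_3/m_2^2$ are also correctly anticipated.

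There is, however, a genuine gap in your treatment of the perturbation term. You state $E|\hat m_k - m_k|^2 = O(1/n)$, i.e.\ $\hat m_k - m_k = O_P(n^{-1/2})$. In fact the correct rate is one order faster: $E(\hat m_k - m_k)^2 = O(n^{-2})$, so $\hat m_k - m_k = O_P(n^{-1})$. (For $\hat m_1$ this is an elementary variance calculation; for $\hat m_2$ it is the $O(1)$ fluctuation of linear spectral statistics of Wishart matrices, or a direct fourth-moment computation from Proposition~S1. This is also where the inverse-moment bound $E\hat m_2^{-(2+r)} = O(1)$, hence the conditions $|n-d|>9$ and $d/n$ bounded away from $1$, are actually used.) With your slower rate, $(\tilde A - \bar A)\T$ is only $O_P(n^{-1/2})$, which is \emph{not} $o_P(n^{-1/2})$, and your Slutsky step in part~(ii) would fail. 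Your stated justification for the negligibility --- that it follows from the hypothesis $\tilde\Delta_2 = o(n^{-1/2})$ --- is a non-sequitur: $\tilde\Delta_2$ is a deterministic quantity built from $\Delta_k$ and $|m_k - M_k|$, whereas $\tilde A - \bar A$ depends only on the random fluctuation $\hat m_k - m_k$ and has nothing to do with $\tilde\Delta_2$. The assumption $\tilde\Delta_2 = o(n^{-1/2})$ is needed elsewhere: it kills the \emph{bias} $E(\bar A\T) - \bth = O(\tilde\Delta_2)$ at the $\sqrt n$ scale, not the plug-in error $\tilde A - \bar A$.

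Once you correct the rate to $E(\hat m_k - m_k)^2 = O(n^{-2})$, the decomposition gives $E\{\tilde{\s}^2(\hat\m)-\tilde{\s}^2(\m)\}^2 = O(n^{-2})$ directly (matching the paper's (\ref{prop2a})), and $(\tilde A - \bar A)\T = O_P(n^{-1}) = o_P(n^{-1/2})$ so that Slutsky applies cleanly in part~(ii). Everything else in your outline is sound and essentially identical to the paper's argument.
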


\setcounter{rk}{0}

\begin{rk}
The conditions in Proposition 2 that require $|n - d|> 9$ and $d/n$ to
be bounded away from 1
are related to the fact that $\hat{m}_2^{-1}$ appears in both $\tilde{\s}^2$ and
$\tilde{\tau}^2$.  In particular, the mean-squared error of
$\tilde{\s}^2$ and $\tilde{\tau}^2$ may be
infinite if $n - d$ is not large enough.
\end{rk}

\begin{rk}
The condition $\tilde{\Delta}_2 = o(n^{-1/2})$ in part (ii) of
Proposition 2 is quite strong.  For instance, if $\S$ is a sample
covariance matrix formed from iid $N(0,\s_0^2)$ data with a constant
aspect ratio, then condition
(B) is satisfied, but $\tilde{\Delta}_2 \neq o(n^{-1/2})$.  On the
other hand, if $\S$ is a constant multiple of the identity matrix,
then $\tilde{\Delta}_2 = o(n^{-1/2})$.  We emphasize that only
conditions (A) and (B) are required for $\tilde{\s}^2$ and
$\tilde{\tau}^2$ to be consistent; $\tilde{\Delta}_2 = o(n^{-1/2})$ is
required for asymptotic normality.  
\end{rk}

\begin{rk}
If $\S = I$, then $m_1 = m_2 = m_3 = 1$ and $\tilde{\psi}_j^2 =
\psi_j^2$, $j = 0,1,2$, where $\psi_j^2$ are given in
(\ref{psi1})-(\ref{psi2}) and (\ref{psi0}).  In other words, if $\S = I$,
then the asymptotic variance of $\tilde{\s}^2$, $\tilde{\tau}^2$, and
$\tilde{\tau}^2/\tilde{\s}^2$ is the same as that of $\hat{\s}^2$, $\hat{\tau}^2$, and $\hat{\tau}^2/\hat{\s}^2$, respectively.
This is driven by the fact that if $d/n \to \rho \in (0,\infty)$, then
$|\hat{m}_k - m_k|$ converges at rate $n^{-1}$. \hfill $\Box$
\end{rk}

\section{Numerical results}

In this section, we study the performance of the proposed estimators
for $\s^2$, $\tau^2$, and the signal-to-noise ratio $\tau^2/\s^2$ via
simulation.  We consider three examples.  In the first example, we
report the results of a simulation study that illustrates the 
performance of the estimators from Section 2 (for $\S = I$) and Section
3.2 (unknown, non-estimable $\S$); the predictors
$\x_i$ are generated from various distributions (including
non-normal distributions) that are described below.  In the second
example, we compare the performance of $\hat{\s}^2 = \hat{\s}^2(I)$ to that of
$\hat{\s}_0^2 = (n-d)^{-1}||\y - X\hat{\bb}_{ols}||^2$ in settings
where $d < n$.  In the final example, we compare the performance of
estimators proposed in this paper to that of the scaled lasso and MC+
estimators for $\s^2$.  These estimators for $\s^2$ were proposed by
\cite{sun2011scaled} for settings where $\bb$ is sparse; in our
simulation study, we consider cases where $\bb$ is sparse and non-sparse.

\subsection{Example 1}

In this example, $d = 1000$ and the predictors $\x_i \in \R^{1000}$ were generated
according to one of three distributions.  In the first setting, $\x_i
\iidsim N(0,I)$.  In the second setting, we generated a $(2d) \times d \
\ (2000 \times
1000)$ random matrix $Z$ with iid $N(0,1)$ entries and took $\S =
(2d)^{-1}Z^TZ$; the iid predictors $\x_i$ were then generated according to
a $N(0,\S)$ distribution (the same matrix $\S$ was used for all datasets
generated under this setting).  In the third setting, the individual
predictors $x_{ij}$, $i = 1,...,n$, $j = 1,...,d$, were iid random
variables taking values in $\{\pm1\}$ with $P(x_{ij} = 1) = P(x_{ij} =
-1) = 0.5$.  

To generate the parameter $\bb \in \R^{1000}$, we created a
1000-dimensional vector with the first $d/2 = 500$ coordinates iid
$\mbox{uniform}(0,1)$ and the remaining $d/2 = 500$
coordinates iid $N(0,1)$;  $\bb$ was obtained by
standardizing this vector so that $||\bb||^2 = \tau_0^2 = 1$
(the same $\bb$ was used for all simulated datasets in this example).
The residual variance was fixed at $\s^2 = 1$ and we considered
datasets with $n = 500$ and $n = 1000$ observations.    

For each setting in this example, we generated 500 independent
datasets and computed the estimators $\hat{\s}^2 = \hat{\s}^2(I)$,
$\hat{\tau}^2 = \hat{\tau}^2(I)$,
$\hat{\tau}^2/\hat{\s}^2 = \hat{\tau}^2(I)/\hat{\s}^2(I)$ and
$\tilde{\s}^2$, $\tilde{\tau}^2$, $\tilde{\tau}^2/\tilde{\s}^2$ (the
estimators proposed in Section 2 and Section 3.2, respectively) for
each dataset.  Recall that the estimators from Section 2 were derived
under the assumption that $\x_i \sim N(0,I)$ and the estimators from
Section 3.2 were derived under the assumption that $\x_i \sim
N(0,\S)$, where $\S$ satisfies conditions (A)-(B).  Summary statistics
for the various estimators are reported in Table 1.  

\begin{center}
\begin{table}[h]
\begin{tabular}{ll||r|r||r|r||r|r}
& & \multicolumn{2}{|c||}{$\x_i \sim N(0,I)$}  & \multicolumn{2}{|c||}{$\x_i \sim N(0,\S)$} & \multicolumn{2}{|c}{$\x_i \in
\{\pm 1\}$ binary} \\ 
Estimator & $n$ & Mean & Std. Error & Mean & Std. Error & Mean &
Std. Error \\ \hline
$\hat{\s}^2(I)$ & 500 & 1.0118 & 0.1999 (0.2000) & 0.5552 & 0.2839 &1.0079 & 0.1976 \\
& 1000 &1.0003 & 0.1092 (0.1095) & 0.5428 & 0.1576 & 1.0035 & 0.1076 \\ \hline
$\tilde{\s}^2$ & 500 & 1.0120 & 0.2005 (0.2000) & 1.0283 & 0.1832 & 1.0039 &
0.1984 \\
& 1000 & 1.0003  & 0.1096 (0.1095) & 1.0237 & 0.1017 & 1.0014 & 0.1077  \\ \hline \hline 
$\hat{\tau}^2(I)$ & 500 & 0.9847 & 0.2364 (0.2366) & 1.4182 & 0.3396  & 0.9937 &
0.2442 \\
& 1000 & 0.9986 & 0.1408 (0.1414) & 1.4408 & 0.2007 & 1.0015 & 0.1402 \\ \hline
$\tilde{\tau}^2$ & 500 & 0.9846  & 0.2366 (0.2366) & 0.9450 & 0.2261 & 0.9977 & 0.2452  \\
& 1000 & 0.9986  & 0.1410 (0.1414) & 0.9600  & 0.1335 & 1.0036 & 0.1403 \\ \hline \hline 
$\hat{\tau}^2(I)/\hat{\s}^2(I)$ & 500 & 1.0687 & 0.5329 (0.4195) & 1.5685 &
22.6089 & 1.0801 & 0.5262 \\
& 1000 & 1.0234 & 0.2531 (0.2366) & 3.0488  & 2.5593 & 1.0212 & 0.2415  \\ \hline
$\tilde{\tau}^2/\tilde{\s}^2$ & 500 & 1.0694 & 0.5371 (0.4195) & 0.9881 & 0.4315 &
1.0901 & 0.5343 \\
& 1000 & 1.0236 & 0.2538 (0.2366) & 0.9573 & 0.2209 & 1.0256 & 0.2426 \\ 
\end{tabular}
\caption{Summary statistics for Example 1 ($d =  1000$).  Means and standard errors
  of various estimators, computed over 500 independent datasets for
  each configuration.  In each setting, $\s^2 = \tau^2 =
  \tau^2/\sigma^2 = 1$; thus, unbiased estimators should have mean
  close to 1. In the standard error column corresponding to $\x_i \sim
  N(0,I)$, numbers
  in parentheses are theoretically predicted standard errors (denoted
  $\psi_1$, $\psi_2$, and $\psi_0$ in the text;
  see Corollaries 1-2 and Proposition 2).  Theoretically predicted
  standard errors for $\x_i \sim N(0,\S)$ and $\x_i \in
  \{\pm 1\}$ binary are not known; more details may be found in the
  discussion in Section 4.1.}
\end{table}
\end{center}

One of the more striking aspects of the results reported
in Table 1 is the consistency and robustness of the estimators
$\tilde{\s}^2$, $\tilde{\tau}^2$, and $\tilde{\tau}^2/\tilde{\s}^2$.
Proposition 2 suggests that these estimators might be expected to
perform well when $\x_i \sim
N(0,I)$ and $\x_i \sim N(0,\S)$; none of our theoretical results apply
to the case where $\x_i \in \{\pm1\}$ is binary.  In the settings where
$\Cov(\x_i) = I$ and $\x_i \in \{\pm1\}$ is binary, the performance of $\tilde{\s}^2$, $\tilde{\tau}^2$, and $\tilde{\tau}^2/\tilde{\s}^2$ is nearly indistinguishable from that of
$\hat{\s}^2(I)$, $\hat{\tau}^2(I)$, and $\hat{\tau}^2(I)/\hat{\s}^2(I)$.  On the other hand, when
$\Cov(\x_i) = \S = (2d)^{-1}Z^TZ$, the estimators $\hat{\s}^2(I)$,
$\hat{\tau}^2(I)$, and $\hat{\tau}^2(I)/\hat{\s}^2(I)$ break down
significantly (their mean is far from the actual value $\s^2 = \tau^2 =
\tau^2/\s^2 = 1$), while $\tilde{\s}^2$, $\tilde{\tau}^2$, and
$\tilde{\tau}^2/\tilde{\s}^2$  still perform effectively.  The estimators
$\hat{\s}^2(I)$, $\hat{\tau}^2(I)$, and
$\hat{\tau}^2(I)/\hat{\s}^2(I)$ were developed under the assumption
that $\Cov(\x_i) = I$.  Thus, their diminished performance when
$\Cov(\x_i) \neq I$ is not unexpected.  The dramatically high standard
error 22.6089 for $\hat{\tau}^2(I)/\hat{\s}^2(I)$, when $\x_i \sim
N(0,\S)$ and $n = 500$ is indicative of instability when $\hat{\s}^2$ is
very small;  it also serves as a prompt to point out that our
estimators for $\s^2$ and $\tau^2$ can take both positive and negative
values.  Since $\s^2,\tau^2 \geq 0$, negative values for the
estimators may be undesirable.  In practice, one might choose to
implement special procedures for handling negative estimates of these
quantities; however, we take no such steps here.  In this example, the
only negative estimates of $\s^2$ and $\tau^2$ occurred for
$\hat{\s}^2(I)$ when $\x_i \sim N(0,\S)$: for $n = 500$, there were 18
datasets (out of 500) where $\hat{\s}^2(I) < 0$; for $n = 1000$, there
was one dataset where $\hat{\s}^2(I) < 0$.  

For $\x_i \sim N(0,I)$, Table 1 indicates that the empirical standard errors of the estimators
for $\s^2$ and $\tau^2$ are extremely close to the values predicted by
Corollary 1 and Proposition 2 (ii)  (denoted $\psi_1$ and $\psi_2$,
respectively; these values are displayed in parentheses in
Table 1). For the estimators of the signal-to-noise ratio $\tau^2/\s^2$, the
agreement between the empirical standard errors and the theoretically
predicted standard error $\psi_0$ (see Corollary 2 and Proposition
2 (ii)) is less compelling.   For $n = 500$, the empirical
standard errors for estimates of $\tau^2/\s^2$ are roughly 25\% larger
than the theoretically predicted standard errors.  For $n = 1000$, the empirical and
theoretical values are closer (they differ by approximately 10\%);
however, the discrepancy is still substantially larger than that for
estimates of $\s^2$ and $\tau^2$.  Figures 1 and 2 contain histograms
of the estimators for $\s^2$, $\tau^2$, and $\tau^2/\s^2$.  Normal
density plots with mean 1 (the actual value of $\s^2$, $\tau^2$, and
$\tau^2/\s^2$ in this example) and variance $\psi_1^2$, $\psi_2^2$,
and $\psi_0^2$ are superimposed on the histograms.  The histograms and
normal densities seem to agree quite well, as predicted by Corollaries
1-2 and Proposition 2.  

\begin{center}
\begin{figure}[h]
\includegraphics[width=\textwidth]{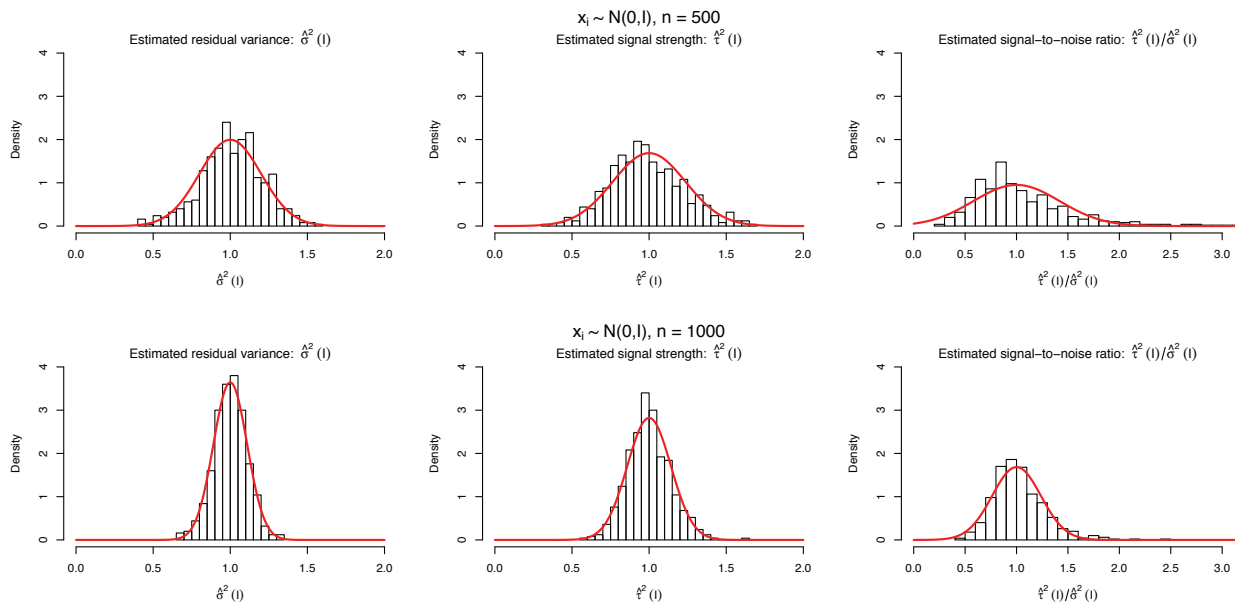}
\caption{Example  1 ($d = 1000$).  Histograms and normal density plots for the estimators
  $\hat{\s}^2(I)$, $\hat{\tau}^2(I)$, and
  $\hat{\tau}^2(I)/\hat{\s}^2(I)$, with $\x_i \sim N(0,I)$.  Top row, $n = 500$; bottom row, $n
  = 1000$.  Superimposed normal density plots have mean 1
  and variance $\psi_1^2$, $\psi_2^2$, and $\psi_0^2$ for
  $\hat{\s}^2(I)$, $\hat{\tau}^2(I)$, and
  $\hat{\tau}^2(I)/\hat{\s}^2(I)$, respectively. Corollaries 1-2 suggest that the distribution of the various estimators
  should be approximately equal to that of the corresponding normal distribution.}
\end{figure}
\end{center}
\begin{center}
\begin{figure}[h]
\includegraphics[width=\textwidth]{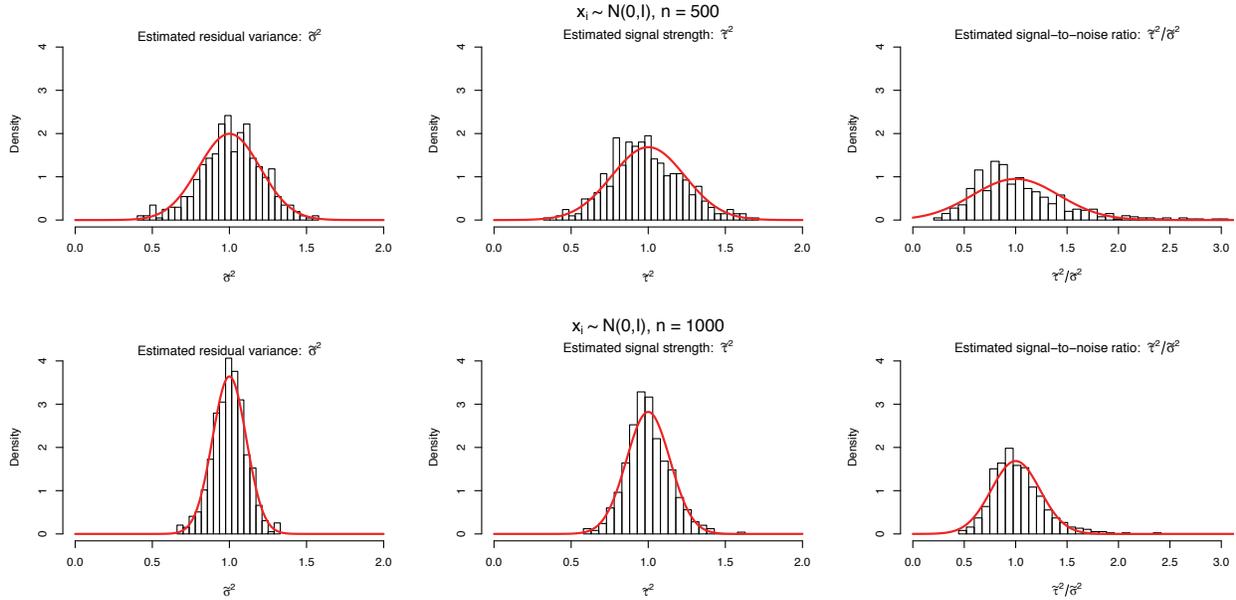}
\caption{Example  1 ($d = 1000$).  Histograms and normal density plots for the estimators
  $\tilde{\s}^2$, $\tilde{\tau}^2$, and
  $\tilde{\tau}^2/\tilde{\s}^2$, with $\x_i \sim N(0,I)$.  Top row, $n = 500$; bottom row, $n
  = 1000$.  Superimposed normal density plots have mean 1
  and variance $\psi_1^2$, $\psi_2^2$, and $\psi_0^2$ for
  $\tilde{\s}^2$, $\tilde{\tau}^2$, and
  $\tilde{\tau}^2/\tilde{\s}^2$, respectively. 
  Proposition 2 (ii) suggests that the distribution of the various estimators
  should be approximately equal to that of the corresponding normal distribution.}
\end{figure}
\end{center}

For $\x_i \sim N(0,\S)$, with $\S = (2d)^{-1}Z^TZ$, one might hope to use
Proposition 2 (ii)  to derive
theoretically predicted standard errors for the estimators
$\tilde{\s}^2$, $\tilde{\tau}^2$, and $\tilde{\tau}^2/\tilde{\s}^2$.  However, in order for Proposition 2 (ii) to apply, we must
have $\sqrt{n}\left|\bb^T\S^k\bb - ||\bb||^2d^{-1}\tr(\S^k)\right| \approx
0$, for $k = 1,2$.  In
this example, we had $\bb^T\S\bb = 0.9831$ and $\bb^T\S^2\bb =
1.4436$, while $||\bb||^2d^{-1}\tr(\S^2) = 1.0003$ and
$||\bb||^2d^{-1}\tr(\S) = 1.5018$.  Thus,
\begin{equation} \label{ex1a} \begin{array}{cc} 
\sqrt{500}\left|\bb^T\S\bb  - \dfrac{||\bb||^2}{d}\tr(\S)\right|  =
0.3839, & \sqrt{1000}\left|\bb^T\S\bb  -
  \dfrac{||\bb||^2}{d}\tr(\S)\right|  = 0.5429 \\ \\
\sqrt{500}\left|\bb^T\S^2\bb  - \dfrac{||\bb||^2}{d}\tr(\S^2)\right|  =
1.3002, & \sqrt{1000}\left|\bb^T\S^2\bb  -
  \dfrac{||\bb||^2}{d}\tr(\S^2)\right|  = 1.8387,  \end{array}
\end{equation}
which suggests that the applicability of Proposition 2 (ii) may be
questionable.  Moreover, asymptotically, if $\S = (2d)^{-1}Z^TZ$ and $d \to \infty$,
then the it is known that conditions of Proposition 2 (ii) are {\em not} satisfied (see Remark 2,
following Proposition 2).  Nevertheless, we believe it is
informative to compare the empirical distribution of the estimators
$\tilde{\s}^2$, $\tilde{\tau}^2$, and $\tilde{\tau}^2/\tilde{\s}^2$,
to normal distributions with mean 1 and variance $\tilde{\psi}_1^2$,
$\tilde{\psi}_2^2$, and $\tilde{\psi}_0^2$, respectively, as specified by
Proposition 2 (ii); corresponding histograms and normal density plots
may be found in Figure 3.   Upon visual inspection of Figure 3, the fit between
the sampling distribution of the estimators and the corresponding
normal distribution appears to be reasonably good.  The results in Table 1
indicate that there is slightly more bias in the estimators when $\x_i
\sim N(0,\S)$ than when $\x_i \sim N(0,I)$; this may be a result of the
discrepancies (\ref{ex1a}).
\begin{center}
\begin{figure}[h]
\includegraphics[width=\textwidth]{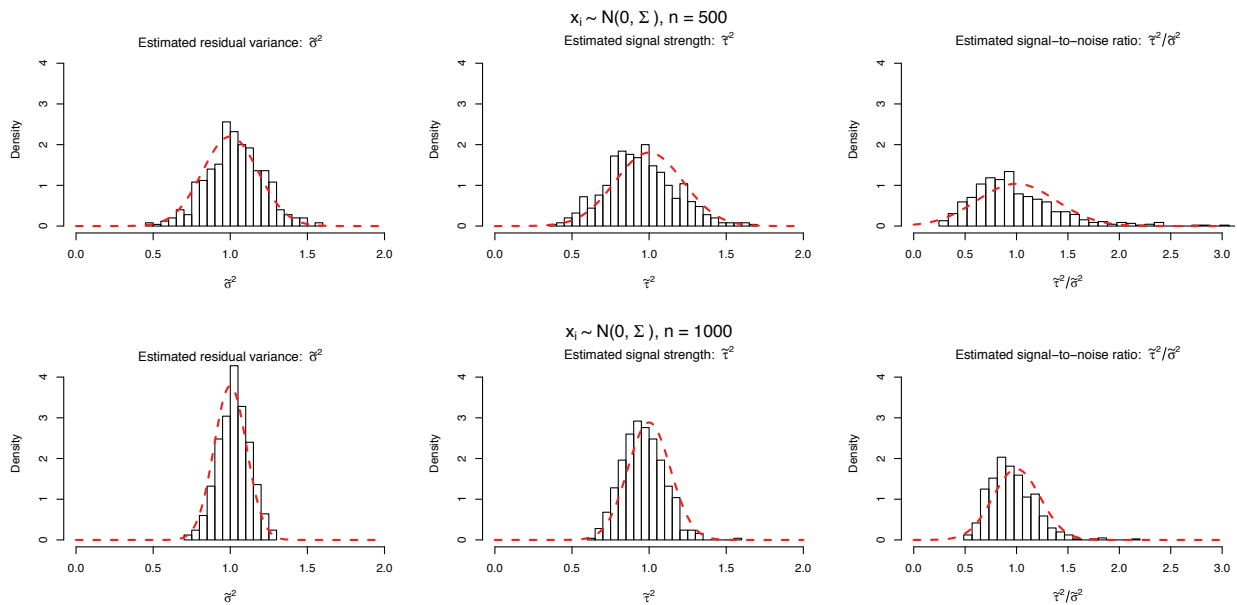}
\caption{Example  1 ($d = 1000$).  Histograms and normal density plots for the estimators
  $\tilde{\s}^2$, $\tilde{\tau}^2$, and
  $\tilde{\tau}^2/\tilde{\s}^2$, with $\x_i \sim N(0,\S)$ and $\S = (2d)^{-1}Z^TZ$.  Top row, $n = 500$; bottom row, $n
  = 1000$.  Superimposed normal density plots have mean 1
  and variance $\tilde{\psi}_1^2$, $\tilde{\psi}_2^2$, and $\tilde{\psi}_0^2$ for
  $\tilde{\s}^2$, $\tilde{\tau}^2$, and
  $\tilde{\tau}^2/\tilde{\s}^2$, respectively.  For $n = 500$,
  $\tilde{\psi}_1 = 0.1835$, $\tilde{\psi}_2 = 0.2211$, and
  $\tilde{\psi}_0 = 0.3841$; for $n = 1000$,
  $\tilde{\psi}_1 = 0.1054$, $\tilde{\psi}_2 = 0.1383$, and
  $\tilde{\psi}_0 = 0.2290$.  See Table 1 for empirical standard
  errors of estimators.}
\end{figure}
\end{center}

Though this paper contains no theoretical results describing the behavior of our
estimators for non-normal data, the numerical results in this example
suggest that some of the methods proposed here may be
successfully applied in broader circumstances.  The results in Table 1
for $\x_i \in \{\pm1\}$ binary show that all of the estimators
considered in this example are nearly unbiased and have standard
errors that are similar to the corresponding standard errors in the case where $\x_i \sim N(0,I)$.  Figure
4 contains histograms for the estimators $\tilde{\s}^2$,
$\tilde{\tau}^2$, and $\tilde{\tau}^2/\tilde{\s}^2$, with $\x_i \in
\{\pm 1\}$ binary.  Normal density plots
with mean 1 and variance $\psi_1^2$, $\psi_2^2$, and $\psi_0^2$ are
superimposed on the histograms; these are the normal
densities corresponding to the asymptotic distribution of the
estimators in the case where $\x_i \sim N(0,I)$ (see Corollaries 1-2
and Proposition 2 (ii)).  The histograms appear to match the densities
quite well.  
\begin{center}
\begin{figure}[h]
\includegraphics[width=\textwidth]{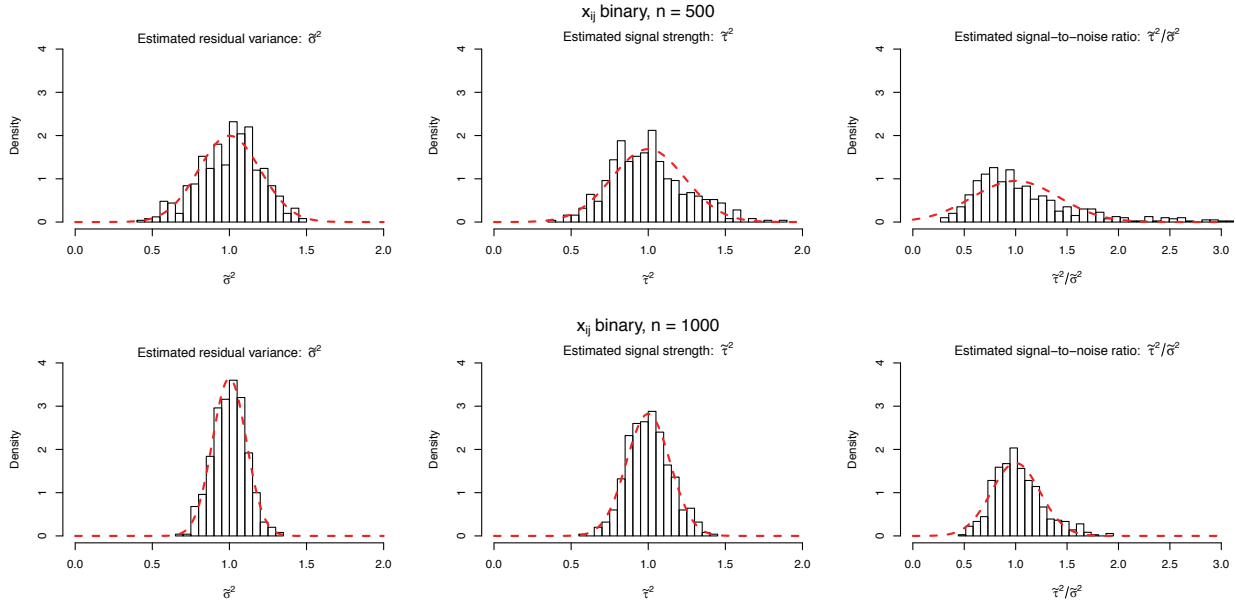}
\caption{Example  1 ($d = 1000$).  Histograms and normal density plots for the estimators
  $\tilde{\s}^2$, $\tilde{\tau}^2$, and
  $\tilde{\tau}^2/\tilde{\s}^2$, with $\x_i \in \{\pm 1\}$ binary.  Top row, $n = 500$; bottom row, $n
  = 1000$.  Superimposed normal density plots have mean 1
  and variance $\psi_1^2$, $\psi_2^2$, and $\psi_0^2$ for
  $\tilde{\s}^2$, $\tilde{\tau}^2$, and
  $\tilde{\tau}^2/\tilde{\s}^2$, respectively. }
\end{figure}
\end{center}
\subsection{Example 2}

When $d < n$, $\hat{\s}_0^2 = (n-d)^{-1}||\y - X\hat{\bb}_{ols}||^2$
is a widely used estimator for $\s^2$.  In Remark 2 following
Theorem 2, we noted that the variance of $\hat{\s}_0^2$ does not
depend on $\tau^2$, while the variance of $\hat{\s}^2(I)$ and the
other estimators for $\s^2$ proposed in this paper increases with
$\tau^2$.  On the other hand, as $d/n \uparrow 1$, the variance of
$\hat{\s}_0^2$ diverges, while that of $\hat{\s}^2(I)$ remains
bounded.  In this brief example, we took $\x_i \sim N(0,I)$, $\s^2 = \tau^2
= 1$, and $n = 500$, and investigated the numerical performance of
$\hat{\s}^2(I)$ and $\hat{\s}^2_0$ for various values of $d < n$.
Five hundred independent datasets were generated and the estimators
were computed for each dataset.  
Summary statistics are reported in Table 2.  

\begin{center}
\begin{table}[h]
\begin{tabular}{ll|r|r|r}
& Estimator & $d = 250$ & $d = 350$ & $d = 450$ \\ \hline
Mean & $\hat{\s}^2(I)$ & 0.9984 & 0.9986 & 0.9965 \\ \hline
& $\hat{\s}^2_0$ & 0.9979 & 1.0004 & 0.9902 \\ \hline \hline
Standard error & $\hat{\s}^2(I)$ & 0.1290 & 0.1389 & 0.1457 \\ \hline
& $\hat{\s}^2_0$ & 0.0901 & 0.1141 & 0.1947  
\end{tabular}
\caption{Example 2 ($n = 500$, $\s^2 = 1$).  Means and standard errors of
  estimators for $\s^2$, based on 500 independent datasets.  }
\end{table}
\end{center}

Table 2 indicates that in each setting, the estimators are nearly
unbiased: the means of the estimators are close to 1.  The empirical standard errors
of $\hat{\s}^2(I)$ and $\hat{\s}^2_0$ both increase with $d$;
however, the standard errors increase more rapidly for
$\hat{\s}^2_0$.  At $d = 250, 350$, the empirical standard
error of $\hat{\s}^2_0$ is smaller than that of $\hat{\s}^2(I)$; at $d
= 450$, the trend reverses and the empirical standard error of $\hat{\s}^2(I)$ is smaller
than that of $\hat{\s}^2_0$.  As $d$ becomes closer to
$n = 500$, the empirical standard error of $\hat{\s}^2(I)$ should
remain bounded, while that of $\hat{\s}^2_0$ should diverge to $\infty$. The
results reported in this example suggest that even when $d < n$, there
may be settings where the estimators proposed in this paper may be preferred to over other commonly used estimators for $\s^2$; for
instance, when $d < n$, but $d$ is very close to $n$.  

\subsection{Example 3}

\cite{sun2011scaled} proposed methods for estimating $\s^2$ in
high-dimensional linear models that are very effective when $\bb$ is
sparse.  These methods use modified versions of lasso
\citep{tibshirani1996regression} and MC+ \citep{zhang2010nearly},
(referred to as ``scaled lasso'' and ``scaled MC+,'' respectively)  to
simultaneously estimate $\s^2$ and $\bb$.  Let
$\hat{\s}_{\lasso}^2$ and $\hat{\s}_{\MCP}^2$ denote the scaled lasso
and scaled MC+ estimators for $\s^2$.  In this example, we
compared the performance of $\hat{\s}_{\lasso}^2$ and
$\hat{\s}_{\MCP}^2$ with some of the estimators for $\s^2$ proposed in
this paper, in settings where $\bb$ was both sparse and non-sparse.  

With $d = 3000$, the predictors in this example were generated according to $\x_i \sim
N(0,\S)$, where $\S = (\s_{ij})$ and $\s_{ij} = 0.5^{|i - j|}$.  We
fixed $\s^2 = 1$.   Sparse and non-sparse (dense) parameters $\bb \in \R^d$
were generated as follows.  First, to generate the sparse $\bb$, five
random multiples of 25 between $25$ and $d - 25 = 2975$ were
selected.  That is, we selected $k_1,...,k_5$ from
$\{25,50,75...,2975\}$ independently and uniformly at random.  Next, we took $\bb_0\in \R^d$ to be the vector with the 7-dimensional
sub-vector $(1,2,3,4,3,2,1)^T$ centered at the coordinates corresponding
to $k_1,...,k_5$ (so that the $k_j$-th entry of $\bb_0$ was 4, the
$(k_j \pm 1)$-th was 3, etc.); the remaining entries in
$\bb_0$ were set equal to 0.  We then set $\bb
=\{3/(\bb_0^T\S\bb_0)\}^{1/2}\bb_0$, so that $\tau_1^2 = \bb^T\S\bb =
3$.  Note that this sparse $\bb$ was generated only once; in other words, the same sparse
$\bb$ was use throughout the simulations in this example.  To generate
the dense $\bb$ used in this example, we followed the same procedure as for the sparse
$\bb$, except that in $\bb_0$, the 7-dimensional subvector $(1,2,3,4,3,2,1)^T$ was
centered at coordinates corresponding to {\em each} multiple of 25
between 25 and $2975$.  Notice that for the sparse $\bb$, we had
$||\bb||_0 = 7 \times 5 = 35$, where $||\bb||_0$ denotes the number of non-zero
coordinates in $\bb$,  and for the dense $\bb$ we had $||\bb||_0 =
7\times(d/25 - 1) =  833$; however, $\tau_1^2 = \bb^T\S\bb = 3$ was the
same for both the sparse and dense $\bb$.   In this simulation study, we considered datasets
with $n = 600$ and $n = 2400$ observations.  With sparse $\bb$ and $n
= 300$, the simulation settings in this example are very similar to those in Example 1 from
Section 4.1 of \citep{sun2011scaled}.  

Under each of the settings described above, we generated $100$ independent
datasets and, for each simulated dataset, we computed
$\hat{\s}^2_{\lasso}$, $\hat{\s}^2_{\MCP}$, 
$\hat{\s}^2(\hat{\S})$, $\hat{\s}^2(\S)$, and $\tilde{\s}^2$.  For the scaled lasso
and MC+ estimators, we used the shrinkage parameter $\l_0 =
\sqrt{\log(d)/n}$ (this value of $\l_0$ yielded the best performance in the
numerical examples in \citep{sun2011scaled}).  The scaled MC+
estimator requires specification of an additional parameter $\g$;
following \citep{sun2011scaled}, we
took $\g = 2/[1 - \max_{i,j} \{\X_i^T\X_j/(||\X_i||||\X_j||)\}]$,
where $\X_j$ denotes the $j$-th column of $X$.  The estimator
$\hat{\s}^2(\hat{\S})$ was introduced in Section 3.1 of this paper.
Here we take advantage of the AR(1) structure of $\S$ and set
$\hat{\S} = (\hat{\s}_{ij})$, where $\hat{\s}_{i,j} =
\hat{\a}^{|i-j|}$ and
\[
\hat{\a} = \frac{1}{n(d-1)}\sum_{i = 1}^n \sum_{j = 2}^d x_{ij}x_{i(j-1)}.
\]
We view the estimator $\hat{\s}^2(\S)$ as an ``oracle estimator,''
which utilizes full knowledge of actual covariance matrix $\S$;
this estimator should perform similarly to the estimator
$\hat{\s}^2(I)$ in settings where $\Cov(\x_i) = I$ and $\tau_1^2 =
3$ (see the discussion in Section 3.1).  Finally, the estimator
$\tilde{\s}^2$ is the ``unknown covariance'' estimator from Section
3.2.  Recall that our theoretical performance guarantees for
$\tilde{\s}^2$ (Proposition 2) require that $\left|\bb^T\S^k\bb -
||\bb||^2 \tr(\S^k)/d\right| \approx 0$, for $k = 1,2$.  In this
example, for the sparse $\bb$ we had
\begin{equation}\label{ex3a}
\frac{||\bb||^2}{d} \tr(\S) - \bb^T\S\bb = 
-1.7551 \ \mbox{ and }\  \frac{||\bb||^2}{d} \tr(\S^2) - \bb^T\S^2\bb 
= -5.5409
\end{equation}
(the corresponding quantities are essentially the same for the dense
$\bb$).  Summary statistics for the various estimators computed in this numerical study are reported in
Table 3.  

\begin{center}
\begin{table}[h]
\begin{minipage}[b]{0.4\linewidth}\centering
\begin{tabular}{ll|r|r}
\multicolumn{4}{c}{{Sparse $\bb$} } \\ 
&&  Mean & Std. Err. \\ \hline 
 $n = 600$ & $\hat{\s}^2_{\lasso}$ &  1.1117 & 0.0651 \\ \hline 
& $\hat{\s}^2_{\MCP}$ & 1.0477 & 0.0633 \\ \hline 
& $\hat{\s}^2(\hat{\S})$ & 0.9704 & 0.5049 \\ \hline
& $\hat{\s}^2(\S)$ & 0.9693 & 0.5021 \\ \hline 
& $\tilde{\s}^2$ & -0.6023 & 0.5182 \\ \hline \hline 
 $n = 2400$ & $\hat{\s}^2_{\lasso}$ & 1.0310 & 0.0295 \\ \hline 
& $\hat{\s}^2_{\MCP}$ & 1.0060 & 0.0293 \\ \hline 
& $\hat{\s}^2(\hat{\S})$ & 0.9808 & 0.1633 \\ \hline
& $\hat{\s}^2(\S)$ & 0.9809 & 0.1631\\ \hline 
& $\tilde{\s}^2$ & - 0.5827 & 0.2084
\end{tabular}
\end{minipage}\hspace{.25in}
\begin{minipage}[b]{0.4\linewidth} \centering
\begin{tabular}{ll|r|r}
\multicolumn{4}{c}{Dense $\bb$} \\ 
&&  Mean & Std. Err. \\ \hline 
 $n = 600$ & $\hat{\s}^2_{\lasso}$ & 3.2600 & 0.2070 \\ \hline 
& $\hat{\s}^2_{\MCP}$ & 3.1005 & 0.2107 \\ \hline 
& $\hat{\s}^2(\hat{\S})$ & 0.9820 & 0.5641 \\ \hline
& $\hat{\s}^2(\S)$ & 0.9835 & 0.5596 \\ \hline 
& $\tilde{\s}^2$ & -0.5747 & 0.5876 \\ \hline \hline 
 $n = 2400$ & $\hat{\s}^2_{\lasso}$ & 2.3232 & 0.0706 \\ \hline 
& $\hat{\s}^2_{\MCP}$ & 1.9997 & 0.0778 \\ \hline 
& $\hat{\s}^2(\hat{\S})$ & 1.0095 & 0.1538 \\ \hline
& $\hat{\s}^2(\S)$ & 1.0095 & 0.1537 \\ \hline 
& $\tilde{\s}^2$ & -0.5702 & 0.2228
\end{tabular}
\end{minipage}
\caption{Example 3 ($d = 3000$, $\s^2  =1$).  Means and standard errors of
  estimators for $\s^2$, based on 100 independent datasets.  Left table,
  sparse $\bb$; right table, dense $\bb$}
\end{table}
\end{center}

For sparse $\bb$, the results in Table 3 indicate that
$\hat{\s}^2_{\lasso}$, $\hat{\s}^2_{\MCP}$, $\hat{\s}^2(\hat{\S})$,
and $\hat{\s}^2(\S)$ are all nearly unbiased (recall that $\s^2 =1$ in
this example).  However, the empirical standard
errors for the scaled lasso and MC+ estimators are considerably smaller than the standard
errors for $\hat{\s}^2(\hat{\S})$ and $\hat{\s}^2(\S)$.  Note that in
this example, the
performance of $\hat{\s}^2(\hat{\S})$ is very similar to that of the oracle estimator
  $\hat{\s}^2(\S)$.  

The estimator
$\tilde{\s}^2$ is significantly biased in this example.  Indeed, the mean value of
$\tilde{\s}^2$ is negative, while $\s^2 > 0$.  The
poor performance of $\tilde{\s}^2$ in this example is not completely
unexpected, given that $\left|\bb^T\S^k\bb -
||\bb||^2 \tr(\S^k)/d\right|$ is substantially larger than 0 for $k =
1,2$ (see (\ref{ex3a})).
In fact, more can be said.  Using the approximation $\hat{m}_k \approx
m_k = \tr(\S^k)/d$, $k = 1,2$, one can check that 
\[
E(\tilde{\s}^2) \approx \s^2 + \tau_1^2 -\frac{m_1}{m_2}\tau_2^2.
\]   
Thus, the bias of $\tilde{\s}^2$ is approximately $\tau_1^2 - (m_1/m_2)\tau_2^2$.
In this example, $\tau_1^2 - (m_1/m_2)\tau_2^2 = -1.5700$ and 
\[
E(\tilde{\s}^2) \approx -0.5700.
\]
(this calculation is for the sparse $\bb$; the result is almost
exactly the same for the dense $\bb$).  Note the similarity between this approximation and the empirical means
of $\tilde{\s}^2$ in Table 3.  

For dense $\bb$, the performance of $\hat{\s}_{\lasso}^2$ and $\hat{\s}_{\MCP}^2$ breaks down,
while the performance of $\hat{\s}^2(\hat{\S})$, $\hat{\s}^2(\S)$, and
$\tilde{\s}^2$ remains
virtually unchanged, as compared to the sparse $\bb$ case.  When $n =
600$, the
empirical means of $\hat{\s}_{\lasso}^2$ and $\hat{\s}_{\MCP}^2$ are
both greater than 3; when $n = 2400$,  the
empirical means of $\hat{\s}_{\lasso}^2$ and $\hat{\s}_{\MCP}^2$ are
both  nearly greater than 2.  Both $\hat{\s}^2_{\lasso}$ and $\hat{\s}^2_{\MCP}$
depend on associated lasso and $\MCP$ estimators for $\bb$.  The performance break-down
of $\hat{\s}^2_{\lasso}$ and $\hat{\s}^2_{\MCP}$
when $\bb$ is dense is likely related to the fact that the corresponding
estimators for $\bb$ perform poorly when $\bb$ is dense and $d/n$ is large.  In Table 4, we report the
empirical mean squared error for the lasso and $\MCP$ estimators for $\bb$
that are associated with $\hat{\s}^2_{\lasso}$ and
$\hat{\s}^2_{\MCP}$; note that mean squared error is
substantially higher for estimating dense $\bb$.   

\begin{center}
\begin{table}[h]
\begin{minipage}[b]{0.35\linewidth}\centering
\begin{tabular}{l|r|r}
\multicolumn{3}{c}{{Sparse $\bb$} } \\ 
$n$ & lasso & MC+ \\ \hline
600 & 0.1888 & 0.3696 \\ \hline 
2400 & 0.0514 & 0.0894 
\end{tabular}
\end{minipage}
\begin{minipage}[b]{0.35\linewidth}\centering
\begin{tabular}{l|r|r}
\multicolumn{3}{c}{{Dense $\bb$} } \\ 
$n$ & lasso & MC+ \\ \hline
600 & 1.2176 & 1.2457 \\ \hline 
2400 &0.8961  & 0.9337  
\end{tabular}
\end{minipage}
\caption{Example 3 ($d = 3000$, $\s^2  =1$, $||\bb||^2 = 1.2449$).
  Empirical mean squared error $||\hat{\bb} - \bb||^2$ of the scaled
  lasso and MC+ estimators for $\bb$, based on 100 independent datasets.}
\end{table}
\end{center}  

Overall, the results of this simulation study suggest that estimators proposed in this paper
may be useful for estimating $\s^2$ in settings where $d/n$ is large
and little is know about sparsity in $\bb$.  However, we emphasize two important points:  (i) additional information about the
covariance matrix $\S$ may be required to obtain consistent estimators
for $\s^2$ (e.g. that $\S$ has AR(1) structure) and (ii) the 
estimators for $\s^2$ proposed in this paper may have larger standard
error than estimators derived from a reliable estimate of $\bb$.  

\section{Discussion}

In this paper, we proposed new estimators for $\s^2$, $\tau^2$, and
the signal-to-noise ratio $\tau^2/\s^2$ in high-dimensional linear
models.  These estimators are based on linear combinations of $T_1 = n^{-1}||\y||^2$ and
$T_2 = n^{-2}||X^T\y||^2$.  Working under the assumption that
$\Cov(\x_i) = I$, the key observation in deriving these
estimators was that $ET_1$, $ET_2$ form a pair of non-degenerate linear
combinations involving $\s^2$ and $\tau^2$.  In fact, as described in Section 2.1, 
unbiased estimators for $\s^2$ and $\tau^2$ may be derived from any pair
of statistics $T_1,T_2$ satisfying this property.  With 
$T_1=n^{-1}||\y||^2$ fixed, we presently discuss two 
alternatives for $T_2$, which may yield other estimators for $\s^2$,
$\tau^2$ in this manner.  These examples are not meant to be
exhaustive; rather, they are illustrative of this technique's
flexibility and raise some broader questions about estimating $\s^2$
and $\tau^2$ in high-dimensional linear models.

First, let
$U \in O(d)$ be a $d \times d$ Haar-distributed orthogonal matrix independent
of $(\y,X)$ and let $U_k$ denote the first $k$ columns of $U$, where
$1 \leq k \leq \min\{d,n\}$.  Then one may take $T_2 = n^{-1}E\left(||P_k\y||^2|\y,X\right)$, where $P_k =
\tilde{X}_k(\tilde{X}_k^T\tilde{X}_k)^{-1}\tilde{X}_k^T$ and
$\tilde{X}_k = XU_k$, so that $P_k$ is a random rank-$k$
projection. As a second alternative to $T_2 = n^{-2}||X^T\y||^2$, one could take $T_2 =
n^{-1}||X\hat{\bb}_{ridge}||^2$, where $\hat{\bb}_{ridge}$ is some
ridge regression estimator for $\bb$ \citep{hoerl1970ridge}.  One
aspect of these alternatives' potential appeal is that they might yield consistent
estimators for $\s^2$ and $\tau^2$ with smaller variance than the
estimators studied in this paper.  However, a theoretical
analysis of these estimators' properties may be somewhat involved.
Indeed, for $T_2 = n^{-1}E\left(||P_k\y||^2|\y,X\right)$, it is easy to calculate
$ET_2$ and find the corresponding unbiased estimators for $\s^2$ and $\tau^2$ using symmetry
arguments (provided $\Cov(\x_i) = I$), but computing the variance of
these estimators appears to be fairly challenging. If $T_2 = n^{-1}||X\hat{\bb}_{ridge}||^2$, then closed-form expressions for $ET_2$ and, consequently, for the associated unbiased estimators of $\s^2$, $\tau^2$
are generally not available; however, results from random matrix
theory suggest that simplified asymptotic analyses may be possible.
Note that in order to implement either of these alternatives to $T_2 =
n^{-2}||X^T\y||^2$, specification of an additional tuning parameter is required:  for $T_2 =
n^{-1}E\left(||P_k\y||^2|\y,X\right)$, the rank parameter $k$ must be specified; for $T_2 = n^{-1}||X\hat{\bb}_{ridge}||^2$, the ridge
shrinkage parameter (typically, a nonnegative constant denoted by
$\l$) must be specified.  

A number of questions are raised by the examples discussed in the
previous paragraph.  For instance, it is clear that
estimators for $\s^2$, $\tau^2$ derived using different statistics
$T_1$, $T_2$ may (or may not!) be more efficient than the estimators
$\hat{\s}^2$, $\hat{\tau}^2$ studied here; however, an exhaustive study of all
pairs $T_1,T_2$ aimed at identifying the optimal estimators for $\s^2$, $\tau^2$ is
likely impossible.  This suggests the need for a more unified approach
to studying efficiency and optimality for estimating $\s^2$ and
$\tau^2$ in high-dimensional linear models, which, given the ambiguity of
likelihood-based approaches noted in Section 1.3, may be challenging.
Additionally, while we have shown that the proposed approach to estimating $\s^2$ and $\tau^2$ based on linear
combinations of statistics $T_1$, $T_2$ is effective when $\Cov(\x_i)
= \S$, and that this approach may be successfully modified when
$\S$ satisfies additional conditions, it is unclear
whether a similar approach may be applied effectively when $\S$ is unknown
and arbitrary.  Studying different statistics $T_1$, $T_2$ may
provide additional insight into this problem, but other methodologies
may be required to handle more general $\S$.  

\section*{Appendix}
\subsection*{Proof of Theorem 2}

Theorem 2 is an immediate
consequence of the following lemma and its corollary.  

\setcounter{lemma}{0}
    \renewcommand{\thelemma}{A\arabic{lemma}}

\begin{lemma}
Suppose that $\S = I$. Then
\begin{eqnarray} \label{lemma1a}
\Var\left(\frac{1}{n}||\y||^2\right)\!\!\! & =& \!\!\! \frac{2}{n}(\s^2 + \tau^2)^2 \\ \nonumber
\Var\left(\frac{1}{n^2}||X^T\y||^2\right)\!\!\! & =& \!\!\! \frac{2}{n}\left[\left\{\left(\frac{d}{n}\right)^2 + \frac{d}{n} +
    \frac{2d}{n^2}\right\}\s^4 \right.\\ \label{lemma1b} && \  +
\left\{2\left(\frac{d}{n}\right)^2 + \frac{6d}{n} + 2 + \frac{10d}{n^2} +
    \frac{10}{n} + \frac{12}{n^2}\right\}\s^2\tau^2 \\ \nonumber
&& \  \left. + \left\{\left(\frac{d}{n}\right)^2 + \frac{5d}{n} + 4 +
  \frac{8d}{n^2} + \frac{15}{n} + \frac{15}{n^2}\right\}\tau^4\right]
\\ \label{lemma1c}
\Cov\left(\frac{1}{n}||\y||^2,\frac{1}{n^2}||X^T\y||^2\right) \!\!\!& = &\!\!\!
\frac{2}{n}\left\{\frac{d}{n}\s^4 + \left(\frac{2d}{n} + 2 +
    \frac{3}{n}\right)\s^2\tau^2+ \left(\frac{d}{n} + 2 +
    \frac{3}{n}\right)\tau^4\right\}.
\end{eqnarray}
\end{lemma}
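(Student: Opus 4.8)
The plan is to treat the three identities separately: the first is immediate, and the other two are reduced to moment computations for the Wishart matrix $X^TX$ by conditioning on $X$. For (\ref{lemma1a}), note that since $\S = I$ each coordinate $y_i = \x_i^T\bb + \e_i$ is $N(0,\|\bb\|^2+\s^2) = N(0,\s^2+\tau^2)$, and $y_1,\dots,y_n$ are independent because the rows of $X$ are iid and $\ee$ is independent of $X$. Hence $\|\y\|^2$ is $(\s^2+\tau^2)$ times a $\chi^2_n$ variate, and $\Var(n^{-1}\|\y\|^2) = n^{-2}(\s^2+\tau^2)^2\,\Var(\chi^2_n) = \tfrac{2}{n}(\s^2+\tau^2)^2$.

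For (\ref{lemma1b}) and (\ref{lemma1c}) I would write $\|X^T\y\|^2 = \y^TXX^T\y$ and condition on $X$, so that $\y\mid X \sim N(X\bb,\s^2 I)$. The standard first- and second-moment formulas for Gaussian quadratic forms then give
\begin{align*}
E(\|\y\|^2\mid X) &= n\s^2 + \bb^TX^TX\bb, \\
E(\y^TXX^T\y\mid X) &= \s^2\tr(X^TX) + \bb^T(X^TX)^2\bb, \\
\Var(\y^TXX^T\y\mid X) &= 2\s^4\tr\{(X^TX)^2\} + 4\s^2\,\bb^T(X^TX)^3\bb, \\
\Cov(\|\y\|^2,\,\y^TXX^T\y\mid X) &= 2\s^4\tr(X^TX) + 4\s^2\,\bb^T(X^TX)^2\bb .
\end{align*}
Substituting these into the laws of total variance and covariance, $\Var(Z) = E\{\Var(Z\mid X)\} + \Var\{E(Z\mid X)\}$ and the analogous identity for covariances, reduces (\ref{lemma1b}) and (\ref{lemma1c}) to evaluating $E\tr\{(X^TX)^2\}$ and $E\,\bb^T(X^TX)^3\bb$ together with the variances and covariances among $\tr(X^TX)$, $\bb^TX^TX\bb$, and $\bb^T(X^TX)^2\bb$. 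All of these are polynomials of degree at most four in $X$, hence are obtained from the first through fourth moments of the Wishart matrix $X^TX \sim W_d(n,I)$, contracted against $I$ or against $\bb\bb^T$; these are exactly the Wishart moment identities collected in the Supplemental Text via the methods of \citep{letac2004all, graczyk2005hyperoctahedral}. Inserting them, using $\bb^T\bb = \tau^2$ and $\tr(I_d) = d$, and regrouping by the powers $\s^4$, $\s^2\tau^2$, $\tau^4$ yields the three displayed expressions.

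The conceptually trivial but delicate step, which is why the bulk of the computation is deferred to the Supplemental Text, is the last one: the fourth Wishart moment needed for $\Var\{\bb^T(X^TX)^2\bb\}$ is a sum indexed by permutations, and contracting it against $\bb\bb^T$ while correctly tracking the combinatorial coefficients and the dependence on $d$, $n$, and $\tau^2$ is where errors are easiest to make. Once Lemma A1 is established, Theorem 2 follows at once: with $\T = (n^{-1}\|\y\|^2,\,n^{-2}\|X^T\y\|^2)^T$ and $A$ the matrix in (\ref{matA}), one has $\hat{\bth} = A\T$, hence $\Cov(\hat{\bth}) = A\,\Cov(\T)\,A^T$, and the $\{1 + O(1/n)\}$ remainders and the $O_P$ bound come from retaining the leading-order terms.
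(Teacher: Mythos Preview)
Your proposal is correct and follows essentially the same route as the paper: the $\chi^2_n$ argument for (\ref{lemma1a}), conditioning on $X$ and the law of total (co)variance for (\ref{lemma1b})--(\ref{lemma1c}), and reduction to Wishart moments from Proposition~S1. One small slip: the relevant quantities are polynomials of degree at most four in the Wishart matrix $X^TX$, not in $X$ itself (for instance $\bb^T(X^TX)^3\bb$ is degree six in the entries of $X$); your subsequent sentence about needing up to fourth Wishart moments is correct, so this is only a wording issue.
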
 

\begin{proof}
Equation (\ref{lemma1a}) is obvious because $||\y||^2 \sim
(\s^2 + \tau^2)\chi^2_n$.  To prove (\ref{lemma1b}), we condition on
$X$ and use properties of expectations involving quadratic forms and
normal random vectors to obtain
\begin{eqnarray*}
\Var(||X^T\y||^2) & = & E\left\{\Var(||X^T\y||^2|
  X)\right\} + \Var\left\{E(||X^T\y||^2|X)\right\} \\
& = & 2\s^4E\tr\left\{(X^TX)^2\right\} +
  4\s^2E\left\{\bb^T(X^TX)^3\bb\right\}  \\
&& \ + \Var\left\{\s^2\tr(X^TX) + \bb^T(X^TX)^2\bb\right\} \\
& = & 2\s^4E\tr\left\{(X^TX)^2\right\} +
  4\s^2E\left\{\bb^T(X^TX)^3\bb\right\} + \s^4E\left\{\tr(X^TX)\right\}^2  \\
&& \ +
2\s^2E\left\{\tr(X^TX)\bb^T(X^TX)^2\bb\right\} + E\left\{\bb^T(X^TX)^2\bb\right\}^2 -
\s^4\left\{E\tr(X^TX)\right\}^2 \\
&& \ - 2\s^2E\tr(X^TX)E\left\{\bb^T(X^TX)^2\bb\right\} - \left[E\left\{\bb^T(X^TX)^2\bb\right\}\right]^2.
\end{eqnarray*}
Given this expression for $\Var(||X^T\y||^2)$, (\ref{lemma1b}) follows
from Proposition S1 in the Supplemental Text.  Equation
(\ref{lemma1c}) is proved similarly: we have
\begin{eqnarray*}
\Cov(||\y||^2,||X^T\y||^2) & =&
E\left\{\Cov(||\y||^2,||X^T\y||^2|X)\right\}+
\Cov\left\{E(||\y||^2|X),E(||X^T\y||^2|X)\right\} \\
& = & 2\s^4E\tr(X^TX) + 4\s^2E\left\{\bb^T(X^TX)^2\bb\right\} \\
&& \ + \Cov\left\{\bb^TX^TX\bb,\s^2\tr(X^TX) +
  \bb^T(X^TX)^2\bb\right\}\\
& = & 2\s^4E\tr(X^TX) + 4\s^2E\left\{\bb^T(X^TX)^2\bb\right\} + \s^2E\left\{\tr(X^TX)\bb^TX^TX\bb\right\} \\
&& \ + E\left\{\bb^TX^TX\bb\bb^T(X^TX)^2\bb\right\} - \s^2E\left(\bb^TX^TX\bb\right)E\tr(X^TX) \\
&& \ -  E\left(\bb^TX^TX\bb\right)E\left\{\bb^T(X^TX)^2\bb\right\}
\end{eqnarray*}
and (\ref{lemma1c}) follows from Proposition S1 in the Supplemental
Text.  
\end{proof}

\setcounter{cor}{0}
    \renewcommand{\thecor}{A\arabic{cor}}

\begin{cor}
Under the conditions of Lemma 1,
\begin{eqnarray*}
\Var(\hat{\s}^2) & =& \frac{2n}{(n+1)^2}\left\{\left(\frac{d}{n} + 1 +
    \frac{2d}{n^2}  + \frac{2}{n} +
\frac{1}{n^2}\right)\s^4  + \left(\frac{2d}{n} + \frac{4d}{n^2} + \frac{4}{n} + \frac{8}{n^2}\right)\s^2\tau^2 \right.\\
&& \qquad \left. + \left(\frac{d}{n} + 1 + \frac{2d}{n^2} + \frac{7}{n} +
    \frac{10}{n^2}\right)\tau^4\right\}\\
\Var(\hat{\tau}^2) & = & \frac{2n}{(n+1)^2}\left\{\left(\frac{d}{n} +
    \frac{2d}{n^2}\right)\s^4 
+ \left(\frac{2d}{n} + 2 + \frac{4d}{n^2} + \frac{10}{n} +
  \frac{12}{n^2}\right) \s^2\tau^2 \right. \\
&& \qquad  \left. + \left(\frac{d}{n} + 4 + \frac{2d}{n^2} + \frac{15}{n} +
    \frac{15}{n^2}\right)\tau^4\right\} \\
\Cov(\hat{\s}^2,\hat{\tau}^2) & = &
-\frac{2n}{(n+1)^2}\left\{\left(\frac{d}{n} +
    \frac{2d}{n^2}\right)\s^4+ \left(\frac{2d}{n} + \frac{4d}{n^2} + \frac{5}{n} +
  \frac{9}{n^2}\right)\s^2\tau^2 \right. \\
&& \qquad \left. + \left(\frac{d}{n} + 2 + \frac{2d}{n^2} + \frac{10}{n} +
    \frac{12}{n^2}\right)\tau^4\right\}.
\end{eqnarray*}
\end{cor}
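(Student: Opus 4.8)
The plan is to obtain Corollary A1 directly from Lemma 1 by exploiting the linear relation $\hat{\bth} = A\T$, where $\hat{\bth} = (\hat{\s}^2,\hat{\tau}^2)^T$, $\T = (n^{-1}||\y||^2, n^{-2}||X^T\y||^2)^T$, and $A$ is the matrix in (\ref{matA}). Writing $\mathbf{a}_1^T$, $\mathbf{a}_2^T$ for the two rows of $A$, the identity $\Cov(\hat{\bth}) = A\Cov(\T)A^T$ gives
\[
\Var(\hat{\s}^2) = \mathbf{a}_1^T\Cov(\T)\mathbf{a}_1, \quad \Var(\hat{\tau}^2) = \mathbf{a}_2^T\Cov(\T)\mathbf{a}_2, \quad \Cov(\hat{\s}^2,\hat{\tau}^2) = \mathbf{a}_1^T\Cov(\T)\mathbf{a}_2 .
\]
Thus each of the three target quantities is a fixed quadratic, respectively bilinear, form in the entries of $A$ evaluated against the symmetric $2\times 2$ matrix $\Cov(\T)$, whose three distinct entries $\Var(n^{-1}||\y||^2)$, $\Var(n^{-2}||X^T\y||^2)$, and $\Cov(n^{-1}||\y||^2,n^{-2}||X^T\y||^2)$ are supplied explicitly by (\ref{lemma1a})--(\ref{lemma1c}).

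Next I would carry out the substitution. Plugging $\mathbf{a}_1 = (n+1)^{-1}(d+n+1,\ -n)^T$ and $\mathbf{a}_2 = (n+1)^{-1}(-d,\ n)^T$ into the three forms above, I would pull out the common factor $(n+1)^{-2}$ and expand. Since each entry of $\Cov(\T)$ given by Lemma 1 carries an overall factor $2/n$ and is a polynomial in $\s^2$ and $\tau^2$ that is homogeneous of degree two, with coefficients rational in $d$ and $n$, the outcome can be written (after multiplying and dividing by $n^2$) as $\{2n/(n+1)^2\}$ times a degree-two homogeneous polynomial in $(\s^2,\tau^2)$. I would then read off the coefficients of $\s^4$, $\s^2\tau^2$, and $\tau^4$ in each of the three expressions and verify that they match those in the statement; for example, the $\s^4$ coefficient of $\Var(\hat{\s}^2)$ assembles from $(d+n+1)^2 n^{-2}$ times the $\s^4$ part of (\ref{lemma1a}), $-2n(d+n+1)n^{-2}$ times the $\s^4$ part of (\ref{lemma1c}), and $n^2 n^{-2}$ times the $\s^4$ part of (\ref{lemma1b}).

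The only real obstacle is the bookkeeping: there is no conceptual content beyond the identity $\Cov(A\T) = A\Cov(\T)A^T$, and the work lies in tracking the $O(n^{-1})$ and $O(n^{-2})$ corrections through the expansion and confirming that, after all cancellations, the collected coefficients agree exactly with the displayed formulas. Once Corollary A1 is in hand, Theorem 2 follows immediately: retaining only the leading term in $d/n$ of each coefficient and absorbing the remaining lower-order terms into factors $\{1 + O(n^{-1})\}$ yields the three asymptotic variance formulas, and the final $O_P$ bound on $|\hat{\s}^2 - \s^2|$ and $|\hat{\tau}^2 - \tau^2|$ then follows from these variance estimates and Chebyshev's inequality, using $E\hat{\bth} = \bth$ (Theorem 1).
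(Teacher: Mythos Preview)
Your proposal is correct and matches the paper's own proof essentially exactly: the paper simply notes that $(\hat{\s}^2,\hat{\tau}^2)^T = A(n^{-1}\|\y\|^2,\,n^{-2}\|X^T\y\|^2)^T$ with $A$ as in (\ref{matA}), so the result follows from Lemma~A1 via $\Cov(\hat{\bth}) = A\Cov(\T)A^T$. Your additional remarks on deducing Theorem~2 from the corollary are also in line with the paper's treatment.
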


\begin{proof}
Corollary 1 follows from Lemma 1 and the fact that
\[
\left(\begin{array}{c} \hat{\s}^2 \\ \hat{\tau}^2 \end{array} \right) =
\left(\begin{array}{cc} \frac{d + n + 1}{n+1} & -\frac{n}{n+1} \\
    -\frac{d}{n+1} & \frac{n}{n+1} \end{array}\right)\left(\begin{array}{c}
    n^{-1}||\y||^2 \\ n^{-2}||X^T\y||^2 \end{array}\right).
\]
\end{proof}

\subsection*{Proof of Theorem 3}

Theorem 3 is a direct application of Theorem 2.2 from
\citep{chatterjee2009fluctuations}, which is stated here for ease of reference.  

\setcounter{thm}{0}
    \renewcommand{\thethm}{A\arabic{thm}}

\begin{thm} {\em [Theorem 2.2, \citep{chatterjee2009fluctuations}]}
  Let $\vv = (v_1,...,v_m)^T \sim N(0,\Psi)$.  Suppose that $g \in
  C^2(\R^m)$ and let $\nabla g$ and $\nabla^2 g$ denote the gradient
  and the Hessian of $g$, respectively.  Let
\begin{eqnarray*}
\kappa_1 & = & \left\{E||\nabla g(\vv)||^4\right\}^{1/4} \\
\kappa_2 & = & \left\{E||\nabla^2 g(\vv)||^4\right\}^{1/4},
\end{eqnarray*}
where $||\nabla^2g(\vv)||$ is the operator norm of $\nabla^2g(\vv)$.  
Suppose that $Eg(\vv)^4 < \infty$ and let $\psi^2 = \Var\{g(\vv)\}$.
Let $w$ be a normal random variable having the same mean and variance as
$g(\vv)$.  Then
\begin{equation}\label{thm3a}
d_{TV}\{g(\vv),w\} \leq \frac{2\sqrt{5}||\Psi||^{3/2} \kappa_1\kappa_2}{\psi^2}.
\end{equation}

\setcounter{rk}{0}

\begin{rk}
Chatterjee's Theorem 2.2 does not actually require Gaussian $\vv$.
However, for non-Gaussian $\vv$, an additional term appears
in the bound (\ref{thm3a}), which is not sufficiently small for our
purposes.  Furthermore, the class of distributions covered by the full version of
Chatterjee's Theorem 2.2 is not all-encompassing: $v_i$ must be a
$C^2$-function of a normal random variable.  
\end{rk}
\end{thm}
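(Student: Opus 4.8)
The plan is to follow the second-order Poincar\'e strategy, which marries Stein's method for normal approximation with Gaussian integration by parts; since this is Chatterjee's Theorem 2.2, I would reproduce the argument of \citep{chatterjee2009fluctuations}. Because $\kappa_1,\kappa_2,\psi$ are defined relative to $\vv\sim N(0,\Psi)$, I would first reduce to the isotropic case. Writing $\vv=\Psi^{1/2}\u$ with $\u\sim N(0,I_m)$ and $\tilde g(\u)=g(\Psi^{1/2}\u)$, one has $\tilde g(\u)=g(\vv)$, while $\nabla\tilde g(\u)=\Psi^{1/2}\nabla g(\vv)$ and $\nabla^2\tilde g(\u)=\Psi^{1/2}\{\nabla^2 g(\vv)\}\Psi^{1/2}$. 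Taking operator norms and fourth moments gives $\{E\|\nabla\tilde g\|^4\}^{1/4}\le\|\Psi\|^{1/2}\kappa_1$ and $\{E\|\nabla^2\tilde g\|^4\}^{1/4}\le\|\Psi\|\kappa_2$, so the factor $\|\Psi\|^{3/2}$ in (\ref{thm3a}) is accounted for at the outset, and it suffices to prove the bound in the standard Gaussian case $\u\sim N(0,I_m)$.

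For the isotropic problem, set $h=\tilde g-E\tilde g$ and $W=h(\u)/\psi$; since total variation distance is invariant under a common affine map, $d_{TV}\{g(\vv),w\}=d_{TV}(W,z)$ with $z\sim N(0,1)$. Stein's method bounds $d_{TV}(W,z)\le\sup_f|E\{f'(W)-Wf(W)\}|$, the supremum taken over solutions $f$ of the Stein equation, which satisfy $\|f\|_\infty\le\sqrt{\pi/2}$ and $\|f'\|_\infty\le2$. The central device is to represent $E\{Wf(W)\}$ via Gaussian integration by parts. Introducing an independent copy $\u'$ and the Ornstein--Uhlenbeck interpolations $\u_t=e^{-t}\u+\sqrt{1-e^{-2t}}\,\u'$, I would define
\[
T(\u)=\int_0^\infty e^{-t}\,E'\{\nabla h(\u)\cdot\nabla h(\u_t)\}\,dt,
\]
where $E'$ denotes expectation over $\u'$ only. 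Differentiating the semigroup ($\nabla P_t=e^{-t}P_t\nabla$) and integrating by parts yields the two identities $E\{T(\u)\}=\Var\{h(\u)\}=\psi^2$ and $E\{Wf(W)\}=\psi^{-2}E\{f'(W)\,T(\u)\}$.

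Combining these, $E\{f'(W)-Wf(W)\}=E[f'(W)\{1-T(\u)/\psi^2\}]$, and because $E\{1-T/\psi^2\}=0$ by the variance identity, the bound $\|f'\|_\infty\le2$ together with Cauchy--Schwarz gives
\[
d_{TV}(W,z)\le 2\,E\left|1-\frac{T(\u)}{\psi^2}\right|\le\frac{2}{\psi^2}\{\Var T(\u)\}^{1/2}.
\]
The problem is thus reduced to controlling $\Var\{T(\u)\}$, and this is the hard part. I would apply the Gaussian Poincar\'e inequality $\Var\{T\}\le E\|\nabla T\|^2$; differentiating under the integral defining $T$ brings down the Hessian $\nabla^2 h$, which is precisely where the second-order quantity $\kappa_2$ enters, producing an integrand that pairs $\nabla^2 h(\u)$ with $\nabla h$ evaluated along the interpolation $\u_t$. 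Repeated use of H\"older's inequality, the contractivity of $P_t$ on $L^4$, and the absolute convergence supplied by the $e^{-t}$ weights then bound $E\|\nabla T\|^2$ by a constant multiple of $\kappa_1^2\kappa_2^2$. Assembling the pieces produces $d_{TV}(W,z)\le 2\sqrt5\,\kappa_1\kappa_2/\psi^2$ in the isotropic case, and reinstating the $\|\Psi\|^{3/2}$ factor from the reduction yields (\ref{thm3a}). The delicate bookkeeping in the $\Var\{T\}$ estimate---tracking the semigroup factors and organizing the H\"older splittings so that exactly four derivatives' worth of moments appear, matching the definitions of $\kappa_1$ and $\kappa_2$---is the main obstacle and the technical heart of the argument.
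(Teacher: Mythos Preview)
The paper does not prove this statement; Theorem~A1 is quoted verbatim from \cite{chatterjee2009fluctuations} and used as a black box to derive Theorem~3. Your proposal is therefore not comparable to anything in the paper---there is nothing to compare it to---but it is a faithful outline of Chatterjee's own argument: the reduction to the isotropic case to extract the $\|\Psi\|^{3/2}$ factor, the Stein--OU representation $E\{Wf(W)\}=\psi^{-2}E\{f'(W)T(\u)\}$, and the bound on $\Var\{T\}$ via the Gaussian Poincar\'e inequality are exactly the ingredients of the proof in \cite{chatterjee2009fluctuations}. If your goal is to supply a self-contained proof, what you have written is the right skeleton; the only place requiring genuine care is, as you note, the H\"older/semigroup bookkeeping in the $\Var\{T\}$ estimate that produces the constant $\sqrt{5}$.
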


To prove Theorem 3, we apply Theorem A1 with $\vv = (X,\ee) \in \R^{(d+1)n}$.  Let
$h \in C^2(\R^2)$ and let 
\[
g(X,\ee) = h(\T),
\]
where $\T = \T(X,\ee) = (n^{-1}||\y||^2,n^{-2}||X^T\y||^2)^T$.
First, we bound the quantities $\kappa_1$, $\kappa_2$ in Theorem A1.
In order to bound $\kappa_1$, we 
compute the gradient of $g$.  Let $h_1$, $h_2$ denote the
partial derivatives of $h$ with respect to the first and second
variables, respectively.  Then
\[
\frac{\partial g}{\partial x_{ij}} (X,\ee) =
h_1(\T)\frac{\partial}{\partial x_{ij}}
\frac{1}{n}||\y||^2 +
h_2(\T)\frac{\partial}{\partial x_{ij}} \frac{1}{n^2}||X^T\y||^2
\]
for $i = 1,...,n$, $j = 1,...,d$.  Let $E_{ij}$ denote the $n\times d$ matrix with $i'j'$-entry $\d_{ii'}\d_{jj'}$
($\d_{ii'} = 1$ if $i = i'$ and $0$ otherwise). Since 
\[
\frac{\partial}{\partial x_{ij}} ||\y||^2  = 2\bb^TE_{ij}^T\y
\]
and 
\[
\frac{\partial}{\partial x_{ij}} ||X^T\y||^2 = 2\y^TE_{ij}X^T\y + 2\bb^TE_{ij}^TXX^T\y,
\]
it follows that
\begin{equation}\label{thm1a}
\frac{\partial g}{\partial x_{ij}} (X,\ee)  = 
2h_1(\T)\left(\frac{1}{n}\bb^TE_{ij}^T\y\right) +
2h_2(\T)\left(\frac{1}{n^2}\y^TE_{ij}X^T\y + \frac{1}{n^2}\bb^TE_{ij}^TXX^T\y\right).
\end{equation}
For $1 \leq k \leq n$, the partial derivative of $g$ with respect to $\e_k$ is given by
\begin{eqnarray} \nonumber
\frac{\partial g}{\partial \e_k}(X,\ee) & = &
h_1(\T)\frac{\partial}{\partial \e_k}
\frac{1}{n}||\y||^2 +
h_2(\T)\frac{\partial}{\partial \e_k}
\frac{1}{n^2}||X^T\y||^2 \\ \label{thm1b}
& = & 2h_1(\T)
\left(\frac{1}{n}\mathbf{e}_k^T\y\right) +
2h_2(\T) \left(\frac{1}{n^2}\mathbf{e}_k^TXX^T\y\right),
\end{eqnarray}
where $\mathbf{e}_k \in \R^n$ is the $k$-th standard basis vector in
$\R^n$ (i.e. the $k'$-th entry of $\mathbf{e}_k$
is $\d_{kk'}$) and we have used the facts
\begin{eqnarray*}
\frac{\partial}{\partial \e_k} ||\y||^2 & = & 2\mathbf{e}_k^T\y \\
\frac{\partial}{\partial \e_k} ||X^T\y||^2 & = & 2\mathbf{e}_kXX^T\y.
\end{eqnarray*}  
Now recall that $\kappa_1 = \left(E||\nabla
    g(X,\ee)||^4\right)^{1/4}$.  Equations (\ref{thm1a})-(\ref{thm1b}) and
the elementary inequality 
\begin{equation}\label{thm1c}
(a + b)^2 \leq 2a^2 + 2b^2, \ \ a,b \in \R,
\end{equation} 
imply that
\begin{eqnarray*}
||\nabla g(X,\ee)||^2 & = & \sum_{i = 1}^n \sum_{j = 1}^d
\left\{\frac{\partial}{\partial x_{ij}} g(X,\ee)\right\}^2 + \sum_{k =
  1}^n \left\{\frac{\partial}{\partial \e_k} g(X,\ee)\right\}^2 \\
& \leq & 8h_1(\T)^2\sum_{i = 1}^n \sum_{j = 1}^d
\left(\frac{1}{n}\bb^TE_{ij}^T\y\right)^2 \\
&& \ + 16h_2(\T)^2\sum_{i = 1}^n \sum_{j = 1}^d
\left\{\left(\frac{1}{n^2}\y^TE_{ij}X^T\y\right)^2 +
  \left(\frac{1}{n^2}\bb^TE_{ij}^TXX^T\y\right)^2\right\} \\
&& \ +  8h_1(\T)^2\sum_{k = 1}^n
\left(\frac{1}{n}\mathbf{e}_k^T\y\right)^2 + 8h_2(\T)^2\sum_{k = 1}^n
\left(\frac{1}{n^2}\mathbf{e}_k^TXX^T\y\right)^2 \\
& = & \frac{8}{n^2}(\tau^2+1)h_1(\T)^2||\y||^2 +
\frac{16}{n^4}h_2(\T)^2\left\{||\y||^2||X^T\y||^2 +
  \left(\tau^2 + \frac{1}{2}\right) ||XX^T\y||^2\right\}.
\end{eqnarray*}
Let $\l_1 = ||n^{-1}X^TX||$ be the largest eigenvalue of
$n^{-1}X^TX$.  Applying the triangle inequality and (\ref{thm1c}) yields
\begin{eqnarray*} 
||\nabla g(X,\ee)||^2 & \leq & \frac{16}{n^2}(\tau^2+1)h_1(\T)^2\left(||X^TX||\tau^2 + ||\ee||^2\right) \\
&& \ +
\frac{128}{n^4}h_2(\T)^2||X^TX||\left(||X^TX||^2\tau^4
  + ||\ee||^4\right) \\
&& \ + \frac{32}{n^4}h_2(\T)^2
||X^TX||^2\left(\tau^2 + \frac{1}{2}\right)\left(||X^TX||\tau^2 + ||\ee||^2\right) \\
& \leq & \frac{16}{n}||\nabla
h(\T)||^2\Bigg\{8\l_1\left(\frac{1}{n}||\ee||^2\right)^2+
 \left(2\l_1^2 + 1\right) \frac{1}{n}||\ee||^2\tau^2 +\left(10\l_1^3 +
   \l_1 \right) \tau^4 \\
&& \  + (\l_1^2 + 1) \frac{1}{n}||\ee||^2 + (\l_1^3 + \l_1)\tau^2
\Bigg\} \\
& \leq & \frac{264}{n}||\nabla h(\T)||^2(\l_1 +
1)^3\left\{\frac{1}{n}||\ee||^2\left(\frac{1}{n}||\ee||^2 + 1\right) +
\tau^2(\tau^2 + 1)\right\}.
\end{eqnarray*}
Thus,
\begin{eqnarray}\nonumber
\kappa_1 & = & \left(E||\nabla g(X,\ee)||^4\right)^{1/4} \\ \nonumber
& \leq & \sqrt{\frac{264}{n}}\left(E \left[||\nabla
    h(\T)||^4 (\l_1 + 1)^6 \left\{\frac{1}{n}||\ee||^2\left(\frac{1}{n}||\ee||^2 + 1\right) +
\tau^2(\tau^2 + 1)\right\}^2\right]\right)^{1/4} \\ \label{thm1d}
& = & O\left[\frac{1}{\sqrt{n}}\left\{\g_4^{1/4} + \g_2^{1/4} + \g_0^{1/4}\tau(\tau
    + 1)\right\}\right],
\end{eqnarray}
where 
\[
\g_k = E\left[||\nabla h(\T)||^4(\l_1 + 1)^6\left(\frac{1}{n}||\ee||^2\right)^k\right].
\]

To bound $\kappa_2 = \left\{E||\nabla^2g(X,\ee)||^4\right\}^{1/4}$, we
bound the operator norm of the Hessian $||\nabla^2 g(X,\ee)||$.  Let 
\[
\mathcal{U} = \left\{\tilde{U} = (\u \  U);  \ \u =
    (u_1,...,u_n)^T \in \R^n,  \ U = (u_{ij})_{1 \leq i \leq n, \ 1
      \leq j \leq d}, \ \sum_{k = 1}^n u_k^2 + \sum_{i = 1}^n \sum_{j
   = 1}^d u_{ij}^2 = 1\right\}
\]
be the collection of partitioned $n \times (d+1)$ matrices with
Frobenius norm equal to one.  For $\tilde{U} = (\u \ U) \in \mathcal{U}$, define
the differential operator 
\[
D_{\tilde{U}} = \sum_{i = 1}^n \sum_{j = 1}^d
u_{ij}\frac{\partial}{\partial x_{ij}} + \sum_{k = 1}^n
u_k \frac{\partial}{\partial \e_k}.
\]
Then
\begin{eqnarray}\nonumber
||\nabla^2g(X,\ee)|| & = & \sup_{\tilde{U} \in \mathcal{U}}
D_{\tilde{U}}^2g(X,\ee) \\ \nonumber 
& = & \sup_{\tilde{U} \in \mathcal{U}}
\left\{\nabla h(\T)^TD_{\tilde{U}}^2\T(X,\ee) +
  \left\{D_{\tilde{U}}\T(X,\ee)\right\}^T\nabla^2h(\T)
  D_{\tilde{U}}\T(X,\ee)\right\} \\ \label{thm1e}
& \leq &  \sup_{\tilde{U} \in \mathcal{U}}
\left\{\left|\left|\nabla
      h(\T)\right|\right|\left|\left|D_{\tilde{U}}^2\T(X,\ee)\right|\right|
  + \left|\left|\nabla^2h(\T)\right|\right|\left|\left| D_{\tilde{U}}\T(X,\ee)\right|\right|^2\right\}.
\end{eqnarray}
From our previous calculations,
\begin{eqnarray*}
D_{\tilde{U}}\T(X,\ee) & = &  \sum_{i = 1}^n \sum_{j = 1}^d
u_{ij} \frac{\partial}{\partial x_{ij}} \left(\begin{array}{c} \frac{1}{n}||\y||^2 \\
    \frac{1}{n^2}||X^T\y||^2\end{array}\right) + \sum_{k = 1}^n u_k
\frac{\partial}{\partial \e_k}\left(\begin{array}{c} \frac{1}{n}||\y||^2 \\
    \frac{1}{n^2}||X^T\y||^2\end{array}\right) \\ 
& = & \sum_{i = 1}^n \sum_{j = 1}^d
u_{ij} \left(\begin{array}{c} \frac{2}{n}\bb^TE_{ij}^T\y \\
    \frac{2}{n^2}\y^TE_{ij}X^T\y + \frac{2}{n^2
    }\bb^TE_{ij}^TXX^T\y\end{array}\right) + \sum_{k = 1}^n u_k
\left(\begin{array}{c} \frac{2}{n}\mathbf{e}_k^T\y \\
    \frac{2}{n^2}\mathbf{e}_k^TXX^T\y \end{array}\right) \\
& = & \left(\begin{array}{c} \frac{2}{n}\bb^TU^T\y + \frac{2}{n}\u^T\y
    \\ \frac{2}{n^2} \y^TUX^T\y + \frac{2}{n^2}\bb^TU^TXX^T\y +
    \frac{2}{n^2}\u^TXX^T\y \end{array}\right).
\end{eqnarray*}
To compute $D^2_{\tilde{U}}\T(X,\ee)$, we need the second order
partial derivatives of $||\y||^2$ and $||X^T\y||^2$; these are given
below:
\begin{eqnarray*}
 \frac{\partial^2}{\partial
  x_{i'j'} \partial x_{ij}} ||\y||^2 & = &  2\bb^TE_{ij}^TE_{i'j'}\bb \\
\frac{\partial^2}{\partial \e_k \partial x_{ij}} ||\y||^2 & = &
2\bb^TE_{ij}^T\mathbf{e}_k \\
\frac{\partial^2}{\partial \e_{k'}\partial \e_k} ||\y||^2 & = &
2\mathbf{e}_k^T\mathbf{e}_{k'} 
\end{eqnarray*}
and
\begin{eqnarray*}
\frac{\partial^2}{ \partial x_{i'j'} \partial x_{ij}} ||X^T\y||^2 
& = & 2\bb^TE_{i'j'}^TE_{ij}X^T\y + 2\bb^TE_{ij}^TE_{i'j'}X^T\y +
2\y^TE_{ij}E_{i'j'}^T\y \\
&& \ + 2\y^TE_{ij}X^TE_{i'j'}\bb + 2\bb^TE_{ij}^TXE_{i'j'}^T\y +
2\bb^TE_{ij}^TXX^TE_{i'j'}\bb, \\
\frac{\partial^2}{\partial \e_k \partial x_{ij}} ||X^T\y||^2 & = &
2\mathbf{e}_k^TE_{ij}X^T\y + 2\y^TE_{ij}X^T\mathbf{e}_k +
2\bb^TE_{ij}^TXX^T\mathbf{e}_k \\
\frac{\partial^2}{\partial \e_{k'} \partial \e_k} ||X^T\y||^2 & = & 2\mathbf{e}_k^TXX^T\mathbf{e}_{k'},
\end{eqnarray*}
for $1 \leq i,k \leq d$ and $1 \leq j \leq d$.  It follows that the
entries of $D_{\tilde{U}}^2\T(X,\ee)$ are
\[
\frac{1}{n}D_{\tilde{U}}^2||\y||^2 = \frac{2}{n}\bb^TU^TU\bb +
\frac{4}{n}\bb^TU^T\u + \frac{2}{n}||\u||^2 
\]
and
\begin{eqnarray*}
\frac{1}{n^2}D_{\tilde{U}}^2||X^T\y||^2 & = &
\frac{2}{n^2}\y^TUU^T\y  + \frac{4}{n^2}\bb^TU^TUX^T\y  +
\frac{4}{n^2}\bb^TU^TXU^T\y +
\frac{2}{n^2} \bb^TU^TXX^TU\bb \\
&& \ + \frac{4}{n^2} \u^TUX^T\y + \frac{4}{n^2}\y^TUX^T\u +
\frac{4}{n^2}\bb^TU^TXX^T\u + \frac{2}{n^2} \u^TXX^T\u.
\end{eqnarray*}
We conclude that
\begin{eqnarray}\nonumber
\left|\left|D_{\tilde{U}}\T(X,\ee)\right|\right|^2 & = &
\frac{4}{n^2}\left(\bb^TU^T\y + \u^T\y\right)^2 +
\frac{4}{n^4}\left(\y^TUX^T\y + \bb^TU^TXX^T\y + \u^TXX^T\y\right)^2
\\ \nonumber
& \leq & \frac{8}{n^2}(\tau^2 + 1)||\y||^2 +
\frac{12}{n^4}||X^TX||\left(||\y||^2 + ||X^TX||\tau^2 +
  ||X^TX||\right)||\y||^2 \\ \nonumber
& \leq & \frac{16}{n}(\tau^2 + 1)\left(\l_1\tau^2 +
  \frac{1}{n}||\ee||^2\right) \\ \nonumber 
&& \ + \frac{168}{n}\l_1\left\{\l_1^2\tau^2(\tau^2 + 1) + \frac{1}{n}||\ee||^2\left(\l_1 +
    \frac{1}{n}||\ee||^2\right)\right\} \\ \label{thm1f} 
&= & O\left[\frac{1}{n}\left\{(\l_1^3 + \l_1)\tau^2(\tau^2 + 1) + \frac{1}{n}||\ee||^2\left(\l_1 +
    \frac{1}{n}||\ee||^2\right)\right\}\right]
\end{eqnarray}
and
\begin{eqnarray}\nonumber
||D_{\tilde{U}}^2\T(X,\ee)|| & \leq & \frac{2}{n}(\tau + 1)^2 +
\frac{2}{n^2}\left\{||\y||^2 + 4||X||(\tau + 1)||\y|| +
  ||X^TX||(\tau + 1)^2\right\} \\ \label{thm1g}
& = & O\left\{\frac{1}{n}\left(\l_1(\tau + 1)^2 + \frac{1}{n}||\ee||^2\right)\right\}.
\end{eqnarray}
Combining (\ref{thm1e})-(\ref{thm1g}), we obtain
\begin{equation} \label{thm1h}
\kappa_2 = \left(E||\nabla^2g(X,\ee)||^4\right)^{1/4} = O\left[\frac{1}{n}\left\{\eta_8^{1/4} +
    \eta_4^{1/4} + \eta_0^{1/4}\tau^2(\tau^2 + 1) + \g_4^{1/4} +
    \g_0^{1/4}(\tau^2 + 1)\right\}\right],
\end{equation}
where
\[
\eta_k = E\left[||\nabla^2h(\T)||^4(\l_1 + 1)^{12}\left(\frac{1}{n}||\ee||^2\right)^k\right].
\]
Appealing to Theorem A1, the bounds (\ref{thm1d}) and (\ref{thm1h})
imply 
\[
d_{TV}\left\{g(X,\ee),w\right\} = O\left(\frac{\xi\nu}{ n^{3/2}\psi^2}\right),
\]
where
\[
\xi = \xi(\s^2,\tau^2,\S,d,n) = \g_4^{1/4} + \g_2^{1/4} + \g_0^{1/4}\tau(\tau + 1)
\]
and
\[
\nu = \nu(\s^2,\tau^2,\S,d,n) = \eta_8^{1/4} +
    \eta_4^{1/4} + \eta_0^{1/4}\tau^2(\tau^2 + 1) + \g_4^{1/4} +
    \g_0^{1/4}(\tau^2 + 1).
\]
This completes the proof of Theorem 3. 

\subsection*{Proof outline for Proposition 2}

Let
\begin{eqnarray*}
\tilde{\s}^2(\hat{\m}) & = & \tilde{\s}^2 \ \ = \ \ \left\{1 +
\frac{d \hat{m}_1^2}{(n+1)\hat{m}_2}\right\}\frac{1}{n}||\y||^2 -
\frac{\hat{m}_1}{n(n+1)\hat{m}_2}||X^T\y||^2 \\
\tilde{\tau}^2(\hat{\m}) & = &\tilde{\tau}^2 \ \ = \ \ -\frac{d\hat{m}_1^2}{n(n+1)\hat{m}_2}||\y||^2 + \frac{\hat{m}_1}{n(n
  + 1)\hat{m}_2}||X^T\y||^2,
\end{eqnarray*}
where $\hat{\m} = (\hat{m}_1,\hat{m}_2)^T$.  With $\m = (m_1,m_2)^T =
(d^{-1}\tr(\S),d^{-1}\tr(\S^2))^T$, consider the estimators
$\tilde{\s}^2(\m)$ and $\tilde{\tau}^2(\m)$.  Under the conditions of
Proposition 2, Proposition S1 from the
Supplemental Text implies that $E(\hat{m}_k - m_k)^2 =
O(n^{-2})$, $k = 1,2$; furthermore, existing results on the eigenvalues of Wishart
matrices imply that $E\hat{m}_2^{-(2 + r)} =
O(1)$ for $r > 0$ sufficiently small (see,
for example, the Appendix of \citep{dicker2012dense}; this is where
the conditions that $|n-d| > 9$ and $d/n$ is bounded away from 1 are required).   These facts
can be combined to obtain
\begin{equation}\label{prop2a}
E\left\{\tilde{\s}^2(\hat{\m})  - \tilde{\s}^2(\m)\right\}^2 =
O\left(\frac{1}{n^2}\right) \mbox{ and } E\left\{\tilde{\tau}^2(\hat{\m}) -
\tilde{\tau}^2(\m)\right\}^2 = O\left(\frac{1}{n^2}\right).
\end{equation}
Additionally, it can be shown that
\begin{equation} \label{prop2b}
E \{\tilde{\s}^2(\m)\} = \s^2 + O(\tilde{\Delta}_2) \mbox{ and }
E\{\tilde{\tau}^2(\m)\} = \tau^2 + O(\tilde{\Delta}_2)
\end{equation}
and
\begin{equation} \label{prop2c}
\Var\{\tilde{\s}^2(\m)\} = \frac{\tilde{\psi}_1^2}{n} +
O\left(\frac{1 + n\tilde{\Delta}_3}{n^2}\right) \mbox{ and }
\Var\{\tilde{\tau}^2(\m)\} = \frac{\tilde{\psi}_2^2}{n} +
O\left(\frac{1 + n\tilde{\Delta}_3}{n^2}\right),
\end{equation}
where Proposition S1 in the Supplemental Text and the variance/covariance
decompositions in the proof of Lemma A1 are useful for proving
(\ref{prop2c}).  Part (i) of Proposition 2 (consistency) follows from
(\ref{prop2a})-(\ref{prop2c}). Part (ii) of Proposition 2 (asymptotic normality) also follows from
(\ref{prop2a})-(\ref{prop2c}), upon noticing that Theorem 3 may be
applied to $\tilde{\s}^2(\m)$ and $\tilde{\tau}^2(\m)$, as in
Corollary 1.  Asymptotic normality for $\tilde{\tau}^2/\tilde{\s}^2$
follows from the delta method.


\section*{Supplemental text: Moment calculations for the Wishart distribution} 

Suppose that $X = (\x_1,...,\x_n)^T$ is an $n \times d$ matrix with
iid rows $\x_1,...,\x_n \sim N(0,\S)$ and that $\S$ is a $d \times d$
positive definite matrix.  Then $W = X^TX$ is a
$\mathrm{Wishart}(n,\S)$ random matrix.  Let $\bb \in \R^d$.  In this
Supplemental Text we
provide formulas for various moments involving $W$ that are used
in the paper.
\citet{letac2004all} and \citet{graczyk2005hyperoctahedral} provide
techniques for computing all such moments.  These techniques are
utilized here.  

\subsection*{The symmetric group and a formula for a class of moments involving $W$}
 Let $S_k$ denote the
symmetric group on $k$ elements.  Then each permutation $\pi \in S_k$
can be uniquely as a product of disjoint cycles $\pi = C_1\cdots C_{m(\pi)}$, where
$C_j = (c_{1j} \cdots c_{k_jj})$, $k_1 + \cdots + k_{m(\pi)} = k$, and
all of the $c_{ij} \in \{1,...,k\}$ are distinct.  

Let $H_1,...,H_k$ be $d \times d$ symmetric matrices and define
the polynomial
\[
r_{\pi}(\S)(H_1,...,H_k) = \prod_{j = 1}^{m(\pi)}\tr\left(\prod_{i =
    1}^{k_j} \S H_{c_{ij}}\right).
\]
Theorem 1 in \cite{letac2004all} and Proposition 1 in
\cite{graczyk2005hyperoctahedral} give the following formula:
\begin{equation}\label{letac}
E\left\{\tr(WH_1)\cdots \tr(WH_k)\right\} = \sum_{\pi \in S_k} 2^{k-m(\pi)}n^{m(\pi)}r_\pi(\S)(H_1,...,H_k).
\end{equation}
This is our main tool for deriving the explicit formulas in the next
section. 

\subsection*{Explicit moment formulas used in the paper}

For non-negative integers $k$, define $\tau_k^2 = \bb^T\S^k\bb$ and $m_k
= d^{-1}\tr(\S^k)$. 

\setcounter{prop}{0}
    \renewcommand{\theprop}{S\arabic{prop}}

\begin{prop}
We have
\begin{eqnarray}
E\tr(W) & =& dnm_1 \label{prop1a} \\ 
E\tr(W)^2 & = & d^2n^2m_1^2 +2dnm_2\label{prop1b} \\ 
E\tr(W^2) & = & d^2nm_1^2 + dn(n+1)m_2 \label{prop1c} \\
E\bb^TW\bb & = & n\tau_1^2\label{prop1f} \\
E\bb^TW^2\bb & = & dnm_1\tau_1^2 + n(n+1)\tau_2^2 \label{prop1h} \\
E\left\{\tr(W)\bb^TW\bb\right\} & = & dn^2m_1\tau_1^2 + 2n\tau_2^2 \label{prop1i} \\
E\left\{\tr(W)\bb^TW^2\bb\right\} & = & d^2n^2m_1^2\tau_1^2 + dn(n^2 + n +
2)m_1\tau_2^2 \nonumber \\
&& \ \ + 2dnm_2\tau_1^2 + 4n(n+1)\tau_3^2 \label{prop1j} \\
E(\bb^TW\bb\bb^TW^2\bb) & = & dn(n+2)m_1\tau_1^4 + n(n+2)(n+3)\tau_1^2\tau_2^2\label{prop1k} \\
E\bb^TW^3\bb & = &  d^2nm_1^2\tau_1^2 + 2dn(n + 1)m_1\tau_2^2 \nonumber \\
&& \ \ + dn(n+1) m_2\tau_1^2 + n(n^2 +
3n + 4)\tau_3^2 \label{prop1l} \\
E(\bb^TW^2\bb)^2 & = & d^2n(n+2)m_1^2\tau_1^4 + 2dn(n+2)(n+3)m_1\tau_1^2\tau_2^2 \nonumber \\
&& \ \ + 2dn(n+2)m_2\tau_1^4 + 4n(n+2)(n+3)\tau_1^2\tau_3^2 \nonumber \\
&& \ \ + n(n+1)(n+2)(n+3)\tau_2^4.\label{prop1m}
\end{eqnarray}
\end{prop}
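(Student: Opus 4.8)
Every left-hand side in the statement is the expectation of a product of traces of monomials in $W = X^TX$ with only the fixed matrices $I$ and $\bb\bb^T$ inserted, so the plan is to reduce each one to the Wishart moment calculus of \citet{letac2004all} and \citet{graczyk2005hyperoctahedral}. Formula (\ref{letac}) applies verbatim to the quantities that are literally products of single-$W$ traces ((\ref{prop1a}), (\ref{prop1b}), (\ref{prop1f}), (\ref{prop1i})), and the same symmetric-group bookkeeping extends it to traces of monomials $W H_1 W H_2 \cdots$, which covers the remaining identities; equivalently, one may substitute $W = \sum_{i=1}^n \x_i\x_i^T$ and apply Isserlis' formula to the Gaussian rows. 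The crucial simplification is that, since each inserted $H_j \in \{I, \bb\bb^T\}$, every trace-product $r_\pi(\S)(\cdots)$ on the right collapses to a monomial in $m_1, m_2$ and $\tau_1^2, \tau_2^2, \tau_3^2$: a cycle containing only copies of $I$ contributes $\tr(\S^a) = d\,m_a$, while a cycle alternating powers of $\S$ with $p$ copies of $\bb\bb^T$ factors as $\tr(\S^{a_1}\bb\bb^T\S^{a_2}\bb\bb^T\cdots) = \tau_{a_1+a_2}^2\,\tau_{a_3+a_4}^2\cdots$. So the whole proof is: choose the $H_j$, enumerate the permutations, collapse the traces, and collect terms.

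\textbf{Carrying it out.} I would proceed by the degree $k$ of the expression in $W$. Degree one ((\ref{prop1a}), (\ref{prop1f})) is immediate since $S_1$ is trivial. Degree two ((\ref{prop1b}), (\ref{prop1c}), (\ref{prop1h}), (\ref{prop1i})) uses the two elements of $S_2$ (identity and transposition). Degree three ((\ref{prop1j}), (\ref{prop1k}), (\ref{prop1l})) uses the six elements of $S_3$, conveniently sorted into the identity, the three transpositions, and the two $3$-cycles. Degree four ((\ref{prop1m})) uses all $24$ elements of $S_4$, grouped by cycle type $1^4$, $2\,1^2$, $2^2$, $3\,1$, $4$ (with $1$, $6$, $3$, $8$, $6$ elements respectively). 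For the ``product'' moments ((\ref{prop1i}), (\ref{prop1j}), (\ref{prop1k}), (\ref{prop1m})) one simply pools the $W$-slots of all trace-factors and lets the permutation act on the pool; a cycle that straddles two factors produces a single merged trace built from their combined $H$'s. After assembling each sum I would substitute $\tr(\S^j) = d\,m_j$ and $\bb^T\S^j\bb = \tau_j^2$ and collect like terms to read off (\ref{prop1a})--(\ref{prop1m}).

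\textbf{Main obstacle and sanity checks.} There is nothing conceptually hard here; the obstacle is purely the combinatorial bookkeeping, and it is concentrated in (\ref{prop1m}) and, to a lesser extent, the three degree-three identities. One must track, permutation by permutation, which of the two $\bb\bb^T$'s occupies which $W$-slot, read off each resulting alternating trace as the correct product of $\tau_1^2, \tau_2^2, \tau_3^2$ and $m_1, m_2$, weight by $2^{k-m(\pi)}n^{m(\pi)}$, and sum; the available symmetries (exchange of the two identical trace-factors in (\ref{prop1k}) and (\ref{prop1m}), and cyclic invariance of $\tr(W H_1 W H_2)$) cut down the casework but the powers of $n$ and the assignment of $\S$-powers are easy to slip on. To guard against errors I would (i) specialize to $\S = I$, where $m_j \equiv 1$ and $\tau_j^2 \equiv \tau_0^2 = ||\bb||^2$, and check agreement with the moments used in the proof of Lemma A1; (ii) verify that every monomial on each right-hand side carries the correct total degree in $\S$ (equal to the degree in $W$) and in $\bb$ (twice the number of $\bb\bb^T$ insertions), and that its top-order-in-$n$ term agrees with the heuristic $W \approx n\S$; and (iii) recheck the degree-$\le 2$ formulas by a short direct Wick computation.
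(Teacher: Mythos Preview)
Your proposal is correct, but the route differs from the paper's in one structural respect worth noting. You work directly with the two fixed symmetric matrices $I$ and $\bb\bb^T$, invoking the general Letac--Massam/Graczyk--Letac--Massam formula for expectations of products of traces of monomials in $W$ (or, as you say, falling back on Isserlis applied to $W=\sum_i\x_i\x_i^T$). The paper instead avoids the general formula entirely: it fixes an orthonormal basis $\u_1,\dots,\u_d$ with $\bb=\|\bb\|\u_1$, introduces the rank-one symmetric matrices $H_{ij}=(\u_i\u_j^T+\u_j\u_i^T)/2$ and $H_j=H_{1j}$, and uses resolutions of the identity to rewrite each higher-degree quantity as a \emph{sum of products of single-$W$ traces}, e.g.\ $\bb^TW^2\bb=\tau_0^2\sum_j\tr(WH_j)^2$, $\bb^TW^3\bb=\tau_0^2\sum_{i,j}\tr(WH_i)\tr(WH_j)\tr(WH_{ij})$, $(\bb^TW^2\bb)^2=\tau_0^4\sum_{i,j}\tr(WH_i)^2\tr(WH_j)^2$. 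This reduces everything to the specific formula~(\ref{letac}) as stated, at the price of an auxiliary Lemma~S1 computing the traces $\tr(\S H_{i_1}\cdots\S H_{i_r})$ and then summing over the basis indices. Your approach is more direct and bypasses that lemma, but be aware that for the real Wishart the ``same symmetric-group bookkeeping'' does not extend quite as cleanly to arbitrary trace structures as it does in the complex case; if you go that route you should cite the general real-case formula explicitly (or simply carry out the Isserlis computation, which is unambiguous). Either way the combinatorics and the final answers are the same, and your sanity checks are well chosen.
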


\begin{proof}
Formulas (\ref{prop1a}) and (\ref{prop1f}) are trivial
(notice that $\bb^TW\bb \sim \tau_1^2\chi^2_n$).  Formulas
(\ref{prop1b})-(\ref{prop1c}) may be found in \citep{letac2004all}.

Now let $\u_1,...,\u_d \in \R^d$ be an orthonormal basis of $\R^d$, with
$\bb = ||\bb||\u_1$.  Define the $d\times d$ symmetric matrices $H_{ij} =
(\u_i\u_j^T + \u_j\u_i^T)/2$ and $H_j = H_{1j}$, $i,j = 1,...,d$.  Then
\begin{equation}\label{prop1pfa}
\bb^TW^2\bb = \tau \sum_{j = 1}^d \tr(WH_j)^2.
\end{equation}
Since $S_2 = \{(1 \ 2), (1)(2)\}$,  the formula (\ref{letac}) and
Lemma 1 below imply
\begin{eqnarray*}
 E\tr(WH_j)^2 & = & 2^{2 - m((1 \ 2))}n^{m((1 \ 2))} \tr(\S H_j\S H_j)
 + 2^{2 - m((1)(2))}n^{m((1)(2))} \tr(\S H_j)^2 \\
& = & n\left\{(\u_1^T\S\u_j)^2 + \u_1^T\S\u_1\u_j^T\S\u_j\right\} + n^2(\u_1^T\S\u_j)^2.
\end{eqnarray*}
To prove (\ref{prop1h}), observe that 
\begin{eqnarray*}
E\bb^TW^2\bb & = & \tau_0^2 \sum_{j = 1}^d E\tr(WH_j)^2 \\
& = & n(n+1)\sum_{j= 1}^d \tau_0^2 (\u_1^T\S\u_j)^2 + n\sum_{j= 1}^d \tau_0^2
\u_1^T\S\u_1\u_j^T\S\u_j \\
& = & n(n+1)\tau_2^2 + dnm_1\tau_1^2.
\end{eqnarray*}
For (\ref{prop1i}), equation (\ref{letac}) implies 
\begin{eqnarray*}
E\left\{\tr(W)\bb^TW\bb\right\} & =& \tau_0^2
E\left\{\tr(W)\tr(WH_1)\right\} \\
& = & 2n\tau_0^2\tr(\S^2H_1) + n^2\tau_0^2\tr(\S)\tr(\S H_1) \\
& = & 2n\tau_2^2 + dn^2m_1\tau_1^2.
\end{eqnarray*}
To prove (\ref{prop1j}), first notice that
\begin{equation}\label{prop1pfb}
E\left\{\tr(W)\bb^TW^2\bb\right\} = \tau_0^2\sum_{j = 1}^d E\left\{\tr(W)\tr(W H_j)^2\right\}
\end{equation}
and that (\ref{letac}) implies
\[
 E\left\{\tr(W)\tr(W H_j)^2\right\} = \sum_{ \pi \in S_3}
2^{3-m(\pi)}n^{m(\pi)}r_{\pi}(\S)(I,H_j,H_j).
\]
It is clear that
\[\begin{array}{c}
r_{(1 \ 2 \ 3)}(\S)(I,H_j,H_j)  =  r_{(1 \ 3 \ 2)}(\S)(I,H_j,H_j)
\\
r_{(1 \ 2)(3)}(\S)(I,H_j,H_j) = r_{(1 \ 3)(2)}(\S)(I,H_j,H_j).
\end{array}
\]
Thus, by Lemma 1,
\begin{eqnarray*}
 E\left\{\tr(W)\tr(W H_j)^2\right\} & = & 8nr_{(1 \ 2 \
  3)}(\S)(I,H_j,H_j)+ 4n^2r_{(1 \ 2)(3)}(\S)(I,H_j,H_j) \\
&& \ \ + 2n^2r_{(1)(2 \ 3)}(\S)(I,H_j,H_j) + n^3r_{(1)(2)(3)}(\S)(I,H_j,H_j) \\
& = & 8n \tr(\S^2H_j\S H_j) + 4n^2\tr(\S^2H_j)\tr(\S^2H_j) \\
&& \ \ +
2n^2\tr(\S)\tr(\S H_j\S H_j) + n^3\tr(\S)\tr(\S H_j)^2 \\
& = & 2n(\u_1^T\S^2\u_1\u_j^T\S\u_j + \u_1^T\S\u_1\u_j^T\S^2\u_j + 2\u_1^T\S^2\u_j\u_1^T\S\u_j) \\
&& \ \ + 4n^2\u_1^T\S^2\u_j\u_1^T\S\u_j + n^2\tr(\S)\left\{(\u_1^T\S\u_j)^2 +
  \u_1^T\S\u_1\u_j^T\S\u_j\right\} \\
&& \ \ + n^3 \tr(\S)(\u_1^T\S\u_j)^2
\end{eqnarray*}
Combining this with (\ref{prop1pfb}) yields
\begin{eqnarray*}
E\left\{\tr(W)\bb^TW^2\bb\right\} & = & 2n \tau_0^2 \sum_{j=1}^d
\u_1^T\S^2\u_1\u_j^T\S\u_j  + 2n\tau_0^2\sum_{j = 1}^d \u_1^T\S\u_1\u_j^T\S^2\u_j  \\
&& \ \ + 4n(n+1) \tau_0^2\sum_{j = 1}^d \u_1^T\S^2\u_j\u_1^T\S\u_j + n^2\tr(\S)\tau_0^2\sum_{j = 1}^d \u_1^T\S\u_1\u_j^T\S\u_j \\
&& \ \ + n^2(n+1)\tr(\S)\tau_0^2\sum_{j= 1}^d (\u_1^T\S\u_j)^2 \\
& = & dn(n^2 + n + 2)m_1\tau_2^2 + 2dnm_2\tau_1^2 + 4d^2n(n+1)\tau_3^2 + d^2n^2m_1^2\tau_1^2.
\end{eqnarray*}
The proof of (\ref{prop1k}) is similar to the proof of
(\ref{prop1j}).  By (\ref{letac}) and Lemma 1,
\begin{eqnarray*}
E\left\{\tr(WH_1)\tr(WH_j)^2\right\} \!\! & =& \!\! 8nr_{(1 \ 2 \
  3)}(\S)(H_1,H_j,H_j) + 4n^2r_{(1 \ 2)(3)}(\S)(H_1,H_j,H_j) \\
&& \!\! \ \ + 2n^2r_{(1)(2 \ 3)}(\S)(H_1,H_j,H_j) + n^3r_{(1)(2)(3)}(\S)(H_1,H_j,H_j) \\
& = &\!\! 8n\tr(\S H_1\S H_j \S H_j) + 4n^2\tr(\S H_1 \S H_j)\tr(\S H_j) \\
&& \!\! \ \ + 2n^2\tr(\S H_1)\tr(\S H_j \S H_j) + n^3 \tr(\S H_1)\tr(\S H_j)^2 \\
& = &\!\! 2n\left\{(\u_1^T\S\u_1)^2\u_j^T\S\u_j  +
  3\u_1^T\S\u_1(\u_1^T\S\u_j)^2\right\}  + 4n^2\u_1^T\S\u_1(\u_1^T\S\u_j)^2 \\
&& \!\! \ \ + n^2\left\{(\u_1^T\S\u_1)^2\u_j^T\S\u_j +
  \u_1^T\S\u_1(\u_1^T\S\u_j)^2\right\} + n^3\u_1^T\S\u_1(\u_1^T\S\u_j)^2 \\
& = &\!\! n(n+2)(\u_1^T\S\u_1)^2\u_j^T\S\u_j  + n(n^2 + 5n + 6)\u_1^T\S\u_1(\u_1^T\S\u_j)^2.
\end{eqnarray*}
It follows that
\begin{eqnarray*}
E(\bb^TW\bb\bb^TW^2\bb) \!\! & = & \!\! \tau_0^4\sum_{j = 1}^d
\tr(WH_1)\tr(WH_j)^2  \\ & = & \!\! n(n+2)\sum_{j = 1}^d
\tau_0^4 (\u_1^T\S\u_1)^2\u_j^T\S\u_j + n(n^2 + 5n + 6)\sum_{j= 1}^d \tau_0^4
\u_1^T\S\u_1(\u_1^T\S\u_j)^2 \\
& = & \!\! dn(n+2)m_1\tau_1^4 + n(n^2 + 5n + 6)\tau_1^2\tau_2^2.
\end{eqnarray*}
To prove (\ref{prop1l}), consider the decomposition
\[
\bb^TW^3\bb = \tau_0^2\sum_{i,j = 1}^d \tr(WH_i)\tr(WH_j)\tr(WH_{ij}).  
\]
Equation(\ref{letac}) implies that
\[
E\left\{\tr(WH_i)\tr(WH_j)\tr(WH_{ij})\right\} = \sum_{\pi \in S_3}
2^{3 - m(\pi)}n^{m(\pi)} r_{\pi}(\S)(H_i,H_j,H_{ij}).  
\]
Since 
\begin{eqnarray*}
\sum_{i,j = 1}^d r_{(1 \ 2 \ 3)}(\S)(H_i,H_j,H_{ij}) & = & \sum_{i,j =
  1}^d r_{(1 \ 3 \ 2)}(\S)(H_i,H_j,H_{ij}) \\
\sum_{i,j = 1}^d r_{(1)(2 \ 3)}(\S)(H_i,H_j,H_{ij}) & = & \sum_{i,j =
  1}^d r_{(1 \ 3)(2)}(\S)(H_i,H_j,H_{ij}) ,
\end{eqnarray*}
it follows that
\begin{eqnarray}
E\bb^TW^3\bb \!\!\! &  = & \!\!\! 8n\tau_0^2 \sum_{i,j = 1}^d r_{(1 \ 2 \
  3)}(\S)(H_i,H_j,H_{ij}) + 4n^2\tau_0^2\sum_{i,j = 1}^d r_{(1)(2 \
  3)}(\S)(H_i,H_j,H_{ij})  \nonumber \\
&& \!\!\! \ + 2n^2\tau_0^2\sum_{i,j = 1}^d r_{(1 \
  2)(3)}(\S)(H_i,H_j,H_{ij}) + 
n^3\tau_0^2\sum_{i,j = 1}^d
r_{(1)(2)(3)}(\S)(H_i,H_j,H_{ij}). \label{prop1pfc}
\end{eqnarray}
By Lemma 1, 
\begin{eqnarray*}
\tau_0^2\sum_{i,j = 1}^d r_{(1 \ 2 \
  3)}(\S)(H_i,H_j,H_{ij})  & = & \tau_0^2 \sum_{i,j = 1}^d \tr(\S H_i\S H_j \S
H_{ij}) \\ & = & \frac{\tau_0^2}{8}\sum_{i,j = 1}^d 
\left\{\u_1^T\S\u_1\u_i^T\S\u_i\u_j^T\S\u_j  +
  \u_1^T\S\u_1(\u_i^T\S\u_j)^2 \right.\\
&& \ \ + (\u_1^T\S\u_i)^2\u_j^T\S\u_j + (\u_1^T\S\u_j)^2\u_i^T\S\u_i \\ && \ \ \left. + 4\u_1^T\S\u_i
  \u_1^T\S\u_j\u_i^T\S\u_j \right\} \\
& = & \frac{1}{8}\left(d^2m_1^2\tau_1^2 + dm_2\tau_1^2  + 2dm_1\tau_2^2 +
  4\tau_3^2\right) \\
\tau_0^2 \sum_{i,j = 1}^d r_{(1)(2 \
  3)}(\S)(H_i,H_j,H_{ij}) & = & \tau_0^2 \sum_{i,j = 1}^d \tr(\S H_i)\tr(\S
H_j \S H_{ij}) \\
& = & \frac{\tau_0^2}{2}\sum_{i,j = 1}^d
\u_1^T\S\u_i\left(\u_1^T\S\u_j\u_i^T\S\u_j + \u_1^T\S\u_i\u_j^T\S\u_j\right) \\
& =& \frac{1}{2}\left(\tau_3^2 + dm_1\tau_2^2\right) \\
\tau_0^2 \sum_{i,j = 1}^d r_{(1\ 2) (
  3)}(\S)(H_i,H_j,H_{ij}) & = &\tau_0^2 \sum_{i,j = 1}^d \tr(\S H_i \S
H_j)\tr(\S H_{ij}) \\
& = & \frac{\tau_0^2}{2}\sum_{i,j = 1}^d \left(\u_1^T\S\u_i \u_1^T\S \u_j
  + \u_1^T\S\u_1\u_i^T\S\u_j\right) \u_i^T\S\u_j \\
& =& \frac{1}{2}\left(\tau_3^2 + dm_2\tau_1^2\right) \\
\tau_0^2 \sum_{i,j = 1}^d r_{(1)(2)(3)}(\S)(H_i,H_j,H_{ij}) & =&
\tau_0^2\sum_{i,j = 1}^d \tr(\S H_i)\tr(\S H_j) \tr(\S H_{ij}) \\
& = & \tau_0^2 \sum_{i,j = 1}^d \u_1^T\S\u_i\u_1^T\S\u_j\u_i^T\S\u_j \\
& = & \tau_3^2.
\end{eqnarray*}
Using these results with (\ref{prop1pfc}) we obtain
\begin{eqnarray*}
E\bb^TW^3\bb & = & n\left(d^2m_1^2\tau_1^2 + dm_2\tau_1^2 + 2dm_1\tau_2^2 +
  4\tau_3^2\right) + 2n^2\left(\tau_3^2 + dm_1\tau_2^2\right)\\
&& \ \  +
n^2\left(\tau_3^2 + dm_2\tau_1^2\right) + n^3\tau_3^2 \\
& = & d^2nm_1^2\tau_1^2 + 2dn(n + 1)m_1\tau_2^2 + dn(n+1) m_2\tau_1^2 + (n^3 +
3n^2 + 4n)\tau_3^2.
\end{eqnarray*}
Finally, we prove (\ref{prop1m}).  Similar to the proof of
(\ref{prop1k})-(\ref{prop1l}), we have the decomposition
\[
(\bb^TW^2\bb)^2 = \tau_0^4 \sum_{i,j = 1}^d \tr(W H_i)^2\tr(WH_j)^2.
\]
By (\ref{letac}),
\[
E\left\{\tr(W H_i)^2\tr(WH_j)^2\right\} = \sum_{\pi \in S_4} 2^{4 - m(\pi)}n^{m(\pi)}r_{\pi}(\S)(H_i,H_i,H_j,H_j).
\]
It follows that
\[
E(\bb^TW^2\bb)^2 = \sum_{\pi \in S_4}2^{4 - m(\pi)}n^{m(\pi)}\tilde{r}_{\pi},
\]
where
\[
\tilde{r}_{\pi} = \sum_{i,j = 1}^d \tau_0^4r_{\pi}(\S)(H_i,H_i,H_j,H_j).
\]
One can easily see that
\begin{eqnarray*}
\tilde{r}_{(1 \ 2 \ 3 \ 4)} & = & \tilde{r}_{(1 \ 2 \ 4 \ 3)} \ \  = \
\ \tilde{r}_{(1  \ 3\ 4 \ 2)} \ \ = \ \ \tilde{r}_{(1 \ 4 \ 3 \ 2)} \\
\tilde{r}_{(1 \ 3 \ 2 \ 4)} & =&  \tilde{r}_{(1 \ 4 \ 2 \ 3)} \\
\tilde{r}_{(1)(2 \ 3 \ 4)}& = & \tilde{r}_{(1)(2 \ 4 \ 3)} \ \ = \ \
\tilde{r}_{(1 \ 3 \ 4)(2)}\ \  = \ \ \tilde{r}_{(1 \
  4 \ 3)(2)} \ \ = \ \ \tilde{r}_{(1 \ 2 \ 3)(4)} \\
& = & \tilde{r}_{(1 \ 3 \ 2)(4)} 
\ \ = \ \ \tilde{r}_{(1 \ 2 \ 4)(3)} \ \ = \ \ \tilde{r}_{(1 \ 4 \
  2)(3)}\\
\tilde{r}_{(1 \ 3)(2 \ 4)} & = & \tilde{r}_{(1 \ 4)(2 \ 3)} \\
\tilde{r}_{(1\ 2)(3)(4)} & = & \tilde{r}_{(1)(2)(3 \ 4)} \\
\tilde{r}_{(1 \ 3)(2)(4)} & = & \tilde{r}_{(1 \ 4)(2)(3)} \ \ = \ \
\tilde{r}_{(1)(3)(2 \ 4)} \ \ = \ \ \tilde{r}_{(1)(4)(2 \ 3)}. 
\end{eqnarray*}
Thus,
\begin{eqnarray}
E(\bb^TW^2\bb)^2 & = & 32n\tilde{r}_{(1 \ 2 \ 3 \ 4)} +
16n\tilde{r}_{(1 \ 3 \ 2 \ 4)} + 32n^2\tilde{r}_{(1)(2 \ 3 \ 4)}
+
8n^2\tilde{r}_{(1 \ 3)(2 \ 4)}\nonumber \\
&& \ \ + 4n^2\tilde{r}_{(1 \ 2)(3 \ 4)} +
4n^3\tilde{r}_{(1 \ 2)(3)(4)} + 8n^3\tilde{r}_{(1 \ 3)(2)(4)} + n^4\tilde{r}_{(1)(2)(3)(4)}. \label{prop1pfd}
\end{eqnarray}
It only remains to evaluate the $\tilde{r}_{\pi}$.  It follows from
Lemma 1 that
\begin{eqnarray*}
\tilde{r}_{(1 \ 2 \ 3 \ 4)} & = & \sum_{i,j = 1}^d \tau_0^4 \tr(\S H_i\S
H_i \S H_j \S H_j)\\
& = &   \sum_{i,j = 1}^d\frac{\tau_0^4}{16}\left\{2(\u_1^T\S\u_i)^2(\u_1^T\S\u_j)^2 +
  3\u_1^T\S\u_1(\u_1^T\S\u_i)^2\u_j^T\S\u_j  \right. \\
&& \ \ + 6\u_1^T\S\u_1\u_1^T\S\u_i\u_1^T\S\u_j\u_i^T\S\u_j +
3\u_1^T\S\u_1(\u_1^T\S\u_j)^2\u_i^T\S\u_i \\
&& \ \ \left. + (\u_1^T\S\u_1)^2\u_i^T\S\u_i\u_j\S\u_j  +
  (\u_1^T\S\u_1)^2(\u_i^T\S\u_j)^2\right\} \\
& = & \frac{1}{16}\left(2\tau_2^4 + 6dm_1\tau_1^2\tau_2^2 +
6\tau_1^2\tau_3^2  + d^2m_1^2\tau_1^4 + dm_2\tau_1^4\right)\\
\tilde{r}_{(1 \ 3 \ 2 \ 4)} & = & \sum_{i,j = 1}^d \tau_0^4 \tr(\S H_i\S
H_j \S H_i \S H_j)   \\
& = & \sum_{i,j =
  1}^d\frac{\tau_0^4}{8}\left\{(\u_1^T\S\u_i)^2(\u_1^T\S\u_j)^2
\right. \\
&& \ \ \left. +
  6\u_1^T\S\u_1\u_1^T\S\u_i\u_1^T\S\u_j\u_i^T\S\u_j +
  (\u_1^T\S\u_1)^2(\u_i^T\S\u_j)^2\right\} \\
& = & \frac{1}{8}(\tau_2^4 + 6\tau_1^2\tau_3^2 + dm_2\tau_1^4) \\
\tilde{r}_{(1)(2 \ 3 \ 4)}& = &  \sum_{i,j = 1}^d \tau_0^4\tr(\S
H_i)\tr(\S H_i \S H_j \S H_j) \\
& = & \sum_{i,j = 1}^d \frac{\tau_0^4}{4}
\left\{(\u_1^T\S\u_i)^2(\u_1^T\S\u_j)^2 +
  \u_1^T\S\u_1(\u_1^T\S\u_i)^2\u_j^T\S\u_j \right. \\
&& \ \ \left. +
  2\u_1^T\S\u_1\u_1^T\S\u_i\u_1^T\S\u_j\u_i^T\S\u_j\right\} \\
& = & \frac{1}{4}(\tau_2^4 + dm_1\tau_1^2\tau_2^2 + 2\tau_1^2\tau_3^2) \\
\tilde{r}_{(1 \ 3)(2 \ 4)} & = &  \sum_{i,j = 1}^d \tau_0^4\tr(\S
H_i \S H_j)^2\\
& = & \sum_{i,j=1}^d \frac{\tau_0^4}{4}\left\{\u_1^T\S\u_i\u_1^T\S\u_j +
  \u_1^T\S\u_1\u_i^T\S\u_j\right\}^2 \\
& = & \frac{1}{4}(\tau_2^4 + dm_2\tau_1^4 + 2\tau_1^2\tau_3^2) \\
\tilde{r}_{(1 \ 2)(3 \ 4)} & = &  \sum_{i,j = 1}^d \tau_0^4\tr(\S
H_i \S H_i)\tr(\S H_j \S H_j) \\
& = & \sum_{i,j = 1}^d \frac{\tau_0^4}{4}\left\{(\u_1^T\S\u_i)^2 +
  \u_1^T\S\u_1\u_i^T\S\u_i\right\} \left\{(\u_1^T\S\u_j)^2 +
  \u_1^T\S\u_1\u_j^T\S\u_j\right\} \\
& = & \frac{1}{4}(\tau_2^4 + 2dm_1\tau_1^2\tau_2^2 + d^2m_1^2\tau_1^4)\\
\tilde{r}_{(1 \ 2)(3)(4)} &= & \sum_{i,j = 1}^d \tau_0^4\tr(\S
H_i \S H_i)\tr(\S H_j)^2\\
& = & \sum_{i,j = 1}^d \frac{\tau_0^4 }{2}\left\{(\u_1^T\S\u_i)^2 +
  \u_1^T\S\u_1\u_i^T\S\u_i\right\}(\u_1^T\S\u_j)^2 \\
& = & \frac{1}{2}(\tau_2^4 + dm_1\tau_1^2\tau_2^2)\\
\tilde{r}_{(1 \ 3)(2)(4)} & = &  \sum_{i,j = 1}^d \tau_0^4\tr(\S
H_i \S H_j)\tr(\S H_i )\tr(\S H_j) \\
& = & \sum_{i,j = 1}^d
\frac{\tau_0^4}{2}\left\{\u_1^T\S\u_i\u_1^T\S\u_j +
  \u_1^T\S\u_1\u_i^T\S\u_j\right\}\u_1^T\S\u_i\u_1^T\S\u_j \\
& = & \frac{1}{2}(\tau_2^4 + \tau_1^2\tau_3^2) \\
\tilde{r}_{(1)(2)(3)(4)} & = & \sum_{i,j = 1}^d \tau_0^4\tr(\S
H_i)^2 \tr(\S H_j)^2 \\
& = & \sum_{i,j = 1}^d \tau_0^4 (\u_1^T\S\u_i)^2(\u_1^T\S\u_j)^2 \\
& = & \tau_2^4.
\end{eqnarray*}
Combining this with (\ref{prop1pfd}), we conclude that
\begin{eqnarray*}
E(\bb^TW^2\bb)^2 \!\!\! & = & \!\!\! 32n\tilde{r}_{(1 \ 2 \ 3 \ 4)} +
16n\tilde{r}_{(1 \ 3 \ 2 \ 4)} + 32n^2\tilde{r}_{(1)(2 \ 3 \ 4)}+
8n^2\tilde{r}_{(1 \ 3)(2 \ 4)} + 4n^2\tilde{r}_{(1 \ 2)(3 \ 4)} \\
&& \!\!\! \ \ +
4n^3\tilde{r}_{(1 \ 2)(3)(4)} + 8n^3\tilde{r}_{(1 \ 3)(2)(4)} + n^4\tilde{r}_{(1)(2)(3)(4)}
\\
& = & \!\!\! 2n(2\tau_2^4 + 6dm_1\tau_1^2\tau_2^2 + 6\tau_1^2\tau_3^2 + d^2m_1^2\tau_1^4
+ dm_2\tau_1^4) + 2n(\tau_2^4 + 6\tau_1^2\tau_3^2 + dm_2\tau_1^4) \\
&& \!\!\!\ \ + 8n^2(\tau_2^4 +
dm_1\tau_1^2\tau_2^2 + 2\tau_1^2\tau_3^2) + 2n^2(\tau_2^4 + dm_2\tau_1^4 + 2\tau_1^2\tau_3^2) \\
&& \!\!\!\ \ + n^2(\tau_2^4 +
2dm_1\tau_1^2\tau_2^2 + d^2m_1^2\tau_1^4) + 2n^3(\tau_2^4 + dm_1\tau_1^2\tau_2^2) + 4n^3(\tau_2^4 + \tau_1^2\tau_3^2)
+ n^4\tau_2^4 \\
& = & \!\!\! (n^4 + 6n^3 + 11n^2 + 6n)\tau_2^4 + d(2n^3 + 10n^2 +
12n)m_1\tau_1^2\tau_2^2 \\
&& \!\!\!\ \ + (4n^3 + 20n^2 + 24n)\tau_1^2\tau_3^2 + d^2(n^2 + 2n)m_1^2\tau_1^4 +
d(2n^2 + 4n)m_2\tau_1^4 \\
& = & \!\!\! d^2n(n+2)m_1^2\tau_1^4 + 2dn(n+2)(n+3)m_1\tau_1^2\tau_2^2 + 2dn(n+2)m_2\tau_1^4 \\
&& \!\!\! \ \ + 4n(n+2)(n+3)\tau_1^2\tau_3^2 + n(n+1)(n+2)(n+3)\tau_2^4.
\end{eqnarray*}
\end{proof}

\setcounter{lemma}{0}
    \renewcommand{\thelemma}{S\arabic{lemma}}

\begin{lemma}
Let $\u_1,...,\u_d \in \R^d$ and
define $H_j = (\u_1\u_j^T + \u_j\u_1^T)/2$. For integers $1 \leq i, j
\leq d$, we have
\begin{eqnarray}
\tr(\S H_{ij}) & = & \u_i^T\S\u_j \label{lemma1a} \\
\tr(\S H_i \S H_j) & =  & \frac{1}{2}\left(\u_1^T\S\u_i\u_1^T\S\u_j +
  \u_1^T\S\u_1\u_i^T\S\u_j\right) \label{lemma1b} \\
\tr(\S H_i\S H_{ij}) & = & \frac{1}{2}\left(\u_1^T\S\u_i\u_i\S\u_j +
  \u_1^T\S\u_j\u_i^T\S\u_i\right) \label{lemma1c} \\
\tr(\S^2H_i\S H_j) & = & \frac{1}{4}\left(\u_1^T\S^2\u_1\u_i^T\S\u_j+
  \u_1^T\S\u_1\u_i^T\S^2\u_j \right. \nonumber \\ && \ \
\left.  + \u_1^T\S^2\u_i\u_1^T\S\u_j
  +\u_1^T\S\u_i\u_1^T\S^2\u_j\right\} \label{lemma1d} \\
\tr(\S H_i \S H_j \S H_j) & = & \frac{1}{4}\left\{\u_1^T\S\u_i(\u_1^T\S\u_j)^2 +
  \u_1^T\S\u_1\u_1^T\S\u_i\u_j^T\S\u_j\right. \nonumber \\
&& \ \ \left. +
  2\u_1^T\S\u_1\u_1^T\S\u_j\u_i^T\S\u_j\right\} \label{lemma1e} \\
\tr(\S H_i \S H_j \S H_{ij}) & = &
\frac{1}{8}\left\{\u_1^T\S\u_1\u_i^T\S\u_i\u_j^T\S\u_j +
  \u_1^T\S\u_1(\u_i^T\S\u_j)^2 \right. \nonumber \\ 
&& \ \ + (\u_1^T\S\u_i)^2\u_j^T\S\u_j + (\u_1^T\S\u_j)^2\u_i^T\S\u_i
\nonumber \\
&& \ \ \left. + 4\u_1^T\S\u_i \u_1^T\S\u_j\u_i^T\S\u_j \right\}\label{lemma1f}\\
\tr(\S H_i \S H_i \S H_j \S H_j) & = &
\frac{1}{16}\left\{2(\u_1^T\S\u_i)^2(\u_1^T\S\u_j)^2
  3\u_1^T\S\u_1(\u_1^T\S\u_i)^2\u_j^T\S\u_j  \right. \nonumber \\
&& \ \ + 6\u_1^T\S\u_1\u_1^T\S\u_i\u_1^T\S\u_j\u_i^T\S\u_j +
3\u_1^T\S\u_1(\u_1^T\S\u_j)^2\u_i^T\S\u_i \nonumber \\
&& \ \ \left. + (\u_1^T\S\u_1)^2\u_i^T\S\u_i\u_j\S\u_j +
  (\u_1^T\S\u_1)^2(\u_i^T\S\u_j)^2\right\} \label{lemma1g} \\ \nonumber
\tr(\S H_i \S H_j \S H_i \S H_j) & = &
\frac{1}{8}\left\{(\u_1^T\S\u_i)^2(\u_1^T\S\u_j)^2 +
  6\u_1^T\S\u_1\u_1^T\S\u_i\u_1^T\S\u_j\u_i^T\S\u_j \right. \\ && \ \
\left. + (\u_1^T\S\u_1)^2(\u_i^T\S\u_j)^2\right\} \label{lemma1h}
\end{eqnarray}
\end{lemma}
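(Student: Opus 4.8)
The plan is to prove every identity by direct expansion, using the one structural observation that each matrix appearing is a sum of two rank-one matrices: $H_{ab} = (\u_a\u_b^T + \u_b\u_a^T)/2$, and $H_j = H_{1j}$. Consequently, any word $\S^{e_1}H_{a_1b_1}\S^{e_2}H_{a_2b_2}\cdots\S^{e_k}H_{a_kb_k}$, with the small exponents $e_l \in \{1,2\}$ that actually occur here, expands into $2^k$ rank-one words $\S^{e_1}\u_{p_1}\u_{q_1}^T\S^{e_2}\u_{p_2}\u_{q_2}^T\cdots\S^{e_k}\u_{p_k}\u_{q_k}^T$; since each $\u_q^T\S^e\u_p$ is a scalar, cyclic invariance of the trace collapses such a word to the product
\[
\tr(\S^{e_1}\u_{p_1}\u_{q_1}^T\S^{e_2}\u_{p_2}\u_{q_2}^T\cdots\S^{e_k}\u_{p_k}\u_{q_k}^T) = \prod_{l = 1}^{k} \u_{q_l}^T\S^{e_{l+1}}\u_{p_{l+1}},
\]
indices read cyclically ($p_{k+1} = p_1$, $e_{k+1} = e_1$). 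Thus the left-hand side of each identity becomes a finite sum of products of scalars of the form $\u_a^T\S\u_b$ and $\u_a^T\S^2\u_b$, and collecting equal terms---using only the symmetry $\u_a^T\S^e\u_b = \u_b^T\S^e\u_a$---produces the stated right-hand side.

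Carrying this out: for $\tr(\S H_{ij})$ one gets $(\u_j^T\S\u_i + \u_i^T\S\u_j)/2 = \u_i^T\S\u_j$, which is (\ref{lemma1a}). For the length-two words $\tr(\S H_i\S H_j)$, $\tr(\S H_i\S H_{ij})$, and $\tr(\S^2 H_i\S H_j)$, expansion gives four traces apiece: in the first two cases these pair up into two copies of each of two distinct scalar products (dividing by $4$ gives (\ref{lemma1b}) and (\ref{lemma1c})), while in the third the four traces are generically distinct, differing by whether $\S$ or $\S^2$ sits in a given slot (dividing by $4$ gives (\ref{lemma1d})). The length-three words $\tr(\S H_i\S H_j\S H_j)$ and $\tr(\S H_i\S H_j\S H_{ij})$ each produce $2^3 = 8$ terms, and the length-four words $\tr(\S H_i\S H_i\S H_j\S H_j)$ and $\tr(\S H_i\S H_j\S H_i\S H_j)$ each produce $2^4 = 16$ terms; in every case one lists the words, evaluates each by the displayed product formula, and collects like terms to reach (\ref{lemma1e})--(\ref{lemma1h}). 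For the two degree-four identities it is cleanest to index the sixteen words by which summand of each rank-one pair is selected and to observe that the repeated factor $H_i$ (respectively $H_j$) makes several selections yield identical words; collecting those coincidences produces the integer multiplicities displayed in (\ref{lemma1g})--(\ref{lemma1h}).

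No step is conceptually difficult, and the only place an error is likely is the bookkeeping for the two degree-four identities, where one must count coincident words among the sixteen correctly. As a final check one can set $\S = I$ with $\u_1,\dots,\u_d$ orthonormal, so that each scalar $\u_a^T\S^e\u_b$ reduces to $\delta_{ab}$; every identity then becomes an elementary count of closed walks on the two- or three-element index set $\{1,i\}$ or $\{1,i,j\}$, which verifies the numerical coefficients. This completes the proof.
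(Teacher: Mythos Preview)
Your approach is correct and is essentially identical to the paper's own proof: the paper also expands each $H_{ab}$ into its two rank-one summands, writes out the resulting $2^k$ trace terms explicitly, evaluates each using cyclicity, and collects like terms via the symmetry of $\S$. The only difference is presentational---the paper lists every term in full rather than stating the general product formula you give---and your orthonormal sanity check is a nice addition not present in the original.
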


\begin{proof}
The identity (\ref{lemma1a}) is trivial.  To prove (\ref{lemma1b}), we
have
\begin{eqnarray*}
\tr(\S H_i \S H_j) & = & \frac{1}{4}\tr\left\{\S(\u_1\u_i^T +
  \u_i\u_1^T)\S(\u_1\u_j^T + \u_j\u_1^T)\right\} \\
& = & \frac{1}{4}\tr\left(\S\u_1\u_i^T\S\u_1\u_j^T +
  \S\u_1\u_i^T\S\u_j\u_1^T + \S\u_i\u_1^T\S\u_1\u_j^T +
  \S\u_i\u_1^T\S\u_j\u_1^T\right) \\
& = & \frac{1}{2}\left(\u_1^T\S\u_i\u_1^T\S\u_j +
  \u_1^T\S\u_1\u_i^T\S\u_j\right).
\end{eqnarray*}
Equation (\ref{lemma1c}) follows from
\begin{eqnarray*}
\tr(\S H_i \S H_{ij}) & = & \frac{1}{4}\tr\left\{\S(\u_1\u_i^T +
  \u_i\u_1^T)\S(\u_i\u_j^T + \u_j\u_i^T)\right\} \\
& = & \frac{1}{4}\tr\left(\S\u_1\u_i^T\S\u_i\u_j^T +
  \S\u_1\u_i^T\S\u_j\u_i^T  + \S\u_i\u_1^T\S\u_i\u_j^T +
  \S\u_i\u_1^T\S\u_j\u_i^T\right) \\
& = & \frac{1}{2}\left(\u_1^T\S\u_i\u_i^T\S\u_j + \u_1^T\S\u_j\u_i^T\S\u_i\right). 
\end{eqnarray*}
For (\ref{lemma1d}), we have
\begin{eqnarray*}
\tr(\S^2 H_i \S H_j) & = & \frac{1}{4}\tr\left\{\S^2(\u_1\u_i^T +
  \u_i\u_1^T)\S(\u_1\u_j^T + \u_j\u_1^T)\right\} \\
& = & \frac{1}{4}\tr\left(\S^2\u_1\u_i^T\S\u_1\u_j^T +
  \S^2\u_1\u_i^T\S\u_j\u_1^T  + \S^2\u_i\u_1^T\S\u_1\u_j^T +
  \S^2\u_i\u_1^T\S\u_j\u_1^T\right) \\
& = & \frac{1}{4}\left(\u_1^T\S\u_i\u_1^T\S^2\u_j +
  \u_1^T\S^2\u_1\u_i^T\S\u_j  +
  \u_1^T\S\u_1\u_i^T\S^2\u_j + \u_1^T\S^2\u_i \u_1\S\u_j\right).  
\end{eqnarray*}
To prove (\ref{lemma1e})-(\ref{lemma1f}), observe that 
\begin{eqnarray*}
\tr(\S H_i \S H_j \S H_j) & = & \frac{1}{8}\tr\left\{\S(\u_1\u_i^T +
  \u_i\u_1^T)\S(\u_1\u_j^T + \u_j\u_1^T) \S(\u_1\u_j^T + \u_j\u_1^T)\right\} \\
& = & \frac{1}{8}\tr\left(\S\u_1\u_i^T\S\u_1\u_j^T\S\u_1\u_j^T +
  \S\u_1\u_i^T\S\u_1\u_j^T\S\u_j\u_1^T \right. \\
&& \ \ + \S\u_1\u_i^T\S\u_j\u_1^T\S\u_1\u_j^T +
  \S\u_1\u_i^T\S\u_j\u_1^T\S\u_j\u_1^T \\
&& \ \ + \S\u_i\u_1^T\S\u_1\u_j^T\S\u_1\u_j^T +
  \S\u_i\u_1^T\S\u_1\u_j^T\S\u_j\u_1^T  \\
&& \ \ \left. + \S\u_i\u_1^T\S\u_j\u_1^T\S\u_1\u_j^T +
  \S\u_i\u_1^T\S\u_j\u_1^T\S\u_j\u_1^T\right) \\
& = & \frac{1}{4}\left\{\u_1^T\S\u_i(\u_1^T\S\u_j)^2 +
  \u_1^T\S\u_1\u_1^T\S\u_i\u_j^T\S\u_j  + 2\u_1^T\S\u_1\u_1^T\S\u_j\u_i^T\S\u_j\right\}
\end{eqnarray*}
and
\begin{eqnarray*}
\tr(\S H_i \S H_j \S H_{ij}) & = & \frac{1}{8}\tr\left\{\S(\u_1\u_i^T +
  \u_i\u_1^T)\S(\u_1\u_j^T + \u_j\u_1^T) \S(\u_i\u_j^T + \u_j\u_i^T)\right\} \\
& = & \frac{1}{8}\tr\left(\S\u_1\u_i^T\S\u_1\u_j^T\S\u_i\u_j^T +
  \S\u_1\u_i^T\S\u_1\u_j^T\S\u_j\u_i^T \right. \\
&& \ \ + \S\u_1\u_i^T\S\u_j\u_1^T\S\u_i\u_j^T +
  \S\u_1\u_i^T\S\u_j\u_1^T\S\u_j\u_i^T \\
&& \ \ + \S\u_i\u_1^T\S\u_1\u_j^T\S\u_i\u_j^T +
  \S\u_i\u_1^T\S\u_1\u_j^T\S\u_j\u_i^T \\
&& \ \ \left.+ \S\u_i\u_1^T\S\u_j\u_1^T\S\u_i\u_j^T +
  \S\u_i\u_1^T\S\u_j\u_1^T\S\u_j\u_i^T \right) \\
& = & \frac{1}{8}\left\{\u_1^T\S\u_1\u_i^T\S\u_i\u_j^T\S\u_j +
  \u_1^T\S\u_1(\u_i^T\S\u_j)^2+ (\u_1^T\S\u_i)^2\u_j^T\S\u_j  \right.  \\ 
&& \ \ \left. + (\u_1^T\S\u_j)^2\u_i^T\S\u_i
+ 4\u_1^T\S\u_i \u_1^T\S\u_j\u_i^T\S\u_j \right\}.
\end{eqnarray*}
Finally, to prove (\ref{lemma1g})-(\ref{lemma1h}), we have
\begin{eqnarray*}
&& \!\!\!\!\!\!\!\!\!\!\!\!\!\!\!\! \! \! \! \! \! \! \! \! \! \! \tr(\S H_i \S H_i \S H_j \S H_j) \\ && \ \ =  \frac{1}{16}\tr\left\{\S(\u_1\u_i^T +
  \u_i\u_1^T) \S(\u_1\u_i^T +
  \u_i\u_1^T) \S(\u_1\u_j^T + \u_j\u_1^T) \S(\u_1\u_j^T +
  \u_j\u_1^T)\right\} \\
&& \ \ = \frac{1}{16}\tr\left(
  \S\u_1\u_i^T\S\u_1\u_i^T\S\u_1\u_j^T\S\u_1\u_j^T +
  \S\u_1\u_i^T\S\u_1\u_i^T\S\u_1\u_j^T\S\u_j\u_1^T \right. \\
&& \qquad \ \  + \S\u_1\u_i^T\S\u_1\u_i^T\S\u_j\u_1^T\S\u_1\u_j^T +
  \S\u_1\u_i^T\S\u_1\u_i^T\S\u_j\u_1^T\S\u_j\u_1^T \\ 
&& \qquad \ \ + \S\u_1\u_i^T\S\u_i\u_1^T\S\u_1\u_j^T\S\u_1\u_j^T +
  \S\u_1\u_i^T\S\u_i\u_1^T\S\u_1\u_j^T\S\u_j\u_1^T \\
&& \qquad \ \  + \S\u_1\u_i^T\S\u_i\u_1^T\S\u_j\u_1^T\S\u_1\u_j^T +
  \S\u_1\u_i^T\S\u_i\u_1^T\S\u_j\u_1^T\S\u_j\u_1^T \\
&& \qquad \ \ +  \S\u_i\u_1^T\S\u_1\u_i^T\S\u_1\u_j^T\S\u_1\u_j^T +
  \S\u_i\u_1^T\S\u_1\u_i^T\S\u_1\u_j^T\S\u_j\u_1^T \\
&& \qquad \ \  + \S\u_i\u_1^T\S\u_1\u_i^T\S\u_j\u_1^T\S\u_1\u_j^T +
  \S\u_i\u_1^T\S\u_1\u_i^T\S\u_j\u_1^T\S\u_j\u_1^T \\ 
&& \qquad \ \ + \S\u_i\u_1^T\S\u_i\u_1^T\S\u_1\u_j^T\S\u_1\u_j^T +
  \S\u_i\u_1^T\S\u_i\u_1^T\S\u_1\u_j^T\S\u_j\u_1^T \\
&& \qquad \ \ \left. + \S\u_i\u_1^T\S\u_i\u_1^T\S\u_j\u_1^T\S\u_1\u_j^T +
  \S\u_i\u_1^T\S\u_i\u_1^T\S\u_j\u_1^T\S\u_j\u_1^T\right) \\
&&  \ \ = \frac{1}{16}\left\{2(\u_1^T\S\u_i)^2(\u_1^T\S\u_j)^2 +
  3\u_1^T\S\u_1(\u_1^T\S\u_i)^2\u_j^T\S\u_j  \right. \\
&& \qquad \ \ + 6\u_1^T\S\u_1\u_1^T\S\u_i\u_1^T\S\u_j\u_i^T\S\u_j +
3\u_1^T\S\u_1(\u_1^T\S\u_j)^2\u_i^T\S\u_i \\
&& \qquad \ \ \left.+ (\u_1^T\S\u_1)^2\u_i^T\S\u_i\u_j\S\u_j + (\u_1^T\S\u_1)^2(\u_i^T\S\u_j)^2\right\}
\end{eqnarray*}
and
\begin{eqnarray*}
&& \!\!\!\!\!\!\!\!\!\!\! \! \! \! \! \! \! \! \! \! \! \tr(\S H_i \S H_j \S H_i \S H_j) \\ && \ \ =  \frac{1}{16}\tr\left\{\S(\u_1\u_i^T +
  \u_i\u_1^T) \S(\u_1\u_j^T +
  \u_j\u_1^T) \S(\u_1\u_i^T + \u_i\u_1^T) \S(\u_1\u_j^T +
  \u_j\u_1^T)\right\} \\
&& \ \ =
\frac{1}{16}\tr\left(\S\u_1\u_i^T\S\u_1\u_j^T\S\u_1\u_i^T\S\u_1\u_j^T
  + \S\u_1\u_i^T\S\u_1\u_j^T\S\u_1\u_i^T\S\u_j\u_1^T \right. \\
&& \qquad \ \ + \S\u_1\u_i^T\S\u_1\u_j^T\S\u_i\u_1^T\S\u_1\u_j^T
  + \S\u_1\u_i^T\S\u_1\u_j^T\S\u_i\u_1^T\S\u_j\u_1^T \\
&& \qquad \ \ + \S\u_1\u_i^T\S\u_j\u_1^T\S\u_1\u_i^T\S\u_1\u_j^T
  + \S\u_1\u_i^T\S\u_j\u_1^T\S\u_1\u_i^T\S\u_j\u_1^T \\
&& \qquad \ \ + \S\u_1\u_i^T\S\u_j\u_1^T\S\u_i\u_1^T\S\u_1\u_j^T
  + \S\u_1\u_i^T\S\u_j\u_1^T\S\u_i\u_1^T\S\u_j\u_1^T \\
&& \qquad \ \ + \S\u_i\u_1^T\S\u_1\u_j^T\S\u_1\u_i^T\S\u_1\u_j^T
  + \S\u_i\u_1^T\S\u_1\u_j^T\S\u_1\u_i^T\S\u_j\u_1^T \\
&& \qquad \ \ + \S\u_i\u_1^T\S\u_1\u_j^T\S\u_i\u_1^T\S\u_1\u_j^T
  + \S\u_i\u_1^T\S\u_1\u_j^T\S\u_i\u_1^T\S\u_j\u_1^T \\
&& \qquad \ \ + \S\u_i\u_1^T\S\u_j\u_1^T\S\u_1\u_i^T\S\u_1\u_j^T
  + \S\u_i\u_1^T\S\u_j\u_1^T\S\u_1\u_i^T\S\u_j\u_1^T \\
&& \qquad \ \ \left.+ \S\u_i\u_1^T\S\u_j\u_1^T\S\u_i\u_1^T\S\u_1\u_j^T
  + \S\u_i\u_1^T\S\u_j\u_1^T\S\u_i\u_1^T\S\u_j\u_1^T \right) \\
&& \ \ = \frac{1}{8}\left\{(\u_1^T\S\u_i)^2(\u_1^T\S\u_j)^2 +
  6\u_1^T\S\u_1\u_1^T\S\u_i\u_1^T\S\u_j\u_i^T\S\u_j + (\u_1^T\S\u_1)^2(\u_i^T\S\u_j)^2\right\}
\end{eqnarray*}
\end{proof}

}
\end{document}